\DeclareMathOperator{\up}{\uparrow}
\DeclareMathOperator{\down}{\downarrow}
\DeclareMathOperator{\CSP}{CSP}
\DeclareMathOperator{\inc}{inc}
\DeclareMathOperator{\facts}{facts}
\newcommand{\If}{\hspace{6pt}\text{if}\hspace{6pt}}
\newcommand{\A}{\mathcal{A}}
\newcommand{\C}{\mathcal{C}}
\newcommand{\D}{\mathcal{D}}
\newcommand{\F}{\mathcal{F}}
\newcommand{\K}{\mathcal{K}}
\newcommand{\T}{\mathcal{T}}
\newcommand{\B}{\mathbb{B}}
\newcommand{\N}{\mathbb{N}}
\title{Adaptive Query Algorithms for Relational Structures Based on Homomorphism Counts}
\titlerunning{Adaptive Query Algorithms Based on Homomorphism Counts} 
\author{Balder ten Cate}{Institute for Logic, Language and Computation, University of Amsterdam, The Netherlands}{b.d.tencate@uva.nl}{https://orcid.org/0000-0002-2538-5846}{}
\author{Phokion G.~{Kolaitis}}{University of California Santa Cruz \& IBM Research - Almaden, USA}{kolaitis@soe.ucsc.edu}{https://orcid.org/0000-0002-8407-8563}{}
\author{Arnar Á. Kristjánsson}{Master of Logic, Institute for Logic, Language and Computation, University of Amsterdam, The Netherlands}{arnar.kristjansson@student.uva.nl}{https://orcid.org/0009-0002-4483-4667}{}
\authorrunning{B. {ten Cate}, P.G. {Kolaitis}, and A.Á. Kristjánsson} 
\keywords{Query algorithms, homomorphisms, homomorphism counts, directed graphs, relational structures, Datalog, constraint satisfaction} 
\begin{document}

\maketitle

\begin{abstract}
A query algorithm based on homomorphism counts is a procedure to decide membership for a class of finite relational structures using only homomorphism count queries. 
A left query algorithm can ask the number of homomorphisms from any structure \emph{to} the input structure and a right query algorithm can ask the number of homomorphisms \emph{from} the input structure to any other structure.
We systematically compare the expressive power of different types of left or right query algorithms, including non-adaptive query algorithms, adaptive query algorithms that can ask a bounded number of queries, 
and adaptive query algorithms that can ask an unbounded number of queries.
We also consider query algorithms where the homomorphism counting is done over the Boolean semiring $\B$, meaning that only the existence of a homomorphism is recorded, not the precise number of them.

\end{abstract}

\section{Introduction}
A classic result by Lovász \cite{lovasz1967operations} 
states that a finite relational structure $A$ is determined up to isomorphism by the numbers $\hom(B, A)$ of homomorphisms from $B$ to $A$ for every finite relational structure $B$ with the same signature. This list of numbers is called the \emph{left homomorphism profile} of $A$ and is written as $\hom(\C, A)$, where $\C$ is the class of all finite relational structures with the same signature as $A$.
Similarly, one can define the \emph{right homomorphism profile} as the list of numbers of homomorphism from $A$ to each structure. The right profile also determines the structure up to isomorphism, as shown by Chaudhuri and Vardi \cite{chaudhuri1993optimization}. 
\par 
Recently, there has been considerable interest in exploring which structures can be distinguished by the homomorphism profiles
when the class $\C$ is restricted. As an example, in 2010, Dvo\v{r}ák \cite{dvovrak2010recognizing} proved that if $\C$ is the class of graphs of treewidth at most $k$, the left profile can distinguish precisely the same graphs as the
$k$-dimensional Weisfeiler-Leman algorithm, a well-known polynomial time graph isomorphism test. Cai, Fürer, and Immerman \cite{cai1992optimal} had already proved that the $k+1$ variable fragment of first-order logic extended with counting quantifiers can distinguish precisely the same graphs as $k$-dimensional Weisfeiler-Leman algorithm.
Also, in 2020, Grohe \cite{grohe2020counting} proved that first-order logic with counting of quantifier rank at most $k$ can distinguish exactly the same graphs as the left homomorphism profile restricted to graphs
of treedepth $k$. 
Man\v{c}inska and Roberson \cite{Mancinska2020:quantum} showed that the graphs that are quantum isomorphic are exactly the graphs that have the same left homomorphism profile restricted to planar graphs.
In 2024, Seppelt \cite{Seppelt2024:logical} provides further insight into the logical equivalence relations for graphs that can be characterized by restricted left homomorphism profiles, using techniques from structural graph theory.
These results highlight the importance of homomorphism profiles in finite model theory by describing a close connection between homomorphism profiles and logical languages. \looseness=-1

Taking a slightly different approach, in 2022, Chen et
al.~\cite{chen_et_al:LIPIcs.MFCS.2022.32} ask what properties of graphs can be decided based on 
\emph{finitely many} homomorphism count queries. They did this by introducing the notion of a homomorphism-count based \emph{query algorithm}. 
A \emph{non-adaptive left $k$-query algorithm} (also referred to as a left $k$-query algorithm) consists of a finite tuple $\mathcal{F}= (F_1, \ldots, F_k)$ of structures and a subset $X \subseteq \N^{k}$. The algorithm is said to decide the class of structures 
\begin{align*}
    \mathcal{A} \coloneqq \{ A : (\hom(F_i, A))_{i=1}^k \in X\}.
\end{align*}
Non-adaptive \emph{right} $k$-query algorithms
are defined similarly, but using
$\hom(A,F_i)$ instead of $\hom(F_i,A)$.
These notions allow for the study of what classes of structures can be defined using a finite restriction of the left (respectively, right) homomorphism profile. 
Chen et al.\ also define an \emph{adaptive} version of left/right $k$-query algorithms. These are, roughly speaking,  algorithms that can ask $k$ homomorphism-count queries where the choice of the $(i+1)$-th query may depend on the answers to the first $i$ queries.
Chen et al.~\cite{chen_et_al:LIPIcs.MFCS.2022.32}
studied the expressive power of non-adaptive and adaptive query algorithms for the specific case of \emph{simple graphs} (i.e., undirected graphs without self-loops).
Among other things, they showed that every first-order sentence $\phi$ that is a Boolean combination of universal first-order sentences can be decided by a non-adaptive $k$-left query algorithm for some $k$. On the
other hand, they showed that there are first-order sentences $\phi$ that cannot be decided by a non-adaptive left $k$-query algorithm for any $k$.
When it comes to adaptive query algorithms, they showed that every class of simple graphs can be decided by an adaptive left 3-query algorithm, 
but that there exists a class of simple graphs that is not decided by an adaptive right $k$-query algorithm for any $k$, when using only simple graphs for the queries. 

In a 2024 paper, ten Cate et al.~\cite{ten2024homomorphism} extend the above analysis by exploring query algorithms for arbitrary relational structures (not only simple graphs) and by considering also query algorithms over the Boolean semiring $\B$, which means that instead of being able to query for the number of homomorphisms, the algorithm can only query for the existence of a homomorphism. The query algorithms as defined by Chen et al.~\cite{chen_et_al:LIPIcs.MFCS.2022.32} can 
be viewed as query algorithms over the 
semiring $\N$ of natural numbers. It is 
shown in~\cite{ten2024homomorphism} that, 
for classes of structures that are closed under
homomorphic equivalence, non-adaptive left $k$-query algorithms (over $\N$) are no more
powerful than non-adaptive left $k$-query algorithms over $\B$. In other words, ``counting does not help'' for such classes. Moreover, for such classes, non-adaptive left query algorithms are equivalent in expressive power to first-order logic, meaning that such classes are first-order definable if and only if they are decided by a non-adaptive left query algorithm over $\N$ (or $\B$).

\begin{figure}
    \[\begin{array}{cccccccccccc}
    \mathfrak{A}_1 & \subsetneq & \mathfrak{A}_2 & \subsetneq & \cdots & \mathfrak{A}_k & \cdots &\subsetneq & \bigcup_k\mathfrak{A}_k & \subsetneq & \mathfrak{A}_{unb} = \text{All classes of structures}\\[2mm]
    \rotatebox{90}{$=$}  && \rotatebox{90}{$\subsetneq$} &&& \rotatebox{90}{$\subsetneq$} &&& 
    \rotatebox{90}{$\subsetneq$}\\[2mm]
    \mathfrak{N}_1 & \subsetneq & \mathfrak{N}_2 & \subsetneq & \cdots & \mathfrak{N}_k & \cdots &\subsetneq&
    \bigcup_k\mathfrak{N}_k 
    \end{array}\]
    \caption{Summary of our results regarding the relative expressive power of different types of left query algorithms over $\mathbb{N}$.
    Here, $\mathfrak{N}_k$, $\mathfrak{A}k$, and $\mathfrak{A}{unb}$ denote the collections of classes of structures that admit a non-adaptive left $k$-query algorithm, an adaptive left $k$-query algorithm, and an adaptive left unbounded query algorithm, respectively.
    No inclusions hold other than those depicted. In particular, $\mathfrak{A}_2\not\subseteq \bigcup_k\mathfrak{N}_k$ and $\mathfrak{N}_{k+1} \nsubseteq \mathfrak A_k$.}
    \label{fig:summary-left-N} 
\end{figure}

\subparagraph*{Our contributions.}
We study the expressive power of different versions of these query algorithms for classes of relational structures. In doing so, we characterize their expressive power, relate it to that of logics and database query languages, and compare it across different types of algorithms.
We focus mainly on adaptive query algorithms, including the previously unexplored notion of adaptive \emph{unbounded} query algorithms---i.e., adaptive query algorithms for which there is no uniform bound on the number of queries they can ask on a given input, but which nevertheless terminate on every input. 

\par 
Our main contributions are along
two lines: 

\begin{enumerate}
    \item 
We exploit the symmetry of directed cycles and cycle-like structures to give a simple formula (Lemma \ref{hom-lemma}) for the number of homomorphisms from a structure to a disjoint union of cycle-like structures of the same size. 
This formula serves as a powerful tool for deriving homomorphism indistinguishability results for finite profiles, and may be useful more broadly in related contexts.
We use it to systematically investigate and compare the expressive power of adaptive and non-adaptive query algorithms over $\N$ for relational structures.
For instance, we show, in Theorem \ref{better-corollary}, that for every signature containing a non-unary relation, there exists a class of structures of that signature that is not decided by an adaptive left $k$-query algorithm over $\N$ for any $k$. 
A concrete example of such a class is the class of directed graphs whose number of connected components is an even power of 2. This result is in sharp contrast to the result of Chen et al., which states that every class of simple graphs can be decided by an adaptive left 3-query algorithm over $\N$.
We also show in Theorem \ref{adaptive-not-better} that, for each $k>1$, there exists a class that is decided by a non-adaptive left $k$-query algorithm over $\N$ but not by any adaptive left $(k-1)$-query algorithm over $\N$.
However, in Theorem \ref{A2 not non-adaptive} we also note that there exists a class that is decided by an adaptive left 2-query algorithm but not by any non-adaptive left query algorithm. We thus establish all inclusion and non-inclusion relations between the expressive powers of adaptive and non-adaptive left $k$-query algorithms over $\N$ (note: it follows from the proof of Lov\'{a}sz's theorem that every class is decided by an adaptive left unbounded query algorithm). The results are summarized in Figure~\ref{fig:summary-left-N}.

\smallskip
\par 
Interestingly, the situation is quite different
for \emph{right} query algorithms over $\N$.
Indeed, it follows from results in \cite{wu2023study} that
every class of structures can be decided with only two adaptive right queries.
This is again in contrast to a prior result of Chen et al.~\cite{chen_et_al:LIPIcs.MFCS.2022.32}, namely that there is no $k$ such that every class of simple graphs can be decided by an adaptive right $k$-query algorithm (when the queries themselves are also required to be simple graphs).

\medskip 
\par
\item
We investigate the expressive power of adaptive query algorithms over $\B$.
We observe that adaptive \emph{bounded} query algorithms over $\B$ have the same expressive power as \emph{non-adaptive} query algorithms over $\B$, whose expressive power was characterized by ten Cate et al.~\cite{ten2024homomorphism}.
However, we show that adaptive unbounded query algorithms over $\B$ are strictly more expressive. 
The classes decided by such algorithms exhibit an intricate pattern, which we show can be captured by a topological characterization.
In the case of homomorphism-closed classes, we further show that the characterization can be simplified significantly, yielding an elegant condition for establishing whether such a class is decided by an adaptive unbounded query algorithm over $\B$ or not.
\smallskip
\par 
We also apply the topological characterization to examine the expressive power in the special cases of classes that are homomorphism equivalence types, CSPs, or Datalog-definable.
Specifically, we show that unboundedness does not increase the expressive power of adaptive left query algorithms over $\B$ for deciding CSPs and homomorphic-equivalence types, thereby yielding, based on existing results, an effective criterion for determining whether such classes are decided by an adaptive left unbounded query algorithm over $\B$.
Additionally, we characterize the Datalog definable classes that are decided by such an algorithm in terms of their upper envelopes.
Finally, we show that analogous results hold for adaptive \emph{right} unbounded query algorithms over $\B$. 
\end{enumerate}

\subparagraph*{Outline.}
Section~\ref{sec:preliminaries} reviews preliminary definitions and background, including those of relational structures, digraphs, and walks in digraphs.
Section~\ref{query-algs} introduces the main concepts studied in this paper. We review the definition of non-adaptive query algorithms and introduce a new definition of adaptive query algorithms that applies in both the bounded and unbounded settings.
Section~\ref{left-algs-N} contains our main results regarding the expressive power of query algorithms over $\N$.
Section~\ref{sec:unb helps} begins with the statement of a key lemma and then builds towards the result that not all classes can be expressed by an adaptive left bounded query algorithm over $\N$.
In Section~\ref{sec:adaptive vs non-adaptive}, we establish the results comparing adaptive and non-adaptive algorithms (as depicted in Figure 1). 
Section~\ref{right-algs-over-N} concisely describes the simpler situation for adaptive right query algorithms over $\N$.
Section~\ref{algs-over-B} presents our results regarding the expressive power of query algorithms over $\B$. 
In Section~\ref{sec:unb helps B}, we show that unboundedness increases the expressive power of adaptive left query algorithms over $\B$. 
Section~\ref{Unbounded qals over B} provides the topological characterization of the classes expressed by adaptive left unbounded query algorithms over $\B$, and in Section~\ref{sec:case studies} we apply this characterization to special cases.
In Section~\ref{sec:rqa over B}, we show that similar results can be obtained for adaptive right query algorithms over $\B$.
We conclude in Section~\ref{sec:discussion} by listing directions for further research. 
In the interest of space, we have moved the proofs of some theorems to the appendices.

\section{Preliminaries}
\label{sec:preliminaries}
\subparagraph*{Relational structures and homomorphism counts.}
A \emph{signature} $\tau = \{R_1, \ldots, R_n\}$ is a set of relational symbols where each symbol $R_i$ has an associated arity $r_i$. A \emph{relational structure} $\mathbf A$ of signature $\tau$ consists of a non-empty set $A$ and a relation $R_i^\mathbf A \subseteq A^{r_i}$ for each symbol $R_i \in \tau$. We use $\mathsf{FIN}(\tau)$ to denote the class of all finite relational structures with signature $\tau$.
We will only study finite relational structures; we will thus write ``structure'' and mean ``finite relational structure'' throughout this text.
If $\mathbf A = (A, (R_i^\mathbf A)_{i\in \{1,\ldots,n\}})$ and $\mathbf B = (B, (R_i^\mathbf B)_{i\in \{1,\ldots,n\}})$ are structures with signature $\tau$, we say that $\varphi: A \to B$ is a \emph{homomorphism} from $\mathbf A$ to $\mathbf B$ if it preserves all the relations, i.e. if \begin{align*}
   (a_1,\ldots, a_{r_i}) \in R_i^\mathbf A \Longrightarrow
   (\varphi(a_1),\ldots, \varphi(a_{r_i})) \in R_i^\mathbf B
\end{align*}
for each $R_i \in \tau$.
We then write $\varphi: \mathbf A \to \mathbf B$.
If there exists a homomorphism from $\mathbf A$ to $\mathbf B$ we write $\mathbf A \to \mathbf B$, otherwise, $\mathbf A \nrightarrow \mathbf B$. If $\mathbf A \to \mathbf B$ and $\mathbf B \to \mathbf A$, we then write $\mathbf A \leftrightarrow \mathbf B$ and say that $\mathbf A$ and $\mathbf B$ are homomorphically equivalent.
\par 

For structures $A$ and $B$, we let $\hom(A,B)$ denote the number of homomorphisms from $A$ to $B$. We also use $\hom_\N(A,B)$ to denote this same number. We then define \begin{align*}
    \hom_\B(A, B) \coloneqq \begin{cases}
        0 &\If \hom(A, B) = 0\\
        1 &\If \hom(A,B)>0
    \end{cases}
\end{align*}
where $\B$ denotes the Boolean semiring.
For a class $\C$ of structures, the \emph{left homomorphism profile of $A$ restricted to $\C$} is the tuple $(\hom(C, A))_{C \in \C}$. Similarly, we can define the right homomorphism profile and the homomorphism profile over $\B$.
When we speak of a class of structures, we will always assume it is closed under isomorphism.

\subparagraph*{Properties of structures.}
The incidence multigraph $\inc(\mathbf A)$ of a structure $\mathbf A = (A, (R^\mathbf A)_{R \in \tau})$ is the bipartite multigraph whose parts are the sets $A$ and $\facts(\mathbf A) = \{(R,\mathbf t) : R \in \tau \text{ and } \mathbf t \in R^{\mathbf A}\}$, and there is an edge between $a$ and $(R,\mathbf t)$ for each entry in $\mathbf t$ that is $a$.
We say that $\mathbf A$ is connected if $\inc (\mathbf A)$ is connected, and we say $\mathbf A$ is acyclic if $\inc \mathbf A$ is acyclic. In particular, if an element appears multiple times in a fact, then the structure is \emph{not} acyclic. This notion of acyclicity is also known as \emph{Berge-acyclicity}.
The number $c(\mathbf A)$ of components of $\mathbf A$ is the maximal $n$ such that $A$ can be written as the disjoint union of $n$ structures.
\par 

For given structures $A$ and $B$ we let $A \oplus B$ denote their disjoint union and for a natural number $m$ and a structure $H$ we write 
\begin{align*}
    m \cdot H \coloneqq \underbrace{H\oplus\ldots\oplus H}_{m-\text{times}}.
\end{align*}

\subparagraph*{Digraphs, walks, and homomorphisms to cycles.}
A \emph{digraph} (or a \emph{directed graph}) is a structure with exactly one binary relation. So it is a pair $(V, R)$ where $R \subseteq V \times V$.
A \emph{simple graph} is an undirected graph without loops.
We let $\mathsf{C}_n$ denote the directed cycle of length $n$:
\begin{align*}
    \mathsf{C}_n \coloneqq ( \{0,\ldots, n-1\}, \{(0,1), \ldots, (n-2,n-1), (n-1,0)\})
\end{align*}
and $\mathsf{P}_n$ denote the directed path of length $n$:\begin{align*}
    \mathsf{P}_n \coloneqq ( \{0,\ldots, n\}, \{(0,1), \ldots, (n-1,n)\})
\end{align*}
\par 
For our discussion of adaptive left query algorithms, walks in directed graphs play a large role. We will thus review the definition of an oriented walk and some related concepts.

\begin{definition} Let $A= (V, R)$ be a digraph.
\begin{enumerate}[(i)]
    \item An \emph{oriented walk} in $A$ is a sequence $(a_0, r_0, a_1, r_1, \ldots, a_n, r_n, a_{n+1})$ where $a_i \in V$ for each $i \in \{0,\ldots, n\}$, $r_i \in \{-,+\}$ and if $r_i = +$ then $(a_i, a_{i+1}) \in R$ while if $r_i = -$ then $(a_{i+1}, a_{i}) \in R$.
    \item The \emph{net length} of an oriented walk $W = (a_0, r_0, a_1, r_1, \ldots, a_n, r_{n}, a_{n+1})$ is defined as: \begin{align*}
        l(W) \coloneqq 
        \abs{ \{i : i \in \{0,\ldots, n\} \hspace{6pt}\text{and}\hspace{6pt} r_i = +\}} - \abs{\{i : i \in \{0,\ldots, n\} \hspace{6pt}\text{and}\hspace{6pt} r_i = -\}}.
    \end{align*}
    
    \item The oriented walk $(a_0, r_0, a_1, r_1, \ldots, a_n, r_{n}, a_{n+1})$ is said to be \emph{closed} if $a_0 = a_{n+1}$.

    \item A closed oriented walk $(a_0, r_0, a_1, r_1, \ldots, a_n, r_{n}, a_{0})$ is called an oriented \emph{cycle} if $a_i \neq a_j$ for all $i\neq j$ and $n\geq 0$.
    \item A \emph{directed walk} is an oriented walk $(a_0, r_0, \ldots, a_{n})$ such that $r_i = +$ for each $i$. \emph{Closed directed walks} and \emph{directed cycles} are defined analogously.
\end{enumerate}
\end{definition}

An important thing to note is that oriented walks and their net lengths are preserved under homomorphisms. So if $(a_0, r_0, a_1, r_1, \ldots, a_n, r_n, a_{n+1})$ is an oriented walk in $A$ and $\varphi: A \to B$ is a homomorphism then \[(\varphi(a_0), r_0, \varphi(a_1), r_1, \ldots, \varphi(a_n), r_n, \varphi(a_{n+1}))\] is an oriented walk in $B$ with the same net length.

For our proof of the existence of a class that is not decided by an adaptive left unbounded query algorithm, we take advantage of the fact that the existence of a homomorphism to a directed cycle can be described in an easy way.

\begin{lemma}[{Corollary 1.17 in \cite{hell2004graphs}}]
\label{girth-lemma}
    Let $A$ be a digraph and $n$ be a positive natural number. We have $A \to \mathsf{C}_n$ if and only if $n$ divides the net length of every closed oriented walk in $A$.
\end{lemma}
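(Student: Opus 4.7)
The plan is to prove both directions separately, with the forward direction being an immediate consequence of the fact that homomorphisms preserve oriented walks and their net lengths (noted in the excerpt just before the lemma), and the backward direction requiring the explicit construction of a homomorphism.

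For the forward direction, suppose $\varphi \colon A \to \mathsf{C}_n$ is a homomorphism and let $W = (a_0, r_0, a_1, \ldots, r_k, a_0)$ be a closed oriented walk in $A$. Then $\varphi(W)$ is a closed oriented walk in $\mathsf{C}_n$ of the same net length. So it suffices to show that every closed oriented walk in $\mathsf{C}_n$ has net length divisible by $n$. For this, identify the vertices of $\mathsf{C}_n$ with $\mathbb{Z}/n\mathbb{Z}$. A forward edge changes the position by $+1$ modulo $n$, and a backward edge changes it by $-1$ modulo $n$; tracing the walk, one returns to the starting vertex, so the net length is $\equiv 0 \pmod{n}$.

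For the backward direction, I would reduce to the case in which $A$ is connected (as a relational structure, this means the incidence multigraph is connected, which for a digraph is equivalent to connectedness of the underlying undirected graph; in particular, every pair of vertices is joined by some oriented walk). Handling components independently gives a homomorphism on all of $A$. For connected $A$, fix a base vertex $a_0 \in A$ and, for each $a \in A$, choose an oriented walk $W_a$ from $a_0$ to $a$, then define $\varphi(a) \coloneqq l(W_a) \bmod n \in \{0, \ldots, n-1\}$, viewed as a vertex of $\mathsf{C}_n$.

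The key step (and the main obstacle) is checking that $\varphi$ is well-defined: if $W_a$ and $W'_a$ are two oriented walks from $a_0$ to $a$, concatenating $W_a$ with the reversal of $W'_a$ (where each sign $r_i$ is flipped) yields a closed oriented walk at $a_0$ with net length $l(W_a) - l(W'_a)$; by hypothesis this is divisible by $n$, so the choice of walk does not matter modulo $n$. Once well-definedness is established, verifying the homomorphism condition is immediate: if $(a,b) \in R^A$ then extending $W_a$ by a single forward step yields an oriented walk from $a_0$ to $b$ of net length $l(W_a) + 1$, so $\varphi(b) \equiv \varphi(a) + 1 \pmod{n}$, which is exactly the edge relation of $\mathsf{C}_n$.
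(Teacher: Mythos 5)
Your proof is correct. The paper itself does not prove this lemma but cites it as Corollary~1.17 of Hell and Ne\v{s}et\v{r}il, and your argument is precisely the standard proof found there: the forward direction by pushing closed walks through the homomorphism and noting the displacement in $\mathbb{Z}/n\mathbb{Z}$ telescopes to $0$, and the backward direction by fixing a base vertex in each connected component and defining the map as the net length of any walk from the base modulo $n$, with well-definedness guaranteed exactly by the divisibility hypothesis. The only (harmless) edge case left implicit is a component consisting of a single vertex with no incident facts, where no oriented walk exists under the paper's definition; one either allows the empty walk of net length~$0$ or observes that such a vertex can be mapped anywhere.
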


We use $\gcd$ to denote the greatest common divisor of a collection of natural numbers.
For every finite digraph, we can define a parameter very related to the condition in Lemma \ref{girth-lemma}.
\begin{definition}
    For a finite digraph $A$ we define \begin{align*}
        \gamma(A) \coloneqq \gcd(l_1, \ldots, l_k)
    \end{align*}
    where $l_1, \ldots, l_k$ is a listing of the net lengths of all cycles of positive net length in $A$. 
\end{definition}
Note, here we use the convention $\gcd(\varnothing) = 0$. 
For integers $n,m$ we write $n \mid m$ if $n$ divides $m$, and $n\nmid m$ otherwise.
The next proposition relates this parameter to the condition in Lemma \ref{girth-lemma}.
\begin{restatable}{proposition}{girthProp}
    \label{girth-prop}
    Let $A$ be a digraph and $n$ be a positive integer. We have $A \to \mathsf{C}_n$ if and only if $n \mid \gamma(A)$.
\end{restatable}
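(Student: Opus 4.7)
The plan is to use Lemma \ref{girth-lemma} to reduce the claim to a purely arithmetic statement: Lemma \ref{girth-lemma} says $A\to\mathsf{C}_n$ iff $n$ divides the net length of every closed oriented walk in $A$, so it suffices to prove that this divisibility condition on all closed walks is equivalent to $n\mid\gamma(A)$.

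The forward direction is immediate: every oriented cycle is a closed oriented walk, so if $n$ divides the net length of every closed oriented walk, then in particular $n\mid l(C)$ for every cycle $C$ of positive net length, whence $n$ divides their $\gcd$, which is $\gamma(A)$. (If no positive-net-length cycles exist, $\gamma(A)=0$ by convention and $n\mid 0$ trivially.)

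For the converse, I would first prove by induction on walk length the structural lemma that the net length of any closed oriented walk in $A$ is an integer linear combination of net lengths of oriented cycles in $A$. Given a closed walk $W=(a_0,r_0,a_1,\ldots,a_n,r_n,a_0)$, if the only coincidence among the $a_i$'s is $a_0=a_{n+1}$ then $W$ is itself a cycle and we are done; otherwise, pick $0\leq i<j\leq n+1$ with $(i,j)\neq(0,n+1)$ and $a_i=a_j$, split $W$ into the closed subwalk from position $i$ to position $j$ and the complementary closed walk, observe that net length is additive under this split, and apply the induction hypothesis to each piece. Combined with the observation that reversing the traversal of an oriented cycle negates its net length, this shows that every cycle's net length is either $0$, or one of the $l_i$ appearing in the definition of $\gamma(A)$, or the negation of one. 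Assuming $n\mid\gamma(A)$, each of these is divisible by $n$, so $n$ divides the $\mathbb Z$-linear combination $l(W)$ for every closed oriented walk $W$; Lemma \ref{girth-lemma} then yields $A\to\mathsf{C}_n$.

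The main obstacle is the decomposition induction itself: one has to be careful about the signs $r_k$ when splitting and re-concatenating, about base cases of length $0$ and $1$ (trivial walks and self-loops), and about the degenerate case $\gamma(A)=0$, where one verifies that the absence of positive-net-length cycles forces every cycle, and hence every closed walk, to have net length $0$, so that $A\to\mathsf{C}_n$ holds vacuously for every $n\geq 1$.
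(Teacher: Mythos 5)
Your proof is correct and takes essentially the same route as the paper's: both reduce the claim via Lemma~\ref{girth-lemma} to showing that the assumed divisibility propagates to the net length of every closed oriented walk, and both establish this by strong induction on walk length, extracting cycles from non-cycle closed walks (with the negation/reversal observation handling cycles of negative net length and the $\gcd(\varnothing)=0$ convention handling the degenerate case). The only cosmetic difference is that you split a non-cycle closed walk at a repeated vertex into \emph{two} shorter closed subwalks and recurse on both, whereas the paper peels off a single oriented cycle subwalk and recurses on the remainder; the resulting decomposition of $l(W)$ into cycle net lengths is the same.
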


\subparagraph*{Conjunctive queries.}
A Boolean conjunctive query (CQ) is a first-order logic (FO)  sentence of the form $q=\exists \textbf{x}(\bigwedge_i \alpha_i)$ such that each $\alpha_i$ is of the form $R(y_1, \ldots, y_n)$ where each $y_i$ is among the variables in $\mathbf{x}$. It is well known that every such CQ has an associated ``canonical structure'' $A_q$ such that for every structure $B$ we have $B\vDash q$ if and only if $A_q \to B$.
Similarly,
vice versa, for every (finite) structure $A$, there is a 
Boolean CQ $q_A$ such that for every structure $B$ we have $A \to B$ if and only if $B\vDash q_A$.
A CQ is said to be Berge-acyclic if its canonical structure is acyclic.
A (Boolean) \emph{union of conjunctive queries (UCQ)} is a finite disjunction of (Boolean) CQs. Every Boolean UCQ is a positive existential FO sentence, and, conversely, every positive existential FO sentence is equivalent to a UCQ.

\section{Definitions of Query Algorithms}
\label{query-algs}
First, we review the definition of non-adaptive query algorithms.
\begin{definition}
\label{def:unbounded-adaptive-query-algs}
    Let $\C$ be a class of relational structures with the same signature, let $\K \in \{\N, \B\}$, and let $k$ be a positive integer.
    \begin{itemize}
        \item A \emph{non-adaptive left $k$-query algorithm} over $\K$ for $\C$ is a pair $(\F, X)$, where $ \F = (F_1, \ldots, F_k)$ is a tuple of relational structures with the same signature as the structures in $\C$, and $X$ is a set of $k$-tuples over $\K$, such that for all structures $D$, we have $D \in \C$ if and only if \[(\hom_\K(F_1, D), \hom_\K(F_2, D), \ldots, \hom_\K(F_k, D)) \in X\]
        \item A \emph{non-adaptive right $k$-query algorithm} over $\K$ for $\C$ is a pair $(\F, X)$, where $ \F = (F_1, \ldots, F_k)$ is a tuple of relational structures with the same signature as the structures in $\C$, and $X$ is a set of $k$-tuples over $K$, such that for all structures $D$, we have that $D \in \C$ if and only if \[(\hom_\K(D, F_1), \hom_\K(D, F_2), \ldots, \hom_\K(D, F_k)) \in X.\]
    \end{itemize}
\end{definition}
\smallskip 
\par 

\begin{example}
    Let $\F \coloneqq \{F\}$ where $F = (\{0\}, \varnothing)$ is the singleton digraph with no edges, and let $X = \{ 3 \}$. Then the non-adaptive left 1-query algorithm $(\F, X)$ over $\N$ decides the class $\C$ of digraphs that have exactly 3 vertices. It is not difficult to see that $\C$ does not admit a non-adaptive left query algorithm over $\B$. Indeed, this follows from the fact that the class is not closed under homomorphic equivalence. A query algorithm over $\B$ cannot distinguish between structures that are homomorphically equivalent, since if $A \leftrightarrow B$ then $D \to A$ if and only if $D \to B$ for any $D$.
\end{example}

The example above shows that query algorithms over $\N$ have more expressive power than query algorithms over $\B$ in general. However, for classes closed under homomorphic equivalence, the same does not hold, as shown by ten Cate et al. in the following theorem.
\begin{theorem}[Corollary 5.6 in \cite{ten2024homomorphism}]
    \label{Balder main theorem}
    Let $\C$ be a class of structures that is closed under homomorphic equivalence. Then the following are equivalent:
    \begin{itemize}
        \item $\C$ admits a non-adaptive left $k$-query algorithm over $\N$ for some $k$.
        \item $\C$ admits a non-adaptive left $k$-query algorithm over $\B$ for some $k$.
        \item $\C$ is FO-definable.
    \end{itemize}
\end{theorem}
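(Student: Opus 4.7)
The trivial directions $(2)\Rightarrow(1)$ and $(2)\Rightarrow(3)$ I dispatch first. For $(2)\Rightarrow(1)$, lift the Boolean accepting set $X \subseteq \{0,1\}^k$ to $X' = \{\vec n \in \N^k : (\sg(n_1),\ldots,\sg(n_k)) \in X\}$ and use $\hom_\B(F,D) = \sg(\hom_\N(F,D))$. For $(2)\Rightarrow(3)$, each test $\hom_\B(F_i,D) = 1$ is equivalent to the CQ $q_{F_i}$ holding on $D$, so $\C$ is a Boolean combination of CQs and hence FO-definable. The remaining implications, $(1)\Rightarrow(2)$ and $(3)\Rightarrow(2)$, use the homomorphic-equivalence hypothesis essentially.

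For $(1)\Rightarrow(2)$, given $(\F,X)$ over $\N$ for $\C$, the key claim is that membership in $\C$ depends only on the Boolean profile $\Psi(D) := (\sg(\hom(F_i,D)))_i \in \{0,1\}^k$; granted this, $(\F, \{\Psi(D) : D \in \C\})$ is a $\B$-algorithm for $\C$. To prove the claim, fix $\sigma \in \{0,1\}^k$ and two structures $D, D'$ with $\Psi(D) = \Psi(D') = \sigma$; I aim to show $D \in \C \iff D' \in \C$. The plan is to exhibit, within the hom-equivalence class of $D$, a structure $E$ with $\phi(E) = \phi(D')$; then hom-equivalence closure gives $D \in \C \iff E \in \C$, while the $\N$-algorithm gives $E \in \C \iff D' \in \C$. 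Constructing such an $E$ uses the linearity identity $\hom(F_i, A \oplus B) = \hom(F_i, A) + \hom(F_i, B)$ (for connected $F_i$; polynomial analogues otherwise) applied to disjoint unions of $D$ with carefully chosen pieces that map into $D$ (for instance, small substructures of the core of $D$ or homomorphic images derived from it).

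For $(3)\Rightarrow(2)$, my plan is to apply a preservation theorem: \emph{every FO sentence defining a hom-equivalence closed class of finite structures is equivalent, on finite structures, to a Boolean combination of CQs}. Granted this, write the defining sentence of $\C$ as $\bigvee_j \bigl(\bigwedge_{i \in P_j} q_{F_i} \wedge \bigwedge_{i \in N_j} \neg q_{F_i}\bigr)$; take $\F = (F_1,\ldots,F_k)$ and let $X \subseteq \{0,1\}^k$ consist of the satisfying assignments of this Boolean formula. Then $(\F, X)$ is a non-adaptive left $k$-query algorithm over $\B$ for $\C$.

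The main obstacles are (a) constructing the matching structure $E$ in $(1)\Rightarrow(2)$, which is subtle because $D'$ need not map into $D$, so a naive choice like $D \oplus D'$ need not remain in the hom-equivalence class of $D$, and the achievable $\phi$-vectors within a single hom-eq class are further constrained by the core structure; and (b) the preservation theorem in $(3)\Rightarrow(2)$. Preservation theorems are notoriously delicate on finite structures---Los--Tarski fails, while Rossman's homomorphism preservation theorem holds only via a sophisticated argument---and a plausible route for the hom-equivalence variant is to apply Rossman's theorem to $\phi$ and to a dual sentence capturing downward hom-closure along cores, then combine the two resulting UCQs into a Boolean combination.
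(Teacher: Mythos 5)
The paper does not prove this statement: it is imported verbatim as Corollary~5.6 of \cite{ten2024homomorphism}, so there is no in-paper proof to compare against. Judged on its own, your proposal correctly dispatches the easy directions ($(2)\Rightarrow(1)$ via $\hom_\B = \sg\circ\hom_\N$, and $(2)\Rightarrow(3)$ since each Boolean query is a CQ test), but both substantive implications are left with genuine gaps that you name yourself and do not close.

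For $(1)\Rightarrow(2)$, the entire content is the claim that membership depends only on the Boolean profile, and the construction of the witnessing structure is missing. Moreover, your specific target --- a single $E\leftrightarrow D$ whose full $\N$-profile equals that of $D'$ --- is more rigid than necessary and likely unachievable in general: the profiles realizable inside $[D]_{\leftrightarrow}$ via $D\oplus t_1\cdot C_1\oplus\cdots$ (with $C_j\to D$) are constrained to an affine/polynomial family over $\hom$-counts of pieces mapping into $D$, and there is no reason the exact vector $(\hom(F_i,D'))_i$ lies in it. The standard fix is symmetric: modify \emph{both} $D$ and $D'$ within their respective hom-equivalence classes so that the two modified structures agree on all $\hom(F_i,\cdot)$; arranging this simultaneously for all $i$ (and for disconnected $F_i$, where the counts multiply over components) is the actual technical heart of the cited result, and it is absent here. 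For $(3)\Rightarrow(2)$, you invoke a preservation theorem --- FO plus closure under homomorphic equivalence implies definability by a Boolean combination of CQs --- which is itself a nontrivial result of the same depth as what is being proved. The proposed route of ``applying Rossman's theorem to $\phi$ and to a dual sentence'' does not obviously get off the ground: a hom-equivalence-closed class is in general closed under homomorphisms in \emph{neither} direction (e.g.\ $[A]_{\leftrightarrow}$ itself), so neither $\phi$ nor $\neg\phi$ satisfies the hypothesis of the homomorphism preservation theorem, and you do not say what the ``dual sentence'' is. As it stands, both hard implications are reduced to unproven claims, so the proposal is a plan rather than a proof.
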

\par 
We now formulate the definition of an \emph{adaptive unbounded query algorithm}.
Adaptive query algorithms were defined in Chen et al.~\cite{chen_et_al:LIPIcs.MFCS.2022.32}. The definition there is limited in the sense that it only considers query algorithms whose number of queries is bounded by some constant.

\noindent For a set $\Sigma$, we let $\Sigma^{< \omega}$ denote the set of finite strings with alphabet $\Sigma$.
\begin{itemize}
\item For $\sigma, \sigma' \in \Sigma^{<\omega}$, we write $\sigma' \sqsubseteq \sigma$ if $\sigma'$ is an initial segment of $\sigma$.
\item We let $\sigma \bullet \sigma'$ denote the concatenation of the strings $\sigma, \sigma'$ (we also use this notation if $\sigma'$ is an element of $\Sigma$).
\item If $\sigma_0 \sqsubseteq \sigma_1 \sqsubseteq \sigma_2  \sqsubseteq \ldots$ is a sequence, then we let $\bigsqcup_{n\geq0} \sigma_n$ denote the (possibly infinite) string $\sigma$ with length $m \coloneqq \sup_{n\geq 0} \abs{\sigma_n}$ such that for $i\leq m$, the $i$-th letter of $\sigma$ is the $i$-th letter of $\sigma_n$ for any $n$ such that $\abs{\sigma_n}\geq i$.
\end{itemize}
A set $\mathcal T \subseteq \Sigma^{<\omega}$ that is closed under initial segments (so if $\sigma\in \T$ and $\sigma'\sqsubseteq \sigma$ then $\sigma' \in \T$) is called a \emph{subtree of $\Sigma^{<\omega}$}. An element $\sigma$ of such a subtree $\mathcal T$ is called a \emph{leaf} if for every $\sigma' \in \T$ such that $\sigma \sqsubseteq \sigma'$ we have $\sigma' = \sigma$.
\par 
We are now ready to define adaptive left unbounded query algorithms.

\begin{definition} Let $\K \in \{\N, \B\}$.
\begin{enumerate}[(i)]
    \item An \emph{adaptive left unbounded query algorithm over $\K$} for structures with signature $\tau$ is a function 
    \[G: \mathcal{T} \to \mathsf{FIN}(\tau) \cup \{ \textsf{YES}, \textsf{NO}\}\] 
    where $\T$ is a subtree of $\K^{<\omega}$ and $G(\sigma) \in \{ \textsf{YES}, \textsf{NO}\}$ if and only if $\sigma$ is a leaf in $\T$.
    
    \item For $A \in \mathsf{FIN}(\tau)$, the \emph{computation path} of an adaptive left unbounded query algorithm $G$ on $A$ is the string defined by $\sigma(A, G) \coloneqq \bigsqcup_{n\geq 0} \sigma_n$ where \begin{align*}
        \sigma_0 &= \varepsilon \\
        \sigma_{n+1} &= \begin{cases}
            \sigma_n &\If G(\sigma_n) \in \{ \textsf{YES}, \textsf{NO}\} \\
            \sigma_n \bullet \hom_{\mathcal{K}}(G(\sigma_n), A) &\If G(\sigma_n) \in \mathsf{FIN}(\tau)
        \end{cases}.
    \end{align*}
    
    The algorithm $G$ \emph{halts} on input $A$ if $\sigma(A, G)$ is finite.
    
    \item If $G$ halts on all $A \in \mathsf{FIN}(\tau)$, then $G$ is said to be \emph{total}. 

    \item If $G$ is total, we say that it decides the class \begin{align*}
        \C \coloneqq \{ A \in \mathsf{FIN}(\tau) : G(\sigma(A,G)) = \mathsf{YES} \}.
    \end{align*}

    \item We say that $G$ is an adaptive left $k$-query algorithm if for every $A$, $\sigma(A, G)$ has length at most $k$. 
    We say $G$ is \emph{bounded} if it is an adaptive left $k$-query algorithm for some $k$.
    
\end{enumerate}

\end{definition}
The notion of an adaptive right unbounded/bounded query algorithm $G$ is defined analogously by appending $\hom_\mathcal{K}(A, G(\sigma_n))$ instead of $\hom_{\mathcal{K}}(G(\sigma_n), A)$ in the computation path.

\smallskip \par 
We will only consider total adaptive query algorithms, so when we speak of adaptive query algorithms we always assume it is total. 
\par 

\begin{example}
\label{example:2-adaptive-to-decide-directed-cycle}
Let $G$ be an adaptive left query algorithm over $\N$ for digraphs defined with the following:
\begin{align*}
    G(\sigma) \coloneqq \begin{cases}
        (\{0\}, \varnothing) &\If \sigma = \varepsilon\\
        \mathsf{P}_n &\If \sigma = `n\textrm'\\
        \mathsf{YES} &\If \sigma = `a_0a_1\textrm' \hspace{6pt}\text{and}\hspace{6pt}a_1 \neq 0\\
        \mathsf{NO} &\If \sigma = `a_0a_1\textrm' \hspace{6pt}\text{and}\hspace{6pt}a_1 = 0
    \end{cases}.
\end{align*}
The corresponding tree $\T\subseteq \N^{<\omega}$ would here be the set of all strings of length at most 2.\par 
We can describe a run of the algorithm on a target structure $A$ as follows:
First, the algorithm queries for the number of homomorphisms from the singleton digraph with no edges to $A$. Let $n$ denote the output of that query. 
It then queries for $\hom_\N(\mathsf{P}_n, A)$. If the result is positive, it accepts; otherwise, it rejects.
\par 
To see what class this algorithm decides, we first note that the number of homomorphisms from the singleton digraph with no edges to $A$ is precisely the number of vertices of $A$. We also have that there exists a homomorphism from $\mathsf{P}_n$ to $A$ if and only if $A$ has a directed walk of length $n$. Since $n$ is the size of $A$, we see that such a walk has to visit the same vertex twice. We thus get that such a walk exists if and only if $A$ has a directed cycle. The algorithm, therefore, decides the class of digraphs that contain a directed cycle.
\end{example}

\begin{example}
\label{example:decide-every-struct-lqa}
    In this example, we show that every class $\C$ of structures is decided by an adaptive left unbounded query algorithm over $\N$.
    
    The algorithm first queries for the size of the structure. Then it queries the number of homomorphisms from every structure that is not larger than it. 
    By the proof of Lovász's theorem (see, for example, Section III of \cite{atserias2021expressive} for a proof of this), the algorithm has now distinguished the structure (up to isomorphism) and can thus classify it correctly. 
    \par 
    To formally define the algorithm, let $\tau$ be the signature of the structures in $\C$. We fix an enumeration $A_1, A_2, A_3,\ldots$ of all structures of signature $\tau$ such that for each $n \in \N$ there exists a number $r_n$ such that $A_1, \ldots, A_{r_n}$ is an enumeration of all of the structures of size at most $n$.
    Then the algorithm can be defined with: \begin{align*}
        G(\sigma) \coloneqq \begin{cases}
            (\{0\}, \varnothing) &\If \sigma = \varepsilon \\
            A_{k+1} &\If \sigma = `na_1 a_2 \ldots a_k\textrm' \text{ and } k<r_n\\
            \textsf{YES} &\If \sigma =  `na_1 a_2 \ldots a_{r_n}\textrm' \text{ and the unique $A$ s.t. $\hom_\N(A_i, A) = a_i$}\\
            &\hspace{17pt}\text{for each $1 \leq i \leq r_n$ satisfies $A \in \C$}\\ 
            \textsf{NO} &\If \sigma =  `na_1 a_2 \ldots a_{r_n}\textrm' \text{ and the unique $A$ s.t. $\hom_\N(A_i, A) = a_i$}\\
            &\hspace{17pt}\text{for each $1 \leq i \leq r_n$ satisfies $A \notin \C$} 
        \end{cases}
    \end{align*}
\end{example}

\section{Query Algorithms over $\N$}
\label{left-algs-N}

\subsection{Unboundedness helps for adaptive left query algorithms}
\label{sec:unb helps}
A starting point for our work in this section is the following result by Chen et al:
\begin{theorem}[Theorem 8.2 in \cite{chen_et_al:LIPIcs.MFCS.2022.32}]
    Every class of simple graphs can be decided by an adaptive left 3-query algorithm over $\N$.
\end{theorem}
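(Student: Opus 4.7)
My plan is to use the three queries in two phases. The first reduces the problem to a finite case analysis; the remaining two adaptively identify the isomorphism class of the input graph $A$. Since classes of structures are assumed closed under isomorphism, knowing $A$'s iso class determines whether $A \in \C$.

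For Query 1, I would ask $F_1 := \mathsf{K}_1$, the one-vertex graph, returning $n := \hom(\mathsf{K}_1, A) = |V(A)|$. The task then reduces to identifying $A$'s iso class within $\mathcal{H}_n$, the finite set of iso classes of simple graphs on $n$ vertices. For Queries 2 and 3, the central tool is Lov\'asz's theorem together with the multiplicative identity $\hom(F \oplus F', A) = \hom(F, A) \cdot \hom(F', A)$. By Lov\'asz, for any two distinct iso classes in $\mathcal{H}_n$ there is a single distinguishing graph; combining finitely many such distinguishers $G_1, \ldots, G_K$ via the disjoint union $F := (m_1 \cdot G_1) \oplus \cdots \oplus (m_K \cdot G_K)$ with geometrically growing multiplicities $m_k = M^{k-1}$ (for $M$ larger than any hom count encountered) encodes the tuple $(\hom(G_k, A))_k$ as the single product $\prod_k \hom(G_k, A)^{m_k}$, and this product is injective on $\mathcal{H}_n$ whenever every factor is positive.

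The main obstacle is precisely that positivity requirement: whenever some $\hom(G_k, H) = 0$, the product encoding collapses and information is lost. For instance, $\hom(G, \bar{\mathsf{K}}_n) = 0$ for every $G$ containing an edge, so no single distinguisher is positive on all of $\mathcal{H}_n$. I would address this by using Query 2 as a partitioning query: take $F_2 := \mathsf{K}_2$, returning $\hom(\mathsf{K}_2, A) = 2|E(A)|$, and hence the edge count $e$ of $A$. Now $A$'s iso class lies in the subfamily $\mathcal{H}_{n, e}$ of iso classes with $n$ vertices and $e$ edges. For $e = 0$, $\mathcal{H}_{n, 0} = \{\bar{\mathsf{K}}_n\}$ is a singleton and Query 3 is trivial; for $e \geq 1$, every member of $\mathcal{H}_{n, e}$ has an edge, and one can hope to choose pairwise distinguishers $G_k^{(n, e)}$ with $\hom(G_k^{(n, e)}, H) > 0$ for every $H \in \mathcal{H}_{n, e}$. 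Query 3 is then the corresponding product-encoding graph $F_3^{(n, e)}$, whose hom count pinpoints $A$'s iso class, and the leaf outputs \textsf{YES} or \textsf{NO} according to whether that iso class lies in $\C$.

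The hardest step is verifying that, for each $(n, e)$ with $e \geq 1$, such pairwise distinguishers with uniformly positive hom counts actually exist, and then completing the multiplicity argument to obtain true injectivity of the product. Bipartite distinguishers automatically satisfy the positivity condition (any bipartite graph admits a homomorphism into any graph containing an edge), but bipartite hom counts are known not to separate all pairs of non-isomorphic graphs. A finer analysis, possibly involving an additional adaptive case split within Query 3 or a cleverer choice of Query 2 (e.g.\ using an invariant that more sharply partitions $\mathcal{H}_n$), will be required to close this gap and make the three-query budget suffice.
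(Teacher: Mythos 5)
Your overall shape is reasonable --- a first query for the vertex count $n$, then two more queries to pin down the isomorphism class --- and you correctly identify the central obstruction: the disjoint-union encoding $\hom\bigl(\bigoplus_k m_k\cdot G_k,\,A\bigr)=\prod_k\hom(G_k,A)^{m_k}$ collapses to $0$ as soon as a single factor vanishes. However, the particular choice of $K_2$ as the second query is provably insufficient to repair this, so the gap you flag at the end is fatal rather than a matter of ``finer analysis.'' Concretely, take $A=C_6$ (the undirected $6$-cycle) and $A'=2C_3$ (two disjoint triangles). Both have $n=6$ vertices and $e=6$ edges, so your Queries~1 and~2 return identical answers. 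Moreover $C_6\cong K_2\otimes C_3$, so for every bipartite graph $F$ one has
$\hom(F,C_6)=\hom(F,K_2)\cdot\hom(F,C_3)=2^{c(F)}\hom(F,C_3)=\hom(F,2C_3)$,
using that each connected bipartite graph with an edge has exactly two homomorphisms into $K_2$ and that homomorphism counts multiply over components. Thus $C_6$ and $2C_3$ agree on $\hom(F,\cdot)$ for \emph{every} bipartite $F$. Since, as you observe, the ``safe'' distinguishers for a family containing both $C_6$ (core $K_2$) and $2C_3$ (core $C_3$) must map into $K_2$, i.e.\ be bipartite, no third query of the product form you describe can separate this pair, and your algorithm misclassifies at least one of the two.

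The fix requires a second query that extracts strictly more than the edge count: one needs to learn, before Phase~2, which graphs $G$ on at most $n$ vertices satisfy $G\to A$, equivalently the core of $A$. Once that zero-pattern is known, one can take the distinguishers $G_1,\ldots,G_K$ to be exactly the graphs on $\leq n$ vertices that map to the core; every factor $\hom(G_k,A)$ is then positive, and the decoded values together with the (already known) zero entries determine $A$ up to isomorphism by the Lov\'asz argument, exactly as you intend. The genuinely hard part of the three-query theorem is designing a single second query that recovers this zero-pattern; a raw edge count is too coarse an invariant, as the $C_6$ versus $2C_3$ example shows. You correctly sense this (``a cleverer choice of Query~2''), but without that missing ingredient the proof does not go through.
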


This result invites the question of what happens in the general case, for arbitrary relational structures.
We prove that the same does not hold in this broader setting. 
We even show that there is a class of directed graphs that is not decided by an adaptive $k$-query algorithm over $\N$, for any $k$.
To do this, we utilize the symmetry of directed cycles to prove that they are hard to distinguish from each other using left homomorphism queries.
\par 
We begin with a lemma that shows that the homomorphism count from a structure to a disjoint union of cycles can be described simply.

\begin{restatable}{lemma}{homLemma}
\label{hom-lemma}
For every digraph $A$ and all positive integers $n,m$ we have
    \begin{align*}
        \hom(A, m\cdot \mathsf{C}_n) \coloneqq \begin{cases}
            0 &\If n \nmid \gamma(A)  \\
            (m\cdot n)^{c(A)} &\If n \mid \gamma(A)
        \end{cases}.
    \end{align*}
\end{restatable}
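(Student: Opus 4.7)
The plan is to reduce the count $\hom(A, m \cdot \mathsf{C}_n)$ to a product over the connected components of $A$ and then show that each connected factor equals either $0$ or $m \cdot n$.

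First, I would decompose $A = A_1 \oplus \cdots \oplus A_{c(A)}$ into its connected components. Since homomorphisms out of a disjoint union are given independently on each summand,
\[
\hom(A, m \cdot \mathsf{C}_n) = \prod_{i=1}^{c(A)} \hom(A_i, m \cdot \mathsf{C}_n).
\]
Moreover, since each $A_i$ is connected and $m \cdot \mathsf{C}_n$ has $m$ connected components (each isomorphic to $\mathsf{C}_n$), any homomorphism from $A_i$ must factor through exactly one copy of $\mathsf{C}_n$, so $\hom(A_i, m \cdot \mathsf{C}_n) = m \cdot \hom(A_i, \mathsf{C}_n)$. Also, the cycles of positive net length in $A$ are precisely the union of those in the $A_i$, so $n \mid \gamma(A)$ iff $n \mid \gamma(A_i)$ for every $i$.

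Next I would handle the two cases. If $n \nmid \gamma(A)$, then $n \nmid \gamma(A_j)$ for some $j$, and by Proposition \ref{girth-prop} we have $A_j \not\to \mathsf{C}_n$, hence $\hom(A_j, \mathsf{C}_n) = 0$, making the product $0$. If $n \mid \gamma(A)$, then $n \mid \gamma(A_i)$ for each $i$, and it suffices to establish the key claim: \emph{for every connected digraph $B$ with $n \mid \gamma(B)$, $\hom(B, \mathsf{C}_n) = n$.} Granting this, the product becomes $\prod_{i=1}^{c(A)} (m \cdot n) = (m \cdot n)^{c(A)}$, as required.

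The main step is proving the claim, and this is where I would spend the bulk of the argument. Fix a basepoint $b_0 \in B$; I would show that a homomorphism $\varphi : B \to \mathsf{C}_n$ is uniquely determined by $\varphi(b_0) \in \{0, \dots, n-1\}$, and that every such choice extends uniquely. Existence and uniqueness of the extension come from the fact that $B$ is connected, so for every $u \in B$ there is an oriented walk from $b_0$ to $u$, and homomorphisms preserve oriented walks together with their net lengths; in $\mathsf{C}_n$ the endpoint of an oriented walk is determined by its starting vertex and its net length modulo $n$, which forces $\varphi(u) \equiv \varphi(b_0) + \ell(W) \pmod{n}$ for any such walk $W$. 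Well-definedness is the crux: two oriented walks $W, W'$ from $b_0$ to $u$ yield a closed oriented walk of net length $\ell(W) - \ell(W')$, and by (an easy strengthening of) the argument in Lemma \ref{girth-lemma} the net length of any closed oriented walk in $B$ is a $\mathbb{Z}$-linear combination of net lengths of oriented cycles, hence is a multiple of $\gamma(B)$ and therefore of $n$. Thus all choices of walk yield the same value modulo $n$, giving exactly $n$ homomorphisms. Combining this with the earlier product decomposition completes the proof.
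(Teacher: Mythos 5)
Your proof is correct, but it takes a genuinely different route from the paper. The paper's formal proof (Appendix~\ref{appendix:basic-digraphs}) is algebraic: it exploits the direct-product identity $\mathsf{C}_n\otimes\mathsf{C}_n\cong n\cdot\mathsf{C}_n$ together with multiplicativity $\hom(H,A\otimes B)=\hom(H,A)\cdot\hom(H,B)$ and the observation $\hom(A,n\cdot\mathbf 1)=n^{c(A)}$, from which $\hom(A,\mathsf{C}_n)^2=\hom(A,n\cdot\mathbf 1)\cdot\hom(A,\mathsf{C}_n)$ forces $\hom(A,\mathsf{C}_n)\in\{0,n^{c(A)}\}$ in one stroke. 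Your argument instead formalizes the intuitive explanation given in the main text: decompose $A$ into connected components, pull out the factor $m$ from a connected piece mapping into $m\cdot\mathsf{C}_n$, and count homomorphisms from a connected $B$ into $\mathsf{C}_n$ by fixing a basepoint and showing that the image is forced along oriented walks, with well-definedness coming from the fact (established inside the proof of Proposition~\ref{girth-prop}) that $\gamma(B)$ divides the net length of every closed oriented walk. Both are valid; the paper's version is shorter and, because it only leans on the isomorphism $\mathsf{C}_n\otimes\mathsf{C}_n\cong n\cdot\mathsf{C}_n$, transfers verbatim to the $n$-ary cycle-like structures used in Appendix~\ref{proof-of-nary-thm}, whereas your direct count is more elementary and self-explanatory but would need the basepoint/walk argument redone in the higher-arity setting. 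One small presentational note: rather than invoking ``an easy strengthening of the argument in Lemma~\ref{girth-lemma},'' you can cite the divisibility statement $\gamma(B)\mid l(W)$ for all closed walks $W$ directly, since that is precisely what the induction in the paper's proof of Proposition~\ref{girth-prop} establishes.
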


\par 
The proof of this theorem is in Appendix \ref{appendix:basic-digraphs}. 
The proof of this theorem can be intuitively explained as follows:
In disjoint unions of directed cycles, each vertex has in-degree 1 and out-degree 1. 
For a homomorphism from $A\to m \cdot \mathsf{C}_n$, the whole map is therefore decided by the value of the map in one vertex from each component of $A$.
Since there are $m\cdot n$ ways to pick the value of a homomorphism on that initial vertex (if such a homomorphism exists), the lemma follows.
\smallskip \par 

We say that two classes $\A,\mathcal B$ of structures are separated by a query algorithm if it accepts all structures in $\A$ and rejects all structures in $\mathcal B$, or vice versa.
\par 
The importance of the above lemma is in the fact that it shows that for a given structure $m \cdot \mathsf{C}_n$ there are only two possible outcomes for $\hom(F,  m \cdot \mathsf{C}_n)$. This limits what can be done using few queries, as is shown in the following theorem:

\begin{theorem}
\label{main-theorem}
    The class $\D_n \coloneqq\Big\{2^{n-m} \cdot \mathsf{C}_{2^{m}} : 0\leq m \leq n  \hspace{6pt}\text{and}\hspace{6pt} m \text{ is even }\Big\}$ can be separated from the class 
    $\D_n'\coloneqq\Big\{2^{n-m} \cdot \mathsf{C}_{2^m} : 0\leq m \leq n  \hspace{6pt}\text{and}\hspace{6pt} m \text{ is odd }\Big\}$ by a non-adaptive left $k$-query algorithm over $\N$ if and only if $k\geq n$.
\end{theorem}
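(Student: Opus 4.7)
The plan is to use Lemma~\ref{hom-lemma} to reduce every non-adaptive query to an essentially threshold-shaped answer on the family $\{A_0,\dots,A_n\}$, where $A_m \coloneqq 2^{n-m}\cdot\mathsf{C}_{2^m}$, and then to count how many such thresholds are needed. Applied to an arbitrary digraph $F$ and target $A_m$, Lemma~\ref{hom-lemma} gives
\begin{align*}
\hom(F, A_m) = \begin{cases} 2^{n\cdot c(F)} & \text{if } 2^m \mid \gamma(F),\\ 0 & \text{otherwise.}\end{cases}
\end{align*}
If $\gamma(F) = 0$ then $2^m\mid 0$ for every $m$, so the query is constant on the whole family and useless for separating $\D_n$ from $\D_n'$. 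Otherwise, writing $s(F)\coloneqq v_2(\gamma(F))$ for the $2$-adic valuation, the condition $2^m\mid \gamma(F)$ becomes $m\le s(F)$, so $F$ takes a single positive value on $A_0,\dots,A_{s(F)}$ and vanishes on $A_{s(F)+1},\dots,A_n$.

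For the sufficiency direction I would simply take $F_j \coloneqq \mathsf{C}_{2^j}$ for $j = 0, 1, \dots, n-1$; since $\gamma(\mathsf{C}_{2^j}) = 2^j$ and $c(\mathsf{C}_{2^j}) = 1$, the $j$th entry of the query tuple of $A_m$ is $2^n$ if $m \le j$ and $0$ otherwise. The resulting $n+1$ tuples for $A_0,\dots,A_n$ form a strict descending staircase from $(2^n,\dots,2^n)$ to $(0,\dots,0)$, so in particular they are pairwise distinct; letting $X$ be the set of tuples associated with $A_m$ for even $m$ then yields a non-adaptive left $n$-query algorithm that separates $\D_n$ from $\D_n'$.

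For necessity I would argue by pigeonhole over the cut points $s(F_i)$. Suppose $(F_1,\dots,F_k,X)$ separates $\D_n$ from $\D_n'$. For each $j\in\{0,\dots,n-1\}$ the two structures $A_j$ and $A_{j+1}$ have indices of opposite parity and therefore lie on different sides of the separation, so some $F_i$ must produce different answers on them; by the threshold description above this forces $s(F_i) = j$. The $n$ required values $0,1,\dots,n-1$ are pairwise distinct and each query realizes at most one threshold, so $k\ge n$.

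The main obstacle is getting the threshold reformulation right: one has to notice that any query with $\gamma(F) = 0$ or with $s(F) \ge n$ is constant on $\{A_0,\dots,A_n\}$ and so contributes nothing, and that every other query acts on this family through nothing more refined than the single integer $s(F)\in\{0,\dots,n-1\}$. Once that structural observation is established, both directions collapse to routine counting: sufficiency realizes each of the $n$ possible thresholds by a concrete cycle $\mathsf{C}_{2^j}$, and necessity observes that separating each of the $n$ consecutive pairs pins down one distinct value of $s(F_i)$.
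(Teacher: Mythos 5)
Your proposal is correct and follows essentially the same route as the paper's own proof: both reduce each query $F$ to the single parameter $\nu_2(\gamma(F))$ (your $s(F)$) via Lemma~\ref{hom-lemma}, observe that distinguishing the consecutive pair $A_j, A_{j+1}$ forces a query with that exact valuation, and conclude $k\ge n$ by pigeonhole, while realizing all $n$ thresholds with $\mathsf{C}_{2^0},\dots,\mathsf{C}_{2^{n-1}}$ for sufficiency. Your write-up is if anything slightly more explicit about the degenerate cases $\gamma(F)=0$ and $s(F)\ge n$, which the paper leaves implicit.
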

\proof By Lemma \ref{hom-lemma} we have that for every digraph $F$ and every $2^{n-m} \cdot \mathsf{C}_{2^{m}} \in \D_n \cup \D_n'$:\begin{align*}
    \hom(F,2^{n-m} \cdot \mathsf{C}_{2^{m}}) 
    &= \begin{cases}
        0 &\If 2^m \nmid \gamma(F) \\
        (2^n)^{c(F)} &\If 2^m \mid \gamma(F)
    \end{cases} \\
        &=\begin{cases}
        0 &\If  \nu_2(\gamma(F)) <m \\
        (2^n)^{c(F)} &\If \nu_2(\gamma(F)) \geq m
    \end{cases}
\end{align*}
where $\nu_2(k)$ denotes the largest positive integer $r$ such that $2^r \mid k$ if $k\neq 0$, if $k = 0$ then $\nu_2(k) = + \infty$.
It follows from this that to have $\hom(F,2^{n-m} \cdot \mathsf{C}_{2^{m}}) \neq \hom(F,2^{n-(m+1)} \cdot \mathsf{C}_{2^{m+1}})$ we must have $\nu_2(\gamma(F)) = m+1$. To distinguish between $2^{n-m} \cdot \mathsf{C}_{2^{m}}$ and $2^{n-(m+1)} \cdot \mathsf{C}_{2^{m+1}}$ for every $m \in \{0, \ldots, n-1\}$ the algorithm then needs to query some $F_m$ with $\nu_2(\gamma(F_m)) = m$ for each $m \in \{0,\ldots, n-1\}$. The algorithm, therefore, needs at least $n$ queries to separate the classes. 
It is clear that $n$ non-adaptive queries suffice: the algorithm can query $\mathsf{C}_{2^m}$ for each $m \in \{0,\ldots, n-1\}$. Every structure in $\D_n \cup \D_n'$ yields a unique outcome from the collection of these queries, and thus they can be classified correctly. \par
It is worth noting that homomorphism existence queries suffice here, and hence
$\D_n$ and $\D'_n$ are already separated by a non-adaptive left $n$-query algorithm over $\B$.
\qed 

\begin{corollary}
    \label{intermediate-corollary}
    The class $\D_n$ can be separated from the class 
    $\D_n'$ by an adaptive left $k$-query algorithm over $\N$ if and only if $k\geq\log (n+1)$.

    \end{corollary}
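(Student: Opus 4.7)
The plan is to reduce adaptive querying on inputs from $\D_n\cup\D_n'$ to a binary decision tree over the parameter $m\in\{0,1,\ldots,n\}$. By Lemma \ref{hom-lemma}, for any digraph $F$ and any $m$, the value $\hom(F,\, 2^{n-m}\cdot\mathsf{C}_{2^m})$ is either $0$ (when $m>\nu_2(\gamma(F))$) or $(2^n)^{c(F)}$ (when $m\leq \nu_2(\gamma(F))$). Since the algorithm picks $F$ and hence knows $c(F)$ in advance, each query effectively reveals a single bit, namely the comparison $m\leq t$, where $t=\nu_2(\gamma(F))\in\N\cup\{+\infty\}$. Moreover every threshold $t\in\N$ is realized by the query $\mathsf{C}_{2^t}$, which has $\gamma=2^t$.

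For the lower bound, I would fix an adaptive left $k$-query algorithm $G$ separating $\D_n$ from $\D_n'$, restrict attention to its computation tree on inputs in $\D_n\cup\D_n'$, and observe that this restricted tree has branching factor at most $2$, hence at most $2^k$ leaves. The key step is to show by induction on depth that at every node $v$ the set $I_v\subseteq\{0,\ldots,n\}$ of those $m$ whose structure $2^{n-m}\cdot\mathsf{C}_{2^m}$ reaches $v$ is a (possibly empty) interval of consecutive integers. At the root $I_v=\{0,\ldots,n\}$, and at an internal node querying some $F$ with threshold $t$, $I_v$ splits into $\{m\in I_v:m\leq t\}$ and $\{m\in I_v:m>t\}$, both intervals. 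Each leaf issues one YES/NO verdict, so for $G$ to correctly separate the classes, each leaf-interval must have constant parity. Since any interval of two or more consecutive integers contains both parities, each leaf-interval contains at most one element, forcing at least $n+1$ distinct leaves and hence $2^k\geq n+1$, i.e., $k\geq\log(n+1)$.

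For the upper bound, binary search on $m$ suffices: with $\lceil\log(n+1)\rceil$ adaptive queries of the form $\mathsf{C}_{2^t}$ the algorithm pins $m$ down exactly and returns $\mathsf{YES}$ iff $m$ is even. The main substantive step is the interval-preservation invariant; the rest follows immediately from the threshold structure of Lemma \ref{hom-lemma}, so I do not anticipate further obstacles.
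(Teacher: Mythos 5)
Your proof is correct, and it takes a genuinely different route for the lower bound. The paper observes that on inputs from $\D_n\cup\D_n'$ each query has only two possible outcomes, so the computation tree has at most $2^k-1$ internal nodes; it then converts the adaptive algorithm into a non-adaptive $(2^k-1)$-query algorithm (by querying every structure appearing anywhere in the tree) and invokes the already-proved Theorem~\ref{main-theorem} to get $2^k-1\geq n$. You instead argue directly on the decision tree: you identify that each query acts as a threshold test $m\leq \nu_2(\gamma(F))$, establish the interval invariant for the sets $I_v$ at every node, and conclude that each nonempty leaf-interval must be a singleton (since it must have constant parity), so there are at least $n+1$ leaves and hence $2^k\geq n+1$. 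The two bounds coincide, and both upper-bound arguments are the same binary search. The paper's route is shorter because it reuses Theorem~\ref{main-theorem}; yours is self-contained and makes the decision-tree structure explicit via the interval invariant, at the cost of re-deriving the combinatorial content of Theorem~\ref{main-theorem} for this particular family.
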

\proof 
For any digraph $H \in \D_n\cup \D_n'$, there are only two possible outcomes for every query $\hom(F, H)$ (as explained above). Therefore, an adaptive left $k$-query algorithm over $\N$ that separates $\D_n$ from $\D_n'$ queries at most 
\[2^0 + 2^1 + \ldots + 2^{k-1} = 2^k -1\]
different structures in all of its possible computation paths on inputs from $\D_n\cup \D_n'$. It can thus be translated into a non-adaptive left $(2^k-1)$-query algorithm over $\N$ separating the two classes. By Theorem \ref{main-theorem} it follows that such a non-adaptive left $(2^k-1)$-query algorithm over $\N$ exists if and only if $2^k-1 \geq n$, so $k\geq \log (n+1)$. Thus, we have shown that the classes $\D_n$, $\D_n'$ can only be separated by an adaptive left $k$-query algorithm over $\N$ if $k\geq \log(n+1)$.
It is also clear that $k\geq \log(n+1)$ suffices. The algorithm can use queries of the form $\hom(\mathsf{C}_{2^{r}}, 2^{n-m} \cdot \mathsf{C}_{2^m})$ to binary search for the $m$ of an input structure of the form $2^{n-m} \cdot \mathsf{C}_{2^m}$, and use that to classify it correctly.
\qed

Let $\mathfrak N_k$ denote the set of classes that are decided by a non-adaptive left $k$-query algorithm over $\N$, and $\mathfrak A_k$ denote the set of classes that are decided by an adaptive left $k$-query algorithm over $\N$.
Theorem \ref{main-theorem} and Corollary \ref{intermediate-corollary} show, among other things, that \begin{align*}
    \mathfrak N_1 \subsetneq \mathfrak N_2 \subsetneq \mathfrak N_3 \subsetneq\ldots
    \quad\text{and}\quad
     \mathfrak A_1 \subsetneq \mathfrak A_2 \subsetneq \mathfrak A_3 \subsetneq \ldots
\end{align*}

Using Corollary \ref{intermediate-corollary}, we can also prove an even stronger result:
\begin{theorem}
    \label{better-corollary}
    Let $\C$ be the class of digraphs whose smallest directed cycle (if it exists) has length that is an even power of two. Then $\C$ is not decided by an adaptive left $k$-query algorithm over $\N$ for any $k$.
\end{theorem}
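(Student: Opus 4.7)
The plan is to reduce to Corollary \ref{intermediate-corollary} by observing that the class $\C$ from the theorem statement already separates $\D_n$ from $\D_n'$ for every $n$. If a single adaptive algorithm with a fixed budget $k$ could decide $\C$, then for all $n$ simultaneously it would provide a $k$-query separation of $\D_n$ from $\D_n'$, which for $n$ large enough contradicts the logarithmic lower bound $k \geq \log(n+1)$.

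In more detail, first I would check the membership of the relevant structures in $\C$. For every $m$ with $0 \leq m \leq n$, the digraph $2^{n-m} \cdot \mathsf{C}_{2^m}$ is a disjoint union of copies of $\mathsf{C}_{2^m}$, so its smallest directed cycle has length exactly $2^m$. Hence $2^{n-m}\cdot \mathsf{C}_{2^m} \in \C$ if and only if $m$ is odd, i.e.\ if and only if $2^{n-m}\cdot \mathsf{C}_{2^m} \in \D_n'$. Thus $\D_n' \subseteq \C$ and $\D_n \cap \C = \varnothing$, so any query algorithm that decides $\C$ automatically separates $\D_n$ from $\D_n'$.

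Next, suppose for contradiction that $\C$ is decided by an adaptive left $k$-query algorithm $G$ over $\N$ for some fixed $k$. By the previous paragraph, $G$ separates $\D_n$ from $\D_n'$ for every $n$. Pick $n$ with $\log(n+1) > k$, for instance $n = 2^k$. Then Corollary \ref{intermediate-corollary} asserts that every adaptive left query algorithm over $\N$ that separates $\D_n$ from $\D_n'$ must use at least $\log(n+1) > k$ queries in the worst case, contradicting the assumption on $G$.

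I do not expect any real obstacle here: the content has already been isolated in Lemma \ref{hom-lemma} and Corollary \ref{intermediate-corollary}. The only thing to be mildly careful about is the quantifier order, namely that $G$ is a single algorithm with a uniform query bound $k$ that must work for inputs of arbitrary size, and hence must in particular succeed on $\D_n \cup \D_n'$ for every $n$ using the same $k$. That is exactly what makes the reduction to Corollary \ref{intermediate-corollary} go through, since there the lower bound grows with $n$.
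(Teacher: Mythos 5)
Your proof is correct and follows essentially the same route as the paper: observe that $\C$ separates $\D_n$ from $\D_n'$ for every $n$, and then invoke Corollary \ref{intermediate-corollary} with $n = 2^k$ to contradict the existence of a $k$-query algorithm. In fact you are slightly more careful than the paper here: the paper's proof asserts $\bigcup_{n>0}\D_n \subseteq \C \subseteq (\bigcup_{n>0}\D_n')^c$, which has the two families swapped (since $m$ even gives smallest cycle length $2^m$, an \emph{even} power of two, so $\D_n \cap \C = \varnothing$ and $\D_n' \subseteq \C$ as you say), but this slip is harmless since the argument only needs that $\C$ separates the two families, which holds either way.
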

\proof Note that we have 
\[\bigcup_{n>0} \D_n \subseteq \C \subseteq (\bigcup_{n>0} \D_n')^c,\]
where $X^c$ denotes the set theoretic complement of a set $X$.
A given adaptive left $k$-query algorithm over $\N$ cannot separate $\D_{2^k}$ from $\D_{2^k}'$, by Corollary \ref{intermediate-corollary}. To decide $\C$ it must in particular separate these classes, therefore the algorithm can not decide $\C$. Thus it is clear that $\C$ is not decided by any adaptive left $k$-query algorithm over $\N$ for any $k$.\qed

\smallskip\par 
Using the same method it can be shown that many other classes are not decided by an adaptive left $k$-query algorithm over $\N$ for any $k$. In fact, to show that a class is not decided by such an algorithm it suffices to show that it separates $\D_n$ and $\D_n'$ for infinitely many $n$.
Another example of such a class is the class of digraphs whose number of components is an even power of two.
\begin{note*}
    As we saw in Example \ref{example:decide-every-struct-lqa}, every class is decided by an adaptive left unbounded query algorithm over $\N$.
    Theorem \ref{better-corollary} therefore also shows that unboundedness increases the expressive power of adaptive left query algorithms over $\N$, i.e. $\bigcup_{k} \mathfrak A_k \subsetneq \mathfrak{A}_{unb}$ where $\mathfrak{A}_{unb}$ denotes the class of structures decided by an adaptive left unbounded query algorithm over $\N$. \looseness=-1
\end{note*}

The preceding results show that there exist classes of digraphs that adaptive left bounded query algorithms over $\N$ do not decide. It follows immediately that for every signature containing a binary relation, there exists a class of structures of that signature that left bounded query algorithms over $\N$ cannot decide. For signatures containing only unary relations, it can be shown that every class can be decided with an adaptive left bounded query algorithm over $\N$
(see Theorem \ref{thm:unary-is-decidable}).
What about signatures that contain some $n$-ary relation with $n>2$ and no binary relation? 
The following theorem answers this question.
\begin{restatable}{theorem}{nAryBetterCorollary}
    \label{n-ary-better-corollary}
    For every $n\geq 2$ and every signature $\tau$ containing an $n$-ary relation, there exists a class of structures with signature $\tau$ that is not decided by an adaptive left $k$-query algorithm over $\N$ for any $k$.
\end{restatable}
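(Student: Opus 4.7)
The plan is to reduce to Theorem \ref{better-corollary} by faithfully encoding digraphs as $\tau$-structures using a chosen $n$-ary relation symbol $R\in\tau$. For each digraph $A=(V,E)$, define the $\tau$-structure $\hat A$ with universe $V$, relation $R^{\hat A}\coloneqq\{(u,v,v,\ldots,v):(u,v)\in E\}$ (the element $u$ in the first position and $v$ in the remaining $n-1$ positions), and all other relations empty. Letting $\C_0$ denote the non-decidable digraph class from Theorem \ref{better-corollary}, the candidate witness class for the theorem is $\hat\C\coloneqq\{B\in\mathsf{FIN}(\tau):B\cong\hat A\text{ for some }A\in\C_0\}$.

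The key lemma to prove is a query-translation: for every $\tau$-structure $F$ there is a digraph $F^*$ with $\hom(F,\hat A)=\hom(F^*,A)$ for every digraph $A$. If $F$ has any tuple in a non-$R$ relation, then $\hom(F,\hat A)=0$ and one can simply handle this case directly in the simulation below. Otherwise, let $\sim$ be the equivalence relation on the universe of $F$ generated by identifying $t_i$ and $t_j$ for all $i,j\in\{2,\ldots,n\}$ and all $\mathbf t\in R^F$, and set $F^*\coloneqq(V(F)/\sim,\{([t_1],[t_2]):\mathbf t\in R^F\})$. The crux is that a homomorphism $\varphi:F\to\hat A$ must satisfy $\varphi(t_2)=\cdots=\varphi(t_n)$ and $(\varphi(t_1),\varphi(t_2))\in E$ for every $\mathbf t\in R^F$, so $\varphi$ factors uniquely as $\psi\circ\pi$ for the quotient map $\pi$ and a homomorphism $\psi:F^*\to A$, and every such $\psi$ lifts back to a homomorphism of $F$ into $\hat A$. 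This yields the claimed bijection.

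The theorem then follows by contradiction. Suppose $G$ is an adaptive left $k$-query algorithm over $\N$ that decides $\hat\C$. I construct an adaptive left $k$-query algorithm $G'$ for $\C_0$ by mirroring $G$'s decision tree: on digraph input $A$, whenever $G$ queries a structure $F$, the algorithm $G'$ returns $0$ to $G$ if $F$ has any non-$R$ tuple, and otherwise queries $\hom(F^*,A)$ and returns the result. By the key lemma $G$ receives exactly the values it would see on input $\hat A$, so $G'$ accepts $A$ iff $G$ accepts $\hat A$. Since the encoding is faithful with respect to isomorphism (any isomorphism $\hat A\to\hat{A'}$ preserves $R$-tuples and therefore restricts to a digraph isomorphism $A\to A'$), we have $\hat A\in\hat\C$ iff $A\in\C_0$, so $G'$ decides $\C_0$ using at most $k$ queries, contradicting Theorem \ref{better-corollary}.

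The main technical point is verifying the homomorphism correspondence between $F$ and $F^*$, i.e.\ confirming that the quotient by $\sim$ captures exactly the constraints that the encoding $(u,v,v,\ldots,v)$ imposes on homomorphisms into $\hat A$. The rest is routine bookkeeping, and in particular the argument handles $n=2$ (where the encoding is essentially the identity on digraphs) and $n\geq 3$ uniformly.
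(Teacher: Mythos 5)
Your proof is correct, and it takes a genuinely different route from the paper. The paper promotes the digraph argument to arity $n$ directly: it introduces $n$-ary cycle-like structures $\mathsf{C}_d^n$, shows via a ``flattening'' $\mathfrak{A} \mapsto \mathfrak{A}^*$ to a multi-binary signature that $\mathfrak{A} \to \mathsf{C}_d^n$ iff $d \mid \gamma(\mathfrak{A})$, re-derives the homomorphism count formula $\hom(\mathfrak{A}, m\cdot\mathsf{C}_d^n) \in \{0, (md)^{c(\mathfrak{A})}\}$ (the $n$-ary analogue of Lemma~\ref{hom-lemma}), and then reruns the whole chain Theorem~\ref{main-theorem} $\to$ Corollary~\ref{intermediate-corollary} $\to$ Theorem~\ref{better-corollary}. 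You instead reduce the $n$-ary theorem to the binary one as a black box: you pad digraph edges into $n$-tuples $(u,v,\dots,v)$, establish a bijection $\hom(F,\hat A) \cong \hom(F^*,A)$ by quotienting the query structure $F$ along the positions that the padding forces to coincide, and use this to simulate any adaptive left $k$-query algorithm for $\hat\C$ by one for $\C_0$. Your query-translation lemma is sound — the quotient correctly captures exactly the constraints the encoding imposes, the factorization $\varphi = \psi\circ\pi$ is a bijection, and the observation that non-$R$ facts force $\hom(F,\hat A)=0$ handles extra relation symbols in $\tau$ cleanly — and the faithfulness of $A \mapsto \hat A$ under isomorphism makes the reduction decide $\C_0$. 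What each approach buys: yours avoids any fresh homomorphism counting and shows that hardness of left $k$-query algorithms transfers along a padding encoding, a reusable reduction technique; the paper's produces concrete $n$-ary hard instances (the analogues of $\D_n$ and $\D'_n$ built from $\mathsf{C}_d^n$) and the quantitative bounds (Theorem~\ref{main-theorem}-style lower bounds) in the $n$-ary setting for free, at the cost of re-proving the counting lemma.
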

The class we construct in the proof of this theorem is similar to the class from Theorem \ref{better-corollary}, except we replace the directed cycles with $n$-ary ``cycle-like'' structures.

\subsection{Adaptive versus non-adaptive left $k$-query algorithms}
\label{sec:adaptive vs non-adaptive}
\smallskip\par 
We now turn to a comparison of the expressive power of adaptive and non-adaptive left query algorithms over $\N$.
It follows from Theorem \ref{main-theorem} and Corollary \ref{intermediate-corollary} that $\mathfrak{N}_{2^k} \nsubseteq \mathfrak{A}_k$. We also trivially have the relation $\mathfrak N_k \subseteq \mathfrak A_k$. This raises the question of where the bound lies, that is, for which $l$ we have $\mathfrak N_l \subseteq \mathfrak A_k$ but $\mathfrak N_{l+1} \nsubseteq \mathfrak A_k$? This is answered by the following theorem:

\begin{theorem}
\label{adaptive-not-better}
    For each $k$, there exists a class of structures that is decided by a non-adaptive left $k$-query algorithm over $\N$ but not by an adaptive left $(k-1)$-query algorithm over $\N$.
\end{theorem}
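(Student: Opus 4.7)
The plan is to construct, for each $k$, an explicit class $\C_k$ in $\mathfrak N_k \setminus \mathfrak A_{k-1}$. Let $\tau_k$ be the signature with $k$ unary predicates $V_1, \ldots, V_k$. For $b \in \{0,1\}^k$, let $A_b$ denote the $\tau_k$-structure on $\{v_1, \ldots, v_k\}$ in which $v_i \in V_j^{A_b}$ iff $i = j$ and $b_i = 1$; let $F_i$ denote the one-vertex $\tau_k$-structure whose vertex belongs to $V_i$ and to no other predicate, so that $\hom(F_i, A_b) = b_i$. Putting $X \coloneqq \{b \in \{0,1\}^k : b_1 + \cdots + b_k \text{ is even}\}$, define
\[\C_k \coloneqq \{A \in \mathsf{FIN}(\tau_k) : (\hom(F_1, A), \ldots, \hom(F_k, A)) \in X\}.\]
The non-adaptive $k$-query algorithm $((F_1, \ldots, F_k), X)$ decides $\C_k$ by construction, so $\C_k \in \mathfrak N_k$.

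The lower bound rests on the following product formula: for every $\tau_k$-structure $F$ and every $b \in \{0,1\}^k$,
\[ \hom(F, A_b) = \alpha(F) \cdot \prod_{i \in T(F)} b_i, \]
where $T(F) \subseteq \{1, \ldots, k\}$ is the set of indices $i$ such that some vertex of $F$ lies in $V_i$, and $\alpha(F) \in \N$ depends only on $F$ (explicitly, $\alpha(F) = k^{n_\emptyset(F)}$ when no vertex of $F$ belongs to two distinct predicates $V_i, V_j$, with $n_\emptyset(F)$ the number of vertices of $F$ in no predicate; otherwise $\alpha(F) = 0$). The proof is a case analysis on the unary predicates of each vertex of $F$: each vertex maps independently into $A_b$ and contributes a factor of $k$, $b_i$, or $0$. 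Consequently, on inputs of the form $A_b$, every query takes at most two distinct values, namely $0$ or $\alpha(F)$.

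Now suppose for contradiction that $G$ is an adaptive left $(k-1)$-query algorithm deciding $\C_k$. Follow the branch in $G$'s computation tree obtained by answering $0$ to every query $F$ with $T(F) \neq \emptyset$ (queries with $T(F) = \emptyset$ have the forced response $\alpha(F)$). This branch has length at most $k-1$; let $T_1, \ldots, T_m$ be the associated sets $T(F_j)$. The input $A_{0^k}$ is consistent with this branch; moreover, for $j^* \in \{1, \ldots, k\}$, writing $b^{(j^*)}$ for the vector with $b^{(j^*)}_{j^*} = 1$ and $b^{(j^*)}_i = 0$ for $i \neq j^*$, the input $A_{b^{(j^*)}}$ is consistent with this branch provided no $T_j$ equals $\{j^*\}$. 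Since at most $k-1$ of the $T_j$'s can be singletons, some valid $j^*$ exists. Then the computation paths of $G$ on $A_{0^k}$ and $A_{b^{(j^*)}}$ coincide, so $G$ gives the same answer on both --- yet $A_{0^k} \in \C_k$ (even parity) while $A_{b^{(j^*)}} \notin \C_k$ (odd parity), a contradiction. The main obstacle is verifying the product formula cleanly; once it is in hand, the remainder is essentially the classical decision-tree argument that $k-1$ AND-queries cannot compute the parity of $k$ bits.
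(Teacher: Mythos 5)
Your proof is correct, but it takes a genuinely different route from the paper. The paper's proof stays within a \emph{single fixed signature} (digraphs) for all $k$: it picks $2k$ distinct primes $p_1,\ldots,p_{2k}$, sets $q_i=\prod_{j\neq i}p_j$, and shows (via Lemma~\ref{hom-lemma}, the formula for $\hom(A,m\cdot\mathsf C_n)$) that an adaptive algorithm running on the $2k$ structures $p_j\cdot\mathsf C_{q_j}$ can eliminate at most one candidate per query; after $k-1$ queries at least two candidates $p_i\cdot\mathsf C_{q_i}$, $p_j\cdot\mathsf C_{q_j}$ with $i\le k<j$ remain indistinguished. You instead tailor the signature to $k$ (taking $k$ unary predicates), encode Boolean vectors as structures $A_b$, establish a clean product formula $\hom(F,A_b)=\alpha(F)\cdot\prod_{i\in T(F)}b_i$, and then run the classical decision-tree lower bound for parity of $k$ bits. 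The underlying adversary mechanism is the same --- on the relevant family of inputs each query is two-valued, so at most one ``coordinate'' is pinned down per query --- but your ``hard class'' is parity over unary structures, whereas the paper's is a membership test over disjoint unions of cycles. Your version is arguably more elementary, since it avoids the cycle-homomorphism machinery and reduces to a textbook Boolean query complexity fact; what it gives up is that the signature grows with $k$, so it does not directly yield the single-signature hierarchy $\mathfrak N_{k+1}\nsubseteq\mathfrak A_k$ over digraphs that the paper needs for Figure~\ref{fig:summary-left-N}. (Also worth noting: with $k$ unary predicates, Theorem~\ref{thm:unary-is-decidable} guarantees \emph{some} bounded algorithm always exists for $\C_k$; your bound of $k$ non-adaptive queries versus strictly more than $k-1$ adaptive ones is sharp, which is a pleasant feature of the parity construction.)
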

\proof 
Let $p_1, \ldots, p_{2k}$ be distinct primes and write $P \coloneqq \prod_{i = 1}^{2k} p_i$ and $q_i \coloneqq P/p_i$.
Define a non-adaptive left $k$-query algorithm $((F_1, \ldots, F_k), X)$ with \begin{align*}
    F_i \coloneqq p_i \cdot \mathsf{C}_{q_i}
    \hspace{6 pt} \text{and} \hspace{6 pt}
    X \coloneqq \{ (P\cdot \delta_{1,j}, \ldots, P\cdot \delta_{k,j}) : j \in \{1,\ldots, k\}\}
\end{align*}
where $\delta_{i,j} = \begin{cases}
    1 \If i =j \\
    0 \If i\neq j
\end{cases}$ is the Kronecker delta.
Let $\C$ be the class decided by the above query algorithm. 
It follows from Lemma \ref{hom-lemma} that \begin{align*}
    \hom(F_i , p_j \cdot \mathsf{C}_{q_j}) = P \cdot \delta_{i,j},
\end{align*}
so for $j \in \{1,\ldots, 2k\}$ we have $p_j \cdot \mathsf{C}_{q_j} \in \C$ if and only if $j\leq k$. We will show that $\C$ cannot be decided by an adaptive left $(k-1)$-query algorithm over $\N$. 

Let $G$ be a given adaptive left $(k-1)$-query algorithm over $\N$. 
Using an adversarial argument, we find two structures that are classified in the same way by $G$ but differently by $\C$. We do this by defining sets $\A_n$ such that $G$ has the same computation path on all elements of $\A_n$ up to stage $n$.
We define $\A_0 \coloneqq \{ p_j \cdot \mathsf{C}_{q_j} : j \in \{1,\ldots, 2k\}\}$. 
Now let $\A_n\subseteq \A_0$ be given such that $G$ has the same computation path $\sigma$ on all elements of $\A_n$ up to stage $n$.
Write $F \coloneqq G(\sigma)$.
We have two cases: \begin{itemize}
    \item If $\hom(F, A)$ is the same for all elements $A$ of $\A_n$ we can simply define $\A_{n+1} \coloneqq \A_n$ and the computation path is the same up to stage $n+1$.
    \item Assume otherwise. 
    Note that for any element $p_j \cdot \mathsf{C}_{q_j}$ of $\A_0$ we have by Lemma \ref{hom-lemma} that \begin{align*}
        \hom(F,p_j \cdot \mathsf{C}_{q_j}) = \begin{cases}
            0 &\If q_j \nmid \gamma(F) \\
            P^{c(F)} &\If q_j \mid \gamma(F).
        \end{cases}
    \end{align*}
    So there are only two possible outcomes. Moreover, if $q_i, q_j \mid \gamma(F)$ for $i \neq j$, then $P \mid \gamma(F)$ and thus $q_r \mid \gamma(F)$ for all $r \in \{1, \ldots, 2k\}$ and we are in the former case. Thus, there is an $i$ such that $q_j \mid \gamma(F)$ if and only if $j = i$. We can thus define $\A_{n+1} \coloneqq \A_n \setminus\{p_i \cdot \mathsf{C}_{q_i}\}$. It is clear that $\hom(F, A) = 0$ for all elements $A$ of $\A_{n+1}$, so they share the same computation path up to stage $n+1$.
\end{itemize}
The above construction gives us a set $\A_{k-1}$ such that $G$ has the same computation path up to stage $k-1$ on all elements of it. Clearly we also have that $\abs{\A_{k-1}} \geq \abs{\A_0}- (k-1) = k+1$. This means that there must exist $p_i \cdot \mathsf{C}_{q_i}, p_j \cdot \mathsf{C}_{q_j} \in \A_{k-1}$ such that $i \leq k$ and $j\geq k+1$. Thus, we have found elements that $G$ classifies in the same way but $\C$ classifies differently. We have thus shown that $G$ does not decide $\C$.
\qed
\smallskip \par 

The above theorem describes a class of structures where adaptiveness does not help at all when trying to decide the class; therefore, we have that $\mathfrak N_{k+1} \nsubseteq \mathfrak A_k$ for each $k$. This is interesting as in many cases, adaptiveness greatly reduces the number of queries needed.
\par 
In this regard, we now prove that there is an adaptive left 2-query algorithm over $\N$ that decides a class that no non-adaptive left query algorithm over $\N$ decides. 
\begin{theorem}
    \label{A2 not non-adaptive}
    The class of all digraphs that contain a directed cycle is decided by an adaptive left 2-query algorithm over $\N$, but not by any non-adaptive left $k$-query algorithm over $\N$, for any $k$.
\end{theorem}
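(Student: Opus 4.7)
The positive half of the statement is exactly Example~\ref{example:2-adaptive-to-decide-directed-cycle}: that algorithm first queries $\hom_\N((\{0\},\varnothing), A)$ to read off $n := |A|$, then queries $\hom_\N(\mathsf{P}_n, A)$, and accepts iff the answer is positive. A directed walk of length $n$ in an $n$-vertex digraph must revisit a vertex and therefore witnesses a directed cycle; conversely, any digraph with a directed cycle contains arbitrarily long closed walks. So the positive half requires nothing new.

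For the negative half, my plan is to deduce it from Theorem~\ref{Balder main theorem} (the ten~Cate et~al.\ characterization). Write $\C$ for the class of digraphs containing a directed cycle. It suffices to verify two hypotheses: that $\C$ is closed under homomorphic equivalence, and that $\C$ is not FO-definable.

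First, I would check closure under $\leftrightarrow$. The key reformulation is: $A \in \C$ if and only if $\mathsf{C}_n \to A$ for some $n \geq 1$. One direction is immediate (a directed cycle in $A$ is the image of $\mathsf{C}_n$); for the other, a homomorphism from $\mathsf{C}_n$ yields a closed directed walk in $A$, and any closed directed walk in a finite digraph contains a simple directed cycle as a sub-walk. Closure under $\leftrightarrow$ then follows by composing homomorphisms: if $A \leftrightarrow B$ and $\mathsf{C}_n \to A$, then $\mathsf{C}_n \to B$.

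Second, I would argue that $\C$ is not FO-definable, a classical fact which I would justify by a standard Ehrenfeucht--Fra\"iss\'e argument: for every $k$, for $N$ large enough, Duplicator wins the $k$-round EF game on $(\mathsf{C}_N, \mathsf{P}_N)$, as both digraphs are locally long directed lines; yet $\mathsf{C}_N \in \C$ and $\mathsf{P}_N \notin \C$, so no FO sentence separates them. Equivalently, one can invoke Gaifman or Hanf locality. With both hypotheses in place, the contrapositive of Theorem~\ref{Balder main theorem} delivers the conclusion: $\C$ is not decided by any non-adaptive left $k$-query algorithm over $\N$. I expect the only delicate point to be the closure-under-$\leftrightarrow$ step, which rests on the equivalence ``contains a directed cycle $\Leftrightarrow$ receives a homomorphism from some $\mathsf{C}_n$''; once that is in hand, the rest is an immediate invocation of existing results.
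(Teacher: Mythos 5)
Your proposal is correct and follows the same approach as the paper's proof: both invoke Example~\ref{example:2-adaptive-to-decide-directed-cycle} for the positive half, and both handle the negative half by verifying closure under homomorphic equivalence and non-FO-definability (via EF games) and then applying Theorem~\ref{Balder main theorem}. You simply spell out in slightly more detail the reformulation ``$A$ contains a directed cycle iff $\mathsf{C}_n \to A$ for some $n$'' and the EF-game on $(\mathsf{C}_N, \mathsf{P}_N)$, which the paper leaves as standard facts.
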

\proof Note that directed cycles are preserved by homomorphisms. It then follows that the class is closed under homomorphic equivalence. It is also well known that this class is not first-order definable; this is a standard application of the Ehrenfeucht-Fra\"{i}sse method, which can be found in textbooks on finite model theory \cite{Ebbinghaus1995}.
It then follows from Theorem \ref{Balder main theorem} that the class is not decided by a non-adaptive left $k$-query algorithm over $\N$, for any $k$.
\par 
In example \ref{example:2-adaptive-to-decide-directed-cycle} we saw an adaptive left 2-query algorithm over $\N$ that decides the class. This completes the proof.
\qed
\smallskip\par 
    The preceding results now give the relations between $\mathfrak N_k$ and $\mathfrak A_k$ shown in Figure~\ref{fig:summary-left-N}.

\subsection{Adaptive right query algorithms over $\N$}
\label{right-algs-over-N}

In their paper, Chen et al. also prove a result about adaptive \emph{right} query algorithms on simple graphs:

\begin{proposition}[Proposition 9.3 in \cite{chen_et_al:LIPIcs.MFCS.2022.32}]
    There exists a class of simple graphs that is not decided by an adaptive right $k$-query algorithm over $\N$ for any $k$, using only simple graphs in its queries.
\end{proposition}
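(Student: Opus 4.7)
The plan is to exhibit one concrete class of simple graphs whose members of sufficiently large order are indistinguishable by any bounded number of right homomorphism-count queries into simple graphs. A natural candidate is
\[
    \C \;=\; \{K_n : n \geq 2 \text{ and } n \text{ is even}\},
\]
where $K_n$ is the $n$-vertex clique. This is manifestly a class of simple graphs and is closed under isomorphism.

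The key ingredient is a simple homomorphism-counting observation. For any simple graph $F$ (loop-free) and any $n \geq 2$, every homomorphism $\varphi\colon K_n \to F$ must be injective: distinct vertices of $K_n$ are adjacent and $F$ has no loops, so they are sent to distinct vertices. Thus the image of $\varphi$ is a clique of size $n$ in $F$, and
\[
    \hom(K_n, F) \;=\; n!\cdot c_n(F),
\]
where $c_n(F)$ denotes the number of $K_n$-subgraphs of $F$. In particular, $\hom(K_n, F) = 0$ whenever $n > \omega(F)$, the clique number of $F$.

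The proof is then an adversarial argument. Suppose, towards a contradiction, that some adaptive right $k$-query algorithm $G$ decides $\C$ and uses only simple graphs as its queries. Trace the \emph{all-zero branch} of its computation tree $\T \subseteq \N^{<\omega}$: set $\sigma_0 = \varepsilon$, and while $G(\sigma_i)$ is a simple graph, let $F_{i+1} = G(\sigma_i)$ and $\sigma_{i+1} = \sigma_i \bullet 0$. By totality of $G$ and the $k$-query bound, this process terminates at some $\sigma_j$ with $j \leq k$ and $G(\sigma_j) \in \{\textsf{YES}, \textsf{NO}\}$. Let $M = \max(\omega(F_1), \ldots, \omega(F_j))$. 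For every $n > M$, each query $F_i$ satisfies $\hom(K_n, F_i) = 0$, so the computation of $G$ on input $K_n$ follows precisely the all-zero branch to the same leaf and returns the same verdict. Since both parities of $n$ occur infinitely often above $M$, this verdict is wrong on half of the cliques $K_n$ with $n > M$, contradicting the assumption that $G$ decides $\C$.

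The argument is essentially syntactic once the injectivity property of homomorphisms $K_n \to F$ (for simple $F$) is in hand; the only minor point requiring care is verifying that the all-zero branch is well-defined and of length at most $k$, which is immediate from totality and the $k$-bound. I would also remark that the restriction to simple-graph queries is essential: a single query to the one-vertex graph with a self-loop would return $n$ on input $K_n$, decoding the vertex count outright. So the real ``obstacle'', such as it is, lies not in the proof itself but in picking a class whose members differ in a feature (here, the order, read off by parity) that is entirely invisible to homomorphism counts into simple graphs of bounded clique number.
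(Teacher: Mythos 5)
Your proof is correct: the injectivity of homomorphisms $K_n \to F$ for loop-free $F$ forces $\hom(K_n, F) = 0$ once $n$ exceeds $\omega(F)$, so the all-zero branch of the computation tree (which one checks inductively is realized by all sufficiently large cliques, hence lies in $\T$ and has length at most $k$) gives a single verdict on $K_n$ for all $n > M$, while $\{K_n : n \text{ even}\}$ demands both verdicts infinitely often above any threshold. The present paper only cites Proposition 9.3 from Chen et al.\ without reproducing a proof, but your argument is the standard one for this result, and your closing remark correctly identifies why the restriction to simple-graph queries is essential (a single looped vertex would leak $|V|$).
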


In the general case, we again get a different result. It turns out that using looped structures in the queries changes the picture completely. 
This result was essentially proved by Wu \cite{wu2023study}, but it is not stated in this form in his work.

\begin{restatable}[From \cite{wu2023study}]{theorem}{rightQueryTheorem}
\label{right-query-theorem}
    Every class of structures can be decided by an adaptive right 2-query algorithm over $\N$.
\end{restatable}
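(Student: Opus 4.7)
The plan is a two-stage adaptive scheme. The first query extracts the cardinality $n = |A|$ of the input structure $A$, and the second query, chosen adaptively as a function of $n$, extracts the isomorphism type of $A$ among the finitely many structures of signature $\tau$ on $n$ elements; membership in $\C$ is then decided by table lookup. Note that two queries are genuinely needed, since for any single fixed $F$ the values $\hom(\cdot,F)$ take only $O(|F|^n)$ distinct values on size-$n$ inputs, while there are many more iso types of size $n$.

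For the first query, take $F_1$ to be the ``totally full'' two-element structure, with universe $\{0,1\}$ and $R^{F_1} = \{0,1\}^{\mathrm{arity}(R)}$ for every $R\in\tau$. Every function $A\to F_1$ is automatically a homomorphism, so $\hom(A,F_1) = 2^{|A|}$, and the algorithm recovers $n$.

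For the second query, enumerate the (finitely many) isomorphism types of size $n$ as $B_1,\ldots,B_{k_n}$. The key lemma to prove is: for every $n$ there exists a structure $F(n)$ such that the values $\hom(B_1,F(n)),\ldots,\hom(B_{k_n},F(n))$ are pairwise distinct. Given such an $F(n)$, the algorithm queries $\hom(A,F(n))$, reads off the unique $i$ matching the returned value, and accepts iff $B_i\in\C$. To establish the lemma, first invoke the Chaudhuri--Vardi theorem (the ``right'' analogue of Lov\'asz) to obtain, for every pair $i\neq j$, a separator $H_{ij}$ with $\hom(B_i,H_{ij})\neq \hom(B_j,H_{ij})$; then combine these finitely many separators into a single $F(n)$ whose hom-count is an injective function of the separating vector $(\hom(B_i,H_{ij}))_{i,j}$.

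The main obstacle is precisely this combination step, because $\hom(A,-)$ is multiplicative rather than additive under disjoint unions of targets once $A$ is disconnected: for $A$ with connected components $A_c$, we have $\hom(A, H\oplus H') = \prod_c\bigl(\hom(A_c,H)+\hom(A_c,H')\bigr)$. A naive base-$N$ positional encoding therefore fails on disconnected inputs. Following Wu, the resolution exploits the fact that any size-$n$ input has at most $n$ connected components: one takes $F(n)$ to be a weighted disjoint union of the $H_{ij}$'s (together with an ``anchor'' gadget controlling sizes and component counts) in which the multiplicities are chosen large and mutually coprime, e.g., as distinct primes exceeding every relevant hom-count. Then unique factorisation lets one recover the multiset of hom-values on each connected component of $A$ from $\hom(A,F(n))$, and this multiset, combined with the known cardinality $n$, determines $A$ up to isomorphism by Chaudhuri--Vardi applied componentwise.
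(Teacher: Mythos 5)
Your two-stage scheme coincides with the paper's own proof: the first query against the complete two-element structure $\mathfrak{2}_\tau$ recovers $n = |A|$ from $\hom(A, \mathfrak{2}_\tau) = 2^n$, and the second query against Wu's structure $F(n,\tau)$ (Lemma~\ref{Wu}, Theorem 5.38 of~\cite{wu2023study}) then pins down $A$ up to isomorphism among size-$n$ structures. The paper cites Wu for the existence of $F(n,\tau)$ without reproducing the proof, so your added sketch of that lemma is supplementary; there, however, the ``unique factorisation'' step is not convincing as stated. The per-component values $\hom(A_c, F(n))$ are arbitrary integers, neither prime nor pairwise coprime, so unique factorisation of the product $\hom(A, F(n)) = \prod_c \hom(A_c, F(n))$ does not by itself recover the multiset $\{\hom(A_c, F(n))\}_c$, regardless of which prime multiplicities you build into $F(n)$ as a disjoint union. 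Your high-level outline (Chaudhuri--Vardi separators, combine into one target, handle multiplicativity over components) has the right shape, but making that combination step rigorous is precisely the content of Wu's lemma and would require more care than the sketch supplies; as the paper does, it is cleanest simply to cite it.
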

This result also is in contrast to our result for left query algorithms:
the hierarchy of expressive power of adaptive \emph{right} $k$-query algorithms collapses at $k=2$, in contrast to the strict hierarchy shown in Figure \ref{fig:summary-left-N} for adaptive \emph{left} $k$-query algorithms.
The cause of this is that for right query algorithms, we have more constructions that give useful information about the homomorphism counts. In particular, we have that if $H$ is connected then \begin{align*}
    \hom(H, A \oplus B) = \hom(H, A) + \hom(H, B).
\end{align*} No construction on the left side allows for addition of the hom-vectors.

\section{Query Algorithms Over $\B$}
\label{algs-over-B}

We now move on to studying adaptive query algorithms over $\B$. 
We focus on left query algorithms, and we will comment on the 
case of right query algorithms afterwards.

\subsection{Unboundedness helps for adaptive left query algorithms}
\label{sec:unb helps B}

Every class that can be decided by an adaptive $k$-query algorithm over $\B$ can also be decided by a non-adaptive $(2^{k}-1)$-query algorithm over $\B$. This is done by querying all the structures that the algorithm can possibly query in the first $k$ steps of its run. Since there are at most $2$ branches from each node in the computation tree, this is at most $1+ 2+ 2^2+ \ldots+2^{k-1}\leq (2^{k}-1)$ queries. This means that \emph{adaptive} and \emph{non-adaptive} left bounded query algorithms over $\B$ have the same expressive power.

The question we answer below is whether every class decided by an adaptive unbounded query algorithm over $\B$ can be decided by an adaptive bounded query algorithm (and thus a non-adaptive one)? The answer is no.

\begin{theorem}
\label{leftUnboundedMoreExpressive}
    The class $\C = \{A : A \text{ is a digraph that contains a directed cycle}\}$ can be decided by an adaptive left unbounded query algorithm over $\B$, but not by an adaptive left $k$-query algorithm over $\B$, for any $k$.
\end{theorem}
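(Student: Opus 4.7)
The plan is to prove the two halves of the statement separately.

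For the positive direction, the key observations are that (i) a finite digraph $A$ contains a directed cycle iff $\mathsf{C}_n \to A$ for some $n \geq 1$, and (ii) if $A$ is acyclic then every homomorphism $\mathsf{P}_n \to A$ must be injective on vertices (a repeated vertex would produce a closed directed walk, hence a cycle), so $\mathsf{P}_n \not\to A$ whenever $n \geq \abs{A}$. The algorithm iterates $n = 1, 2, 3, \ldots$; at stage $n$ it first queries $\mathsf{P}_n \to A$ and rejects if the answer is ``no'', and otherwise queries $\mathsf{C}_n \to A$ and accepts if the answer is ``yes''. If $A$ has a shortest directed cycle of length $\ell$, then both queries succeed at stage $\ell$ and the algorithm accepts. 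If $A$ is acyclic, the path query fails at stage $n = \abs{A}$ at the latest, and the algorithm rejects. Thus the algorithm is total and decides $\C$.

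For the negative direction, I appeal to the observation at the opening of Section~\ref{algs-over-B}: every adaptive left $k$-query algorithm over $\B$ can be simulated by a non-adaptive left $(2^k-1)$-query algorithm over $\B$, by querying in advance all structures that can appear in its computation tree and then evaluating the tree on the collected answers. Hence it suffices to show that $\C$ is not decided by any non-adaptive left $k$-query algorithm over $\B$, for any $k$. This in turn reduces, via Theorem~\ref{Balder main theorem}, to checking that $\C$ is closed under homomorphic equivalence and not FO-definable. Closure is straightforward: if $A \leftrightarrow B$ and $\mathsf{C}_n \to A$, then composing with $A \to B$ yields $\mathsf{C}_n \to B$, whose image is a closed directed walk in $B$ from which a directed cycle can be extracted. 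Non-FO-definability of $\C$ is a standard Ehrenfeucht-Fra\"iss\'e argument, already invoked in the proof of Theorem~\ref{A2 not non-adaptive}.

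The only real subtlety, which I would flag as the main obstacle, is the termination analysis in the positive direction: one must guarantee that the algorithm halts on \emph{every} input, not merely on the yes-instances or the no-instances. This is precisely what forces the interleaving of path queries (which certify termination on every acyclic input) with cycle queries (which witness membership in $\C$). Once this interleaving is in place, the lower bound direction falls out as a corollary of the simulation from Section~\ref{algs-over-B} combined with Theorem~\ref{Balder main theorem}.
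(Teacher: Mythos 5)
Your proof is correct and takes essentially the same route as the paper: the same interleaved algorithm querying $\hom_\B(\mathsf{P}_n, A)$ and $\hom_\B(\mathsf{C}_n, A)$ for $n = 1, 2, \ldots$, and the same reduction of the lower bound to Theorem~\ref{Balder main theorem} via the adaptive-to-non-adaptive simulation over $\B$. The termination analysis you flag as the main subtlety is exactly the point the paper makes (acyclic inputs kill the path query at $n = \abs{A}$; cyclic inputs hit the cycle query at the shortest cycle length).
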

\proof In theorem \ref{A2 not non-adaptive} we showed that $\C$ is not decided by a non-adaptive left $k$-query algorithm over $\N$, it is thus not decided by a non-adaptive left $k$-query algorithm over $\B$.
\par 
We now provide an adaptive left unbounded query algorithm over $\B$ that decides $\C$.
The algorithm runs as follows: \par 
Let $A$ be the given input.
For $n = 1,2, \ldots$ query $\hom_\B(\mathsf{P}_n, A)$ and $\hom_\B(\mathsf{C}_n, A)$. If at any point $\hom_\B(\mathsf{P}_n, A) = 0$ then halt and output \textsf{NO}. If at any point $\hom_\B(\mathsf{C}_n, A) = 1$ then halt and output \textsf{YES}.
\smallskip \par 
To see that this algorithm decides $\C$ we note that the following equivalence holds: \begin{align*}
    A \text{ has no directed cycle }\iff 
    &\text{the lengths of directed walks in $A$ are bounded} \\&\text{by a constant $k$} \\
    \iff &\hom_\B(\mathsf{P}_n, A) = 0 \text{ for some } n.
\end{align*} Also, directed cycles are preserved under homomorphism, so $A$ has a directed cycle if and only if $\hom_\B(\mathsf{C}_n, A) =1$ for some $n$. Thus, it is clear that the algorithm halts on all inputs and produces the correct answer.
\qed

\subsection{Characterization of the expressive power of adaptive left unbounded query algorithms over $\B$}
\label{Unbounded qals over B}
We can use the language of topology to characterize the expressive power of adaptive unbounded query algorithms exactly. 
For background on topology, we refer the reader to \cite{Munkres2000Topology}.
The topology we define is generated by all ``basic observations'', where a basic observation is a class defined by the answer to a single homomorphism query.
\par 
More formally, for a structure $A \in \mathsf{FIN}(\tau)$ we define \[\up A \coloneqq \{B : A \to B\},\] which, phrased in terms of 
\emph{conjunctive queries} (CQs) is simply the set of structures that
satisfy the canonical CQ $q_A$ of $A$. 
We then let 
\begin{align*}
    \mathcal S \coloneqq \{ \up A  : A \in \mathsf{FIN}(\tau) \} \cup \{ \mathsf{FIN}(\tau) \setminus \up A : A \in \mathsf{FIN}(\tau)\}
\end{align*}
be the subbasis for our topology $\mathfrak T$. 
In other words, phrased again in terms of conjunctive queries, the open sets of the topology
are all classes of structures that can be defined by a CQ or the negation of a CQ, as well as 
all classes generated from these by closing under finite intersection and arbitrary union.
\par 

We have the following characterization of classes that are decided by adaptive left unbounded query algorithms over $\B$.

\begin{restatable}{theorem}{topologicalCharacterization}
\label{Topological Characterization}
    A class $\C \subseteq \mathsf{FIN}(\tau)$ is decided by an adaptive left unbounded query algorithm over $\B$ if and only if $\C$ is a clopen set in the topological space $(\mathsf{FIN}(\tau), \mathfrak T)$.
\end{restatable}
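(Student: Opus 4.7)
The plan is to prove the two directions separately, with the guiding principle that each node $\sigma$ in the algorithm's computation tree corresponds to the basic open cylinder of inputs that produce exactly the answer sequence $\sigma$ to the queries made along the path to $\sigma$.

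For the $(\Rightarrow)$ direction, I would suppose $\C$ is decided by an adaptive left unbounded query algorithm $G$ with tree $\T$. For each leaf $\sigma$ of $\T$, let $F_{\sigma,1}, \ldots, F_{\sigma,|\sigma|}$ be the structures queried along the root-to-$\sigma$ path. The set of inputs $A$ whose computation path equals $\sigma$ is then exactly
\[
R_\sigma \;=\; \bigcap_{i:\sigma_i=1} \up F_{\sigma,i} \;\cap\; \bigcap_{i:\sigma_i=0} \bigl(\mathsf{FIN}(\tau)\setminus\up F_{\sigma,i}\bigr),
\]
which is a finite intersection of subbasic sets and hence a basic open set of $\mathfrak T$. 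Since $G$ is total, the sets $R_\sigma$ (ranging over leaves) partition $\mathsf{FIN}(\tau)$, so $\C$ is the union of the $R_\sigma$ over YES-leaves and $\mathsf{FIN}(\tau)\setminus\C$ is the union over NO-leaves. Both are unions of open sets, so $\C$ is clopen.

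For the $(\Leftarrow)$ direction, I would fix an enumeration $B_1, B_2, \ldots$ of the isomorphism types of structures in $\mathsf{FIN}(\tau)$, and for each $\sigma \in \B^{<\omega}$ consider the cylinder
\[
D_\sigma \;=\; \{A \in \mathsf{FIN}(\tau) : \hom_\B(B_i, A) = \sigma_i \text{ for all } i \leq |\sigma|\},
\]
which is a basic open set. I would then define $G(\sigma) = \textsf{YES}$ if $D_\sigma \subseteq \C$, $G(\sigma) = \textsf{NO}$ if $D_\sigma \subseteq \mathsf{FIN}(\tau) \setminus \C$, and $G(\sigma) = B_{|\sigma|+1}$ otherwise, with $\T$ taken to be the set of $\sigma$ such that $G$ has not returned an answer on any strict initial segment. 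Since $G$ need only exist as a function (and is not required to be computable), this definition is legitimate.

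The main part of the argument will be to verify that this $G$ is total, i.e., that the computation path $\sigma(A, G)$ is finite for every input $A$; it is here that clopenness is used in its full strength. Since $A$ lies in one of the open sets $\C$ or $\mathsf{FIN}(\tau)\setminus\C$, there is a basic open neighborhood $U = \bigcap_{l} \up E_l \cap \bigcap_{m} (\mathsf{FIN}(\tau) \setminus \up H_m)$ of $A$ contained in that side. Letting $N$ be the largest index at which any $E_l$ or $H_m$ appears in the enumeration $B_1, B_2, \ldots$, after at most $N$ queries the cylinder determined by the answers already received refines $U$ and is therefore wholly contained in $\C$ or in its complement, triggering the halt. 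Correctness of the answer at the halting leaf is then immediate from the definition of $G$. The only remaining bookkeeping is to verify cleanly that $\T$ so defined is a subtree of $\B^{<\omega}$ with leaves exactly where $G$ returns YES/NO, which is routine.
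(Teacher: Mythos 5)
Your proposal is correct and takes essentially the same approach as the paper: both directions use the same cylinder sets, and the $(\Leftarrow)$ construction of $G$ and the termination argument via a basic open neighborhood contained in $\C$ or its complement match the paper's proof. The only cosmetic difference is in the $(\Rightarrow)$ direction, where you index the cylinders by leaves of $\T$ and observe they partition $\mathsf{FIN}(\tau)$, whereas the paper indexes them by inputs $A\in\C$ and shows $\C=\bigcup_{A\in\C}U_A$ before invoking the same symmetry for $\C^c$; these are the same sets viewed from two sides.
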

Remember, a \emph{clopen} set is a set that is both closed and open.
We say that a class $\C$ of structures is homomorphism-closed if whenever $A \in \C$ and $A \to B$ for some structure $B$, then $B \in \C$.
We can use the topological characterization to get a simpler characterization for the existence of an adaptive left unbounded query algorithm over $\B$ in the case of homomorphism-closed classes.

\begin{restatable}{theorem}{datalogUnboundedLqaThm}
\label{datalogUnboundedLqaThm}
    Let $\C$ be a homomorphism-closed class; then the following hold:
    \begin{enumerate}[(i)]
        \item $\C = \bigcup_{A \in \C} \up{A}$, hence $\C$ is open.
        \item Furthermore, $\C$ is closed if and only if $\C = \bigcap_{i \in I} \bigcup_{j = 0}^{n_i}\up{A_{i,j}}$ for some structures $A_{i,j}$. 
    \end{enumerate}
\end{restatable}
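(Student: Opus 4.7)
The plan is to prove (i) and (ii) separately. Part (i) follows immediately from homomorphism-closure: for any $A \in \C$ and any $D$ with $A \to D$, we have $D \in \C$, so $\up A \subseteq \C$; combined with $A \in \up A$, this yields $\C = \bigcup_{A \in \C} \up A$. Since each $\up A$ is a subbasic open of $\mathfrak{T}$, $\C$ is open.

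The backward direction of (ii) is immediate: each $\up A$ is clopen in $\mathfrak{T}$, as both $\up A$ and its complement $\mathsf{FIN}(\tau) \setminus \up A$ lie in the subbasis. Hence every finite union $\bigcup_{j=0}^{n_i} \up{A_{i,j}}$ is clopen, and their arbitrary intersection is closed.

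For the forward direction of (ii), I would first note that the stated form is equivalent to the following condition $(\ast)$: for every $B \notin \C$, there exist finitely many structures $A_1, \ldots, A_m$ with each $A_i \not\to B$ and with every $C \in \C$ satisfying $A_i \to C$ for some $i$. (One direction is direct unpacking; for the converse, setting $V_B = \bigcup_i \up{A_i^B}$, one checks $\C = \bigcap_{B \notin \C} V_B$.) I would then argue the contrapositive: if $(\ast)$ fails for some $B \notin \C$, then every basic open around $B$ meets $\C$, so $\C$ is not closed. Fix such $B$ and let $U = \up X \cap \bigcap_{k=1}^n (\mathsf{FIN}(\tau) \setminus \up{Y_k})$ be an arbitrary basic open containing $B$. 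I first reduce to the case where each $Y_k$ is connected: since $Y_k \not\to B$ and $Y_k$ is the disjoint union of its connected components, some component $Y_k^{j_k}$ fails to map to $B$; replacing each $Y_k$ by such a $Y_k^{j_k}$ yields a smaller basic open still containing $B$ (and, if $U \cap \C = \emptyset$ originally, still disjoint from $\C$).

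Now suppose for contradiction $U \cap \C = \emptyset$. Since $(\ast)$ fails for $B$, the finite set $\{Y_1, \ldots, Y_n\}$ (each failing to map to $B$) cannot witness $(\ast)$, so there exists $C_0 \in \C$ with $Y_k \not\to C_0$ for all $k$; combined with $U \cap \C = \emptyset$, this forces $X \not\to C_0$. The key step is to form $D \coloneqq C_0 \oplus B$. By homomorphism-closure of $\C$ applied to the inclusion $C_0 \to D$, we get $D \in \C$. Moreover, $X \to B$ composes with the inclusion $B \to D$ to give $X \to D$; and for each connected $Y_k$, $Y_k \to D$ iff $Y_k \to C_0$ or $Y_k \to B$, both of which we have ruled out. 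Thus $D \in U \cap \C$, contradicting the assumption. The main obstacle I anticipate is handling the positive part $\up X$ of a general basic open, which naively prevents using the negative components $\{Y_k\}$ alone to separate $B$ from $\C$; the disjoint-union trick $C_0 \oplus B$, enabled by the reduction to connected $Y_k$'s, resolves this cleanly.
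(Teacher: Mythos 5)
Your proof is correct, and it takes a route that is organized quite differently from the paper's, while sharing the same two technical ingredients. The paper works directly with the set-theoretic normal form: it writes the clopen $\C$ as an intersection $\bigcap_i\bigl(\bigcup_j\up{B_{i,j}}\cup\bigcup_j(\mathsf{FIN}(\tau)\setminus\up{D_{i,j}})\bigr)$, uses the distributive law to make all $B_{i,j}$ connected, and then shows that any conjunct not already covered by its positive part must be vacuous: given $H\in\C$ with $H\notin\bigcup_j\up{B_{i,j}}$, it forms $H'\coloneqq H\oplus\bigoplus_j D_{i,j}\in\C$ and deduces from connectedness of $B_{i,j}$ that some $B_{i,j}\to D_{i,j'}$, so the whole $i$-th conjunct equals $\mathsf{FIN}(\tau)$ and can be discarded, leaving a pure $\bigcap_i\bigcup_j\up{B_{i,j}}$. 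You instead argue contrapositively and locally: you reformulate the target representation as your condition $(\ast)$, and then show that if $(\ast)$ fails at some $B\notin\C$, $B$ is a limit point of $\C$ by exhibiting, in every basic neighborhood $U$ of $B$, the witness $D\coloneqq C_0\oplus B\in U\cap\C$. Both proofs hinge on (a) reducing the subbasic ``negative'' conditions to connected structures (in the paper these are the $B_{i,j}$, which play the same role as your $Y_k$ once one unwinds the complementation), and (b) using homomorphism closure together with a disjoint union to manufacture an element of $\C$ that satisfies all the negative constraints vacuously. Your version has the advantage of being more transparently topological and avoiding the bookkeeping of the conjunctive-normal-form rewrite and the argument that discarding bad conjuncts leaves an equality; the paper's version makes explicit how the final representation is obtained constructively from an arbitrary basis-element intersection. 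One small point worth spelling out in your write-up: a general basic open is $\bigcap_i\up{X_i}\cap\bigcap_k(\mathsf{FIN}(\tau)\setminus\up{Y_k})$ with possibly several $X_i$, but since $\bigcap_i\up{X_i}=\up{\bigoplus_i X_i}$ one may indeed assume a single $X$, as you implicitly do.
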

\smallskip \par
The preceding two theorems imply that a homomorphism-closed class $\C$ is decided by an adaptive left unbounded query algorithm over $\B$ if and only if $\C$ is an intersection of UCQs.
\begin{remark}
    The first part of Theorem \ref{leftUnboundedMoreExpressive} can be derived from Theorem \ref{datalogUnboundedLqaThm} since $\C$ is homomorphism-closed and $\C = \bigcap_{n\geq1} \mathsf{P}_n$.
\end{remark}

\subsection{Case Studies}
\label{sec:case studies}
As our first case study, we consider the definability of \emph{Constraint Satisfaction Problems} (CSPs) with adaptive left unbounded query algorithms over $\B$.
\par 
A CSP can be formulated as a problem of deciding whether there is a homomorphism from an input structure to a target structure. For a relational structure $A$ we write $\CSP(A) \coloneqq \{ B : B \to A\}$.
\par 
We can combine some known results with our observations from Section \ref{Unbounded qals over B} to get the following result:
\begin{restatable}{theorem}{mainCSP}
\label{thm:main-CSP}
Let $\C$ be any class of the form $[A]_{\leftrightarrow}$ or $\CSP(A)$ for a structure $A$. Then the following are equivalent:
    \begin{enumerate}[(i)]
        \item $\C$ is decided by an adaptive left unbounded query algorithm over $\B$.
        \item $\C$ is decided by a non-adaptive left $k$-query algorithm over $\B$ for some $k$.
    \end{enumerate}
\end{restatable}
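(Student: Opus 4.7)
The implication $(ii)\Rightarrow(i)$ is immediate, since any non-adaptive $k$-query algorithm over $\B$ trivially yields an adaptive left unbounded query algorithm over $\B$ (ask all $k$ queries in sequence, ignoring the answers, then apply the decision rule). For $(i)\Rightarrow(ii)$, the plan is to apply Theorem~\ref{Topological Characterization} to conclude that $\C$ is clopen in $(\mathsf{FIN}(\tau),\mathfrak T)$, in particular open, and exploit a local neighborhood of $A$. Since $A\in\C$, there exists a basic open set $V_A=\bigcap_{s\in S}\up{P_s}\cap\bigcap_{t\in T}\overline{\up{Q_t}}$ with $A\in V_A\subseteq\C$ and $S,T$ finite. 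By replacing each $\up{P_s}$ by the intersection over its connected components, and by replacing each $Q_t$ by a single connected component of $Q_t$ that fails to map to $A$ (which exists because $Q_t\not\to A$, and this only strengthens the constraint, keeping $A$ inside the neighborhood and keeping the neighborhood inside $\C$), I may assume without loss of generality that every $P_s$ and every $Q_t$ is connected and that $Q_t\not\to A$ for each $t\in T$.

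For the case $\C=\CSP(A)$, I would prove directly that $\CSP(A)=\bigcap_{t\in T}\overline{\up{Q_t}}$, exhibiting $\{Q_t\}_{t\in T}$ as a finite duality obstruction set. The inclusion $\subseteq$ is immediate: if $B\to A$ and $Q_t\to B$, then $Q_t\to A$, contradicting our choice of $Q_t$. For $\supseteq$, suppose $Q_t\not\to B$ for every $t$, and consider $B\oplus A$. Each $P_s$ maps to $B\oplus A$ via $A$ (since $A\in V_A$). For each $t$, connectedness of $Q_t$ means $Q_t\to B\oplus A$ iff $Q_t\to A$ or $Q_t\to B$, and both possibilities fail. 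Hence $B\oplus A\in V_A\subseteq\CSP(A)$, so $B\oplus A\to A$, and restricting to the $B$-component yields $B\to A$. This establishes finite duality, so $\CSP(A)$ is FO-definable.

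For the case $\C=[A]_{\leftrightarrow}$, the plan is to show again that $A$ has finite duality, whence $\CSP(A)$ is FO-definable and therefore $[A]_{\leftrightarrow}=\CSP(A)\cap\up A$ is FO-definable too. Let $\mathcal M$ be the collection of cores of minimal elements of $\overline{\CSP(A)}$ in the homomorphism preorder, so $\overline{\CSP(A)}=\bigcup_{C\in\mathcal M}\up C$. For each $C\in\mathcal M$, the structure $A\oplus C$ is not in $[A]_{\leftrightarrow}$ (since $C\not\to A$ forces $A\oplus C\not\to A$), hence $A\oplus C\notin V_A$. All $P_s$-constraints are satisfied by $A\oplus C$, so some $Q_t$ must satisfy $Q_t\to A\oplus C$. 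Connectedness of $Q_t$ combined with $Q_t\not\to A$ forces $Q_t\to C$. Since $Q_t$ is itself an obstruction and $C$ is a minimal obstruction dominating $Q_t$, minimality gives $Q_t\leftrightarrow C$. Thus every $C\in\mathcal M$ is homomorphically equivalent to one of the finitely many $Q_t$, so $\mathcal M$ is finite.

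In both cases $\C$ is FO-definable and closed under homomorphic equivalence, so Theorem~\ref{Balder main theorem} yields a non-adaptive left $k$-query algorithm over $\B$ for $\C$. The main obstacle I anticipate is the initial reduction to connected $P_s$ and $Q_t$ with $Q_t\not\to A$: without it, the crucial step ``$Q_t\to B\oplus A\Rightarrow Q_t\to A$ or $Q_t\to B$'' breaks down and the disjoint-union trick is inapplicable. Once that reduction is in place, both disjoint-union arguments and the passage through finite duality are routine.
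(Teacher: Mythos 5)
Your direction $(ii)\Rightarrow(i)$ and your argument for the case $\C=\CSP(A)$ are correct, and the $\CSP(A)$ argument is in fact a nice self-contained derivation of finite duality from clopenness; it differs from the paper's route, which handles $[A]_{\leftrightarrow}$ first and then transfers to $\CSP(A)$ via Proposition 4.4 of \cite{ten2024homomorphism} rather than establishing finite duality explicitly.

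There is, however, a genuine gap in your treatment of $\C=[A]_{\leftrightarrow}$. You assert, as if by definition, that
\[
\overline{\CSP(A)} \;=\; \bigcup_{C\in\mathcal M}\up C,
\]
where $\mathcal M$ is the set of (cores of) minimal obstructions. This presupposes that every obstruction lies above some \emph{minimal} obstruction, which fails in general: for $A=K_2$, the obstructions are the non-bipartite graphs, the odd cycles $\mathsf{C}^{odd}_3, \mathsf{C}^{odd}_5, \mathsf{C}^{odd}_7,\dots$ form a strictly decreasing chain in the homomorphism order, and there is no minimal odd cycle, so $\mathcal M$ does not cover $\overline{\CSP(K_2)}$. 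Of course, for the particular $A$ in this theorem one can ultimately show that minimal obstructions do exist and cover (because $A$ turns out to have finite duality), but that is precisely what you are trying to prove, so the step is circular as written. The argument that follows (pairing each $C\in\mathcal M$ with some $Q_t$ and invoking minimality to get $C\leftrightarrow Q_t$) only shows $\mathcal M$ is finite, which does not yield finite duality without the unproved covering claim.

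The fix is that the minimal-obstruction machinery is unnecessary: your own disjoint-union trick applies verbatim to an \emph{arbitrary} obstruction $C$. If $C\not\to A$, then $A\oplus C\not\to A$, so $A\oplus C\notin [A]_{\leftrightarrow}$ and hence $A\oplus C\notin V_A$; all $P_s$ map into $A\oplus C$ via $A$, so some $Q_t\to A\oplus C$, and connectedness together with $Q_t\not\to A$ gives $Q_t\to C$. Thus $\overline{\CSP(A)}\subseteq\bigcup_{t\in T}\up{Q_t}$, and the reverse inclusion is immediate, so $\{Q_t\}_{t\in T}$ is a finite duality pair exactly as in your $\CSP(A)$ case. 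The rest of your plan then goes through. For comparison, the paper's proof of the $[A]_{\leftrightarrow}$ case is even shorter: it observes that a basic open set is a union of homomorphic-equivalence classes, so the basic open $V_A$ with $A\in V_A\subseteq[A]_{\leftrightarrow}$ must equal $[A]_{\leftrightarrow}$, which immediately yields the non-adaptive $k$-query algorithm from the $k$ subbasis constraints defining $V_A$, without any appeal to finite duality.
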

\par 
This theorem shows that unboundedness does not help when deciding CSPs or homo\-morphic equivalence classes, and it gives an effective characterization of the classes of this form that are decided by an adaptive left unbounded query algorithm over $\B$:
\par 
\begin{corollary}
\label{cor:csp-FO}
    $\CSP(A)$ is decided by an adaptive left unbounded query algorithm over $\B$ if and only if it first-order definable. Moreover, there is an effective algorithm that decides for a given structure A whether this holds. 
\end{corollary}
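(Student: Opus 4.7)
The plan is to chain together Theorem \ref{thm:main-CSP} with Theorem \ref{Balder main theorem}, and then invoke a known decidability result for FO-definability of CSPs to handle the effectiveness claim.

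First I would observe that $\CSP(A)$ is closed under homomorphic equivalence: if $B \leftrightarrow B'$ and $B \to A$, then $B' \to B \to A$, so $B' \in \CSP(A)$. This is exactly the hypothesis needed to apply Theorem \ref{Balder main theorem} to $\CSP(A)$. That theorem tells us that $\CSP(A)$ admits a non-adaptive left $k$-query algorithm over $\B$ for some $k$ if and only if $\CSP(A)$ is FO-definable.

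Next I would apply Theorem \ref{thm:main-CSP} to the class $\C = \CSP(A)$. That theorem asserts the equivalence of (i) being decided by an adaptive left unbounded query algorithm over $\B$ and (ii) being decided by a non-adaptive left $k$-query algorithm over $\B$ for some $k$. Combining this equivalence with the one from Theorem \ref{Balder main theorem} immediately gives the first claim of the corollary: $\CSP(A)$ is decided by an adaptive left unbounded query algorithm over $\B$ if and only if $\CSP(A)$ is FO-definable.

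For the effectiveness claim, I would appeal to the classical result of Larose, Loten and Tardif (``A characterisation of first-order constraint satisfaction problems'', LICS 2007), which shows that first-order definability of $\CSP(A)$ is decidable in the input structure $A$; equivalently, it is decidable whether $A$ has finite duality. This yields the desired effective algorithm. The only genuine work in the corollary is the two-step reduction; the decidability half is immediate from the cited literature, so the main obstacle is really carrying out the equivalences inside Theorem \ref{thm:main-CSP} itself (already done). Hence the corollary is essentially a two-line argument: apply Theorem \ref{thm:main-CSP}, then Theorem \ref{Balder main theorem}, then cite Larose--Loten--Tardif for effectiveness.
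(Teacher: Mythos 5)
Your proof is correct and follows exactly the paper's own argument: chain Theorem \ref{thm:main-CSP} with Theorem \ref{Balder main theorem} (noting $\CSP(A)$ is closed under homomorphic equivalence), then cite Larose--Loten--Tardif for decidability.
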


The fact that $\CSP(A)$ is decided by a non-adaptive left query algorithm over $\B$ if and only if it is first-order definable follows from Theorem \ref{Balder main theorem}.
The fact that there is an algorithm that decides whether $\CSP(A)$ is first-order definable is a well-known result by Larose, Loten, and Tardif \cite{larose2007characterisation}. 
\par 
Consider now the database query language Datalog. To conserve space, we omit a detailed definition of the syntax and semantics of Datalog, which can be found in \cite{Ebbinghaus1995}.
Here we only consider Boolean Datalog programs, which are Datalog programs with a 0-ary goal predicate. Such a program $\pi$ defines a class of structures $\C_\pi$ that is homomorphism-closed. A Datalog program is \emph{monadic} if it only uses recursive predicates (a.k.a.~IDBs) that are unary; it is \emph{linear} if every rule body has at most one occurrence of a recursive predicate.

The class $\C$ of digraphs that contain a directed cycle from Theorem \ref{leftUnboundedMoreExpressive} is Datalog definable, using the program: 
\[
    \begin{array}{lll}
        X(x,y) &\leftarrow & R(x,y) \\
        X(x,y) & \leftarrow & X(x, x_1), R(x_1,y)\\
        Ans() & \leftarrow & X(z,z)
    \end{array}
\]
This shows that unboundedness helps even when we restrict ourselves to Datalog definable classes. 
Contrastingly, Theorem \ref{thm:main-CSP} shows that for every Datalog program that defines the complement of a CSP, unboundedness does \emph{not} help. 
Corollary \ref{cor:csp-FO} furthermore shows that a class defined by such Datalog programs is decided by an adaptive left unbounded query algorithm if and only if it is first-order definable.
Note that the Datalog programs that define the complement of a CSP form a non-trivial fragment of Datalog. 
Indeed, every (Boolean) monadic Datalog program whose rule bodies are ``tree-shaped'' defines a complement of a CSP (and moreover, the CSP in question can be constructed effectively from the Datalog program),  cf.~\cite{Erdos2017:regular,tencate2024:right}. 

We can use these observations to prove a negative result on the expressive power of adaptive left query algorithms over $\B$. 

\begin{restatable}{theorem}{monadicDatalog}
\label{thm:monadicDatalog}
    There exists a class definable by a monadic linear Datalog program that is not decided by an adaptive left unbounded query algorithm over $\B$.
\end{restatable}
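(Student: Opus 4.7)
The plan is to exhibit a class $\C$ defined by a monadic linear Datalog program whose complement is a CSP that is not first-order definable. Decidability by an adaptive left unbounded query algorithm over $\B$ is closed under complementation: given any decider $G$ for $\C$, one obtains a decider for $\C^c$ by swapping the \textsf{YES} and \textsf{NO} labels at the leaves of $G$. So, via Corollary~\ref{cor:csp-FO} applied to $\C^c$, it will suffice to find such a $\C$.

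Take the signature $\tau = \{R, S, T\}$ with $R$ binary and $S, T$ unary, and let $\C$ consist of those $\tau$-structures $B$ in which some vertex of $T^B$ is reachable from some vertex of $S^B$ via a directed $R$-path. This class is defined by the monadic linear Datalog program
\[
X(x) \leftarrow S(x), \qquad X(y) \leftarrow X(x), R(x,y), \qquad \mathit{Ans}() \leftarrow X(x), T(x),
\]
whose only recursive predicate $X$ is unary and whose rule bodies each contain at most one occurrence of a recursive predicate.

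The key step is to identify $\C^c$ as $\CSP(A)$ for the two-element template $A$ on $\{0,1\}$ with $R^A = \{(0,0),(1,0),(1,1)\}$ (every ordered pair except $(0,1)$), $S^A = \{0\}$, and $T^A = \{1\}$. Given a homomorphism $\varphi : B \to A$, the set $X := \varphi^{-1}(0)$ satisfies $S^B \subseteq X$, $T^B \cap X = \emptyset$, and contains no outgoing $R$-edge (since $(0,1)$ is the unique forbidden pair); conversely, any such $X$ determines a valid homomorphism. Such an $X$ exists exactly when no $T$-vertex is reachable from an $S$-vertex, so $\CSP(A) = \C^c$. To finish, a standard Ehrenfeucht-Fraïssé argument distinguishes a long directed $S$-to-$T$ path from the disjoint union of two long paths, one anchored at an $S$-vertex and the other at a $T$-vertex, by showing the duplicator wins the $k$-round EF game for path lengths at least $2^{k+1}$; hence $\C$, and therefore also $\C^c = \CSP(A)$, is not FO-definable. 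By Corollary~\ref{cor:csp-FO}, $\CSP(A)$ is not decided by any adaptive left unbounded query algorithm over $\B$, and by complementation the same holds for $\C$. The only delicate point is the identification $\CSP(A) = \C^c$; the non-FO-definability of directed reachability is classical.
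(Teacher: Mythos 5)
Your overall strategy is the same as the paper's: take a two-element template $A$ whose CSP forbids reachability, show $\CSP(A)$ is not FO-definable, invoke the paper's CSP result together with closure of unbounded-$\B$-decidability under complementation, and exhibit a monadic linear Datalog program for the complement. The paper uses undirected reachability ($R^A = \{(0,0),(1,1)\}$ with colours $P,Q$) where you use directed reachability (forbidding only $(0,1)$ with colours $S,T$); both work, and your verification of the identification $\C^c = \CSP(A)$ is correct.

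However, the specific Ehrenfeucht--Fra\"{\i}ss\'e witnesses you sketch do not actually work: Duplicator does \emph{not} win the $k$-round game between a single directed $S$-to-$T$ path and the disjoint union of an $S$-anchored path and a $T$-anchored path, no matter how long the paths are. The second structure contains a vertex of in-degree $0$ that belongs to neither $S$ nor $T$ (the unlabeled source of the $T$-anchored path), while the first structure has no such vertex; Spoiler pebbles that unlabeled source, Duplicator is forced onto either the $S$-source of the $S$-$T$ path (which has the wrong colour) or an interior vertex (which has an in-neighbour), and Spoiler wins in one further round. The standard repair is to eliminate degree-defect vertices, e.g.\ by using directed cycles: compare a single directed cycle of length $2n$ with $S$ and $T$ placed antipodally against the disjoint union of two directed cycles of length $n$, one carrying a single $S$-vertex and the other a single $T$-vertex. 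All vertices then have in-degree and out-degree $1$, the multisets of $r$-ball types agree, and Hanf locality (or a direct game argument) gives indistinguishability at every fixed quantifier rank for $n$ large. Alternatively, pad your first structure with an extra unlabeled path so the local neighbourhood counts match. The underlying classical fact --- that directed $s$-$t$ reachability is not first-order definable --- is of course true; only your choice of witness structures needs adjustment, and everything else in the argument goes through.
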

\par 
We now give a characterization of the Datalog definable classes that are also decided by an adaptive left unbounded query algorithm in terms of their \emph{upper envelopes}.
\par
Chaudhuri and Kolaitis \cite{chaudhuri1993finding,chaudhuri1994can}, 
in their study of approximations of Datalog programs by non-recursive queries, they introduced the notion of an 
upper envelope. 
For a Boolean Datalog program deciding a class $\C$, an
upper envelope is a union of conjunctive
queries (UCQ) $q$ that defines a class $\C'$ such 
that $\C\subseteq \C'$. Such an upper envelope can be equivalently viewed as a class $\C'$ of
the form $\bigcup_{j = 0}^{n}\up{A_{j}}$ that contains $\C$. Since Datalog definable classes $\C$ are homomorphism-closed,
Theorem \ref{datalogUnboundedLqaThm} therefore gives us the following:
\begin{corollary}
\label{cor:datalog-char}
    A Datalog definable class $\C$ is decided by an adaptive left unbounded query algorithm over $\B$ if and only if it is the intersection of its upper envelopes.
\end{corollary}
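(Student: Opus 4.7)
The plan is to derive the corollary by chaining together the topological characterization (Theorem \ref{Topological Characterization}) with the structural description of homomorphism-closed clopen classes (Theorem \ref{datalogUnboundedLqaThm}), using the well-known fact that every Datalog definable class is homomorphism-closed. Let $\C$ be a Datalog definable class. First, since $\C$ is homomorphism-closed, part (i) of Theorem \ref{datalogUnboundedLqaThm} tells us that $\C$ is already an open set in the topological space $(\mathsf{FIN}(\tau), \mathfrak T)$. Combining this with Theorem \ref{Topological Characterization}, deciding $\C$ by an adaptive left unbounded query algorithm over $\B$ is equivalent to $\C$ being clopen, which, since openness is automatic, reduces to $\C$ being closed.

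Next, I would apply part (ii) of Theorem \ref{datalogUnboundedLqaThm} to characterize closedness: $\C$ is closed if and only if there is an indexed family of structures $\{A_{i,j}\}$ with $\C = \bigcap_{i \in I} \bigcup_{j=0}^{n_i}\up{A_{i,j}}$. The key observation to make at this step is that each inner union $\bigcup_{j=0}^{n_i}\up{A_{i,j}}$ is exactly the class of structures defined by the UCQ $\bigvee_{j=0}^{n_i} q_{A_{i,j}}$, and, since it contains $\C$ by virtue of appearing in the intersection, it is by definition an upper envelope of $\C$.

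It remains to show that, among the possible representations of a closed class $\C$ as such an intersection, one can always take the canonical representation as the intersection of \emph{all} upper envelopes of $\C$. One inclusion is trivial: $\C$ is contained in every one of its upper envelopes, so $\C \subseteq \bigcap \mathcal{U}(\C)$, where $\mathcal{U}(\C)$ denotes the collection of upper envelopes of $\C$. For the reverse inclusion when $\C$ is closed, fix any representation $\C = \bigcap_{i \in I} \bigcup_{j=0}^{n_i}\up{A_{i,j}}$ as supplied by Theorem \ref{datalogUnboundedLqaThm}(ii); each $\bigcup_{j=0}^{n_i}\up{A_{i,j}}$ is an element of $\mathcal{U}(\C)$, so $\bigcap \mathcal{U}(\C) \subseteq \bigcap_{i \in I} \bigcup_{j=0}^{n_i}\up{A_{i,j}} = \C$. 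Conversely, if $\C$ equals the intersection of its upper envelopes, then $\C$ is of the form required by Theorem \ref{datalogUnboundedLqaThm}(ii) and is therefore closed, hence clopen, hence decided by an adaptive left unbounded query algorithm over $\B$ by Theorem \ref{Topological Characterization}.

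There is no substantial obstacle here; the corollary is essentially an immediate packaging of Theorems \ref{Topological Characterization} and \ref{datalogUnboundedLqaThm} together with the identification of the UCQ-defined superclasses $\bigcup_j \up{A_{i,j}}$ with upper envelopes. The only point that requires a line of explanation is the reduction from an \emph{arbitrary} witnessing representation to the canonical intersection over all upper envelopes, which follows as sketched above from the fact that each witnessing UCQ is itself one of the upper envelopes being intersected on the right-hand side.
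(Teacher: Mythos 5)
Your proposal is correct and follows exactly the route the paper intends: identify the inner unions $\bigcup_j \up{A_{i,j}}$ from Theorem \ref{datalogUnboundedLqaThm}(ii) with UCQ-defined upper envelopes, and combine with the clopen characterization of Theorem \ref{Topological Characterization} plus automatic openness from Theorem \ref{datalogUnboundedLqaThm}(i). The paper leaves this packaging implicit; you have merely spelled it out, including the small sandwiching step needed to pass from \emph{some} witnessing intersection to the canonical intersection of \emph{all} upper envelopes, which is handled correctly.
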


\subsection{Adaptive Right Query Algorithms over $\B$}
\label{sec:rqa over B}

For adaptive \emph{right} query algorithms over $\B$ we have a result analogous to Theorem \ref{leftUnboundedMoreExpressive}:
\begin{restatable}{theorem}{unboundedHelpsRqa}
\label{unbounded-helps-rqa}
    The class $\C$ of all digraphs that have an oriented cycle with non-zero net length can be decided by an adaptive right unbounded query algorithm over $\B$, and not by an adaptive right $k$-query algorithm over $\B$, for any $k$.
\end{restatable}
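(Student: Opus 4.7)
The plan is to mirror Theorem \ref{leftUnboundedMoreExpressive} on the right, using directed paths and cycles as the target structures of right queries.

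For the upper bound, on input $A$ I would iterate $n = 1, 2, \ldots$, querying $\hom_\B(A, \mathsf{C}_n)$ and $\hom_\B(A, \mathsf{P}_n)$, halting with \textsf{YES} the first time a cycle query returns $0$, and with \textsf{NO} the first time a path query returns $1$. Correctness and termination rest on two equivalences. The first, $\gamma(A) > 0 \Leftrightarrow A \not\to \mathsf{C}_n$ for some $n$, is immediate from Proposition \ref{girth-prop}: take $n > \gamma(A)$ on one side, and use $n \mid 0$ on the other. The second, $\gamma(A) = 0 \Leftrightarrow A \to \mathsf{P}_n$ for some $n$, needs a short argument. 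If $\gamma(A) = 0$, every closed oriented walk has net length $0$, so on each connected component one obtains a level function into $\mathbb{Z}$ by assigning to each vertex the net length of an oriented walk to a fixed basepoint; its image is finite, and after shifting the components into a common range this yields a homomorphism into some $\mathsf{P}_n$. Conversely, $\mathsf{P}_n$ is a DAG carrying a canonical level function, so any homomorphism $A \to \mathsf{P}_n$ sends every closed oriented walk of $A$ to net length $0$, forcing $\gamma(A) = 0$. Together these ensure the algorithm terminates on every input and returns the correct answer.

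For the lower bound, I would invoke the computation-tree unrolling observation stated immediately before Theorem \ref{leftUnboundedMoreExpressive}: every adaptive right $k$-query algorithm over $\B$ can be simulated by a non-adaptive right $(2^k{-}1)$-query algorithm over $\B$. It therefore suffices to show that no non-adaptive right $k$-query algorithm $((F_1, \ldots, F_k), X)$ decides $\C$. Let $L = \max_i |F_i|$, fix any $M \geq L$, and let $N$ be a common multiple of $\{1, \ldots, L\}$ (for example, $N = L!$). I claim that $\mathsf{C}_N$ and $\mathsf{P}_M$ produce identical answer vectors: for each $F_i$, both $\hom_\B(\mathsf{P}_M, F_i) = 1$ and $\hom_\B(\mathsf{C}_N, F_i) = 1$ are equivalent to the condition ``$F_i$ contains a directed cycle''. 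Indeed, a directed walk of length $\geq L$ must revisit a vertex by pigeonhole, and every directed cycle of $F_i$ has length at most $L$ and hence divides $N$. But $\mathsf{C}_N \in \C$ (it contains a cycle of net length $N > 0$) while $\mathsf{P}_M \notin \C$, so the algorithm cannot separate them.

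The subtle ingredient is the ``only if'' part of the second equivalence in the upper bound, together with the standard observation that a closed oriented walk of non-zero net length can always be contracted, via repeated-vertex splitting, to an oriented cycle of non-zero net length (so that $\gamma(A) = 0$ really does kill all closed walks, not just cycles). Once that combinatorial fact is in place, the lower bound reduces to straightforward bookkeeping on cycle lengths in the fixed family $F_1, \ldots, F_k$.
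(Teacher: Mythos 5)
Your proposal is correct and follows essentially the same approach as the paper: the same interleaved path-and-cycle unbounded algorithm (via the characterizations $A \in \C \Leftrightarrow \hom_\B(A,\mathsf{C}_n)=0$ for some $n$, and $A \notin \C \Leftrightarrow \hom_\B(A,\mathsf{P}_n)=1$ for some $n$) for the upper bound, the same pair $\mathsf{P}_m$ and $\mathsf{C}_{m!}$ with $m$ bounding the query sizes for the non-adaptive lower bound, and the same $2^k-1$ unrolling to transfer it to the adaptive case. The only cosmetic difference is that you prove the path-homomorphism characterization directly via a level function, whereas the paper cites it as Proposition 1.13 of Hell and Ne\v{s}et\v{r}il.
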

\par 
It can be shown that this class is Datalog definable
(see Theorem \ref{thm:non-zero-net-length-is-Datalog} in Appendix \ref{appendix:rqa-thm}).
So, as in the case for left query algorithms, unboundedness sometimes helps even when we restrict ourselves to Datalog definable classes.

\smallskip \par 
We can also replicate the results from section \ref{Unbounded qals over B} for right query algorithms.
We define $\mathfrak T'$ as the topology generated by the sub-basis \begin{align*}
    \mathcal S' \coloneqq \{ \down{A} : A \in \mathsf{FIN}(\tau) \} \cup \{ \mathsf{FIN}(\tau) \setminus \down {A} : A \in \mathsf{FIN}(\tau)\}.
\end{align*} 
where $\down{A}$ denotes the set $\{ B \in \mathsf{FIN}(\tau) : B\to A \}$. 
In other words, $\mathcal S'$ consists of all CSPs and their complements.
Using the same argument as before, we can prove:
\begin{theorem}
    \label{topological right characterization}
     A class $\C \subseteq \mathsf{FIN}(\tau)$ is decided by an adaptive right unbounded query algorithm over $\B$ if and only if $\C$ is a clopen set in the topological space $(\mathsf{FIN}(\tau), \mathfrak T')$.
\end{theorem}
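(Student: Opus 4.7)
My plan is to mirror the proof of Theorem~\ref{Topological Characterization}, with $\up{A}$ replaced by $\down{A}$ throughout. The key observation driving both directions is that, for a right query algorithm, asking $\hom_\B(A, F) = 1$ (respectively $= 0$) is equivalent to testing $A \in \down{F}$ (respectively $A \in \mathsf{FIN}(\tau) \setminus \down{F}$), and both of these are subbasic sets in $\mathfrak T'$.

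For the forward direction, I would let $G$ be an adaptive right unbounded query algorithm over $\B$ deciding $\C$ and, for each leaf $\sigma = b_1 b_2 \cdots b_k$ of its computation tree, consider the set
\[V_\sigma \coloneqq \{B \in \mathsf{FIN}(\tau) : \sigma(B, G) = \sigma\}.\]
Writing $F_i$ for the structure queried at stage $i$ along the path $\sigma$, I would rewrite $V_\sigma$ as the intersection of the conditions $B \in \down{F_i}$ (when $b_i = 1$) and $B \in \mathsf{FIN}(\tau) \setminus \down{F_i}$ (when $b_i = 0$), making $V_\sigma$ a finite intersection of subbasic sets and hence a basic open of $\mathfrak T'$. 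Since $G$ is total, the $V_\sigma$ partition $\mathsf{FIN}(\tau)$, so both $\C = \bigcup_{\sigma : G(\sigma) = \mathsf{YES}} V_\sigma$ and its complement $\bigcup_{\sigma : G(\sigma) = \mathsf{NO}} V_\sigma$ are open, whence $\C$ is clopen.

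For the reverse direction, assuming $\C$ is clopen, both $\C$ and $\mathsf{FIN}(\tau) \setminus \C$ are open and hence covered by basic open sets of $\mathfrak T'$. Since $\mathsf{FIN}(\tau)$ is countable and each basic open set is determined by a finite tuple of structures with signs, I would fix an enumeration $B_1, B_2, \ldots$ of all basic open sets of $\mathfrak T'$ that are wholly contained in $\C$ or wholly contained in $\mathsf{FIN}(\tau) \setminus \C$, tagged according to which side they lie on. The algorithm I have in mind then processes the $B_n$ in order: at the $n$-th stage it issues the finitely many right queries needed to test $A \in B_n$, and as soon as it confirms $A \in B_n$ it halts with output $\mathsf{YES}$ or $\mathsf{NO}$ according to the tag of $B_n$. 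Correctness is automatic, and termination follows from the fact that every $A$ lies in some $B_n$ because $\C$ and its complement together cover $\mathsf{FIN}(\tau)$.

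The main obstacle will be formally packaging this procedure as a function $G : \mathcal T \to \mathsf{FIN}(\tau) \cup \{\mathsf{YES}, \mathsf{NO}\}$ in the sense of Definition~\ref{def:unbounded-adaptive-query-algs}: each tree node must encode enough bookkeeping to recover the index $n$ of the basic open set currently being tested and the next structure to query from the list defining $B_n$, and one must check inductively that every branch of the resulting tree is finite. Apart from this bookkeeping, the argument is a direct translation of the left-query version.
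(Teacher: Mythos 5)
Your proof is correct and mirrors the paper's argument, which (as the paper notes just before the theorem statement) is the same argument as for Theorem~\ref{Topological Characterization}, with $\down{A}$ in place of $\up{A}$. The forward direction is identical. For the reverse direction, the paper's construction sidesteps the bookkeeping concern you flag at the end: rather than enumerating basic open sets and having each tree node encode which set is currently being tested, it fixes an enumeration $A_0, A_1, A_2, \ldots$ of all structures and defines $G$ directly from the string $\sigma$ of answers seen so far. Concretely, $G(\sigma) = \mathsf{YES}$ if the basic open set $U_\sigma$ determined by the answers in $\sigma$ (using queries $A_0,\ldots,A_{|\sigma|-1}$) is contained in $\C$, $G(\sigma) = \mathsf{NO}$ if $U_\sigma \subseteq \C^c$, and otherwise $G(\sigma) = A_{|\sigma|}$. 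Termination is immediate: by openness some basis element $U$ with $A \in U$ lies in $\C$ or $\C^c$, and once all the structures defining $U$ appear among $A_0,\ldots,A_{|\sigma|-1}$ we have $U_\sigma \subseteq U$. Your version works too, but the paper's formulation makes $G$ a bona fide function of $\sigma$ with no extra encoding needed.
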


Then we also get 
\begin{theorem} 
\label{thm:rqa-unb-does-not-help}
Let $\C$ be any class of the form $[A]_{\leftrightarrow}$ or $\up{A}$ for a structure $A$. Then the following are equivalent:
    \begin{enumerate}[(i)]
        \item $\C$ is decided by an adaptive right unbounded query algorithm over $\B$.
        \item $\C$ is decided by a non-adaptive right $k$-query algorithm over $\B$ for some $k$.
    \end{enumerate}
\end{theorem}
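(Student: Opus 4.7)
The direction $(ii) \Rightarrow (i)$ is immediate, since every non-adaptive right $k$-query algorithm over $\B$ is a special case of an adaptive right unbounded one. For the nontrivial direction $(i) \Rightarrow (ii)$, I apply Theorem~\ref{topological right characterization}: assumption (i) is equivalent to $\C$ being clopen in $(\mathsf{FIN}(\tau), \mathfrak{T}')$, so it suffices to show that, for each of the two forms of $\C$, clopen-ness in $\mathfrak{T}'$ implies expressibility as a finite Boolean combination of CSPs, which is exactly what a non-adaptive right $k$-query algorithm over $\B$ can decide.

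For $\C = [A]_\leftrightarrow$, the argument is short. Since $[A]_\leftrightarrow$ is open and contains $A$, pick a basic open $U = \bigcap_{i=1}^n S_i$, where each $S_i$ is either $\down C_i$ or $\mathsf{FIN}(\tau) \setminus \down C_i$, such that $A \in U \subseteq [A]_\leftrightarrow$. For any $B \in [A]_\leftrightarrow$, $B \leftrightarrow A$ forces $B \to C \iff A \to C$ for every $C$, so $B$ matches the sign pattern of $A$ on the $C_i$'s, giving $B \in U$. Hence $U = [A]_\leftrightarrow$ is itself a finite Boolean combination of CSPs, and the desired non-adaptive right $n$-query algorithm over $\B$ queries $\hom_\B(B, C_i)$ for $i = 1,\ldots,n$ and accepts iff the resulting sign pattern matches $A$'s.

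For $\C = \up A$, first observe that $\up A$ is \emph{always} closed in $\mathfrak{T}'$: its complement $\{B : A \not\to B\}$ is downward-closed under $\to$ by transitivity of homomorphism, and therefore equals $\bigcup_{B \,:\, A \not\to B} \down B$, a union of basic open sets. So clopen-ness reduces to openness. Mirroring the argument used on the left for $\CSP(A) = \down A$ in Theorem~\ref{thm:main-CSP}, the plan is to show that when $\up A$ is open, one can choose a basic open neighborhood of $A$ inside $\up A$ consisting only of negative constraints, i.e., of the form $U = \bigcap_{j=1}^m (\mathsf{FIN}(\tau) \setminus \down C_j)$. Granting this, $A \in U$ gives $A \not\to C_j$ for each $j$, so each $\down C_j$ lies in $\mathsf{FIN}(\tau) \setminus \up A$ (otherwise some $D \in \down C_j \cap \up A$ would yield $A \to D \to C_j$); while $U \subseteq \up A$ rewrites as $\mathsf{FIN}(\tau) \setminus \up A \subseteq \bigcup_j \down C_j$. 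Combining the two inclusions, $\mathsf{FIN}(\tau) \setminus \up A = \bigcup_{j=1}^m \down C_j$, so $\up A = \bigcap_{j=1}^m (\mathsf{FIN}(\tau) \setminus \down C_j)$ is a finite Boolean combination of CSPs.

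The main obstacle is the final reduction---eliminating positive constraints from the basic open neighborhood of $A$. Given any hypothetical $B$ in the negative-only intersection with $A \not\to B$, the task is to build a structure $B'$ satisfying both the positive and negative constraints (placing $B' \in U \subseteq \up A$) from which $A \to B$ can be derived. On the left, the disjoint union $B \oplus A$ works cleanly since a connected query-structure $D$ maps to $B \oplus A$ iff it maps to $B$ or to $A$, and the query-structures can be taken connected without loss of generality. On the right, the naive dual---the categorical product $B \times A$---does not in general preserve $\not\to C_j$ (this is the same phenomenon behind the failure of Hedetniemi-type identities under categorical product), so the construction of $B'$ has to be adapted to the concrete structure of $\up A$, paralleling rather than directly dualizing the left-hand argument in the proof of Theorem~\ref{thm:main-CSP}.
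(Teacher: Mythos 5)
Your argument for $\C = [A]_\leftrightarrow$ is correct and is essentially the paper's: since $[A]_\leftrightarrow$ is open and is a single class modulo homomorphic equivalence (and every basic open of $\mathfrak T'$ is a union of hom-equivalence classes), any basic open contained in $[A]_\leftrightarrow$ and containing $A$ must coincide with $[A]_\leftrightarrow$; it is therefore a finite intersection of sub-basis elements, which is exactly a non-adaptive right query algorithm over $\B$.

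For $\C = \up{A}$ you flag the gap yourself, and it is a real one: your plan to produce a negative-only basic neighborhood of $A$ inside $\up{A}$ needs a structure $B'$ satisfying the positive constraints (via $A$) and the negative constraints (via $B$) from which $A \to B$ would follow, and, as you observe, the natural dual $B \times A$ does not preserve $\nrightarrow$ (a Hedetniemi-type obstruction). There is also a mistaken premise: the paper's proof of Theorem~\ref{thm:main-CSP} does \emph{not} argue this way directly for $\CSP(A)$ via a disjoint-union trick; that device appears in Theorem~\ref{datalogUnboundedLqaThm} for homomorphism-closed classes, which $\CSP(A)$ is not. Instead, in both the left case ($\CSP(A)$) and the right case ($\up{A}$), the paper \emph{reduces to the $[A]_\leftrightarrow$ case and then imports an external equivalence}. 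Concretely, in Appendix~\ref{appendix:rqa-thm} the result is proved as a five-way equivalence: an adaptive right unbounded algorithm for $\up{A}$ is turned into one for $[A]_\leftrightarrow$ by additionally querying $\hom_\B(B,A)$ and accepting iff both that query succeeds and the original algorithm accepts; your $[A]_\leftrightarrow$ argument then yields a non-adaptive right $k$-query algorithm for $[A]_\leftrightarrow$; and the cited Theorem~6.5 of ten Cate et al.~\cite{ten2024homomorphism} — non-adaptive right query algorithms exist for $[A]_\leftrightarrow$ iff for $\up{A}$ iff $A$ is homomorphically equivalent to an acyclic structure — finishes the chain. The content your $B'$ construction would need to supply is precisely what that cited theorem delivers, so a self-contained version of your route would in effect have to reprove it.
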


\par

It then follows from a result by ten Cate et al. \cite{ten2024homomorphism} that these conditions hold if and only if $A$ is homomorphically equivalent to an acyclic structure.
Using the language of conjunctive queries, this shows that a CQ $q$ has an equivalent adaptive right unbounded query algorithm if and only if $q$ is equivalent to a Berge-acyclic CQ. 
It is also good to note that since every CQ can be written as a Datalog program, this also shows that there are Datalog-definable classes that are not decided by any adaptive right unbounded query algorithm over $\B$.

\section{Open Problems}
\label{sec:discussion}

\subparagraph*{Adaptive $f(n)$-query algorithms.} 
We can define an adaptive $f(n)$-query algorithm as an adaptive unbounded query algorithm that uses at most $f(n)$ queries for inputs of size $n$. Then one could ask the question of what the growth rate of $f$ needs to be for adaptive left $f(n)$-query algorithms over $\N$ to be able to decide any class of structures of a given signature.
The algorithm described in Example \ref{example:decide-every-struct-lqa} gives an upper bound of $2^{\Sigma_i(n^{r_i})}$ for such an $f$, where $r_i$ are the arities of the relation symbols in the signature.
On the other hand, Corollary \ref{intermediate-corollary} implies that such an $f$ is not $o(\log\log n)$.
This leaves quite a large gap between the known lower and upper bounds. Shrinking this gap is an interesting avenue for further research.

\subparagraph*{Decidability of the existence of an adaptive left unbounded query algorithm over $\B$.}
For classes of the form $\CSP(A)$, Corollary \ref{cor:csp-FO} shows that determining whether they are decided by an adaptive left unbounded query algorithm is decidable. The criterion given in Corollary \ref{cor:datalog-char} for Datalog definable classes is, however, not effective. This raises the question of whether there exists such an effective criterion for Datalog definable classes, or whether that problem is undecidable.
Note that the ``Rice-style'' general undecidability result for semantic properties of Datalog programs from \cite{Gaifman93:undecidable} appears to be of no help here: although \emph{admitting an adaptive left unbounded query algorithm over $\B$} is a semantic property of Datalog programs, it fails to satisfy their ``strong non-triviality'' requirement.

\subparagraph*{Adaptive hybrid $k$-query algorithms over $\B$.}
A query algorithm that can use both left and right queries can be called a hybrid query algorithm. 
Adaptive hybrid query algorithms over $\N$ have been studied by Wu \cite{wu2022query}, but there has been no research on such algorithms over $\B$.
Every class of structures that is closed under homomorphic equivalence can be decided by an adaptive hybrid unbounded query algorithm over $\B$. On input $B$, the algorithm simply queries for $\hom_{\B}(A, B)$ and $\hom_\B(B,A)$ for each structure $A$. At some point, both of these numbers are 1, then the homomorphic equivalence class of $B$ is decided, which suffices to classify it correctly. 
This evokes the question of what the expressive power of hybrid $k$-query algorithms is: Do we get a strict hierarchy as in Figure \ref{fig:summary-left-N}? What is the relation between the expressive power of non-adaptive and adaptive hybrid query algorithms over $\B$?

\bibliography{MFCS2025}

\appendix
\section{Proofs of Statements About Digraphs and Homomorphisms}
\label{appendix:basic-digraphs}
We first define a few basic concepts:

\begin{definition}\hspace{0cm}
    \begin{itemize}
        \item Let $W = (a_0, r_0, a_1, r_1, \ldots, a_n, r_{n}, a_{n+1})$ be an oriented walk. The inverse of $W$ is  \begin{align*}
        -W \coloneqq (a_{n+1}, r_n' , a_n , \ldots, r_1', a_1, r_0', a_0)
        \end{align*} where $r_i'$ is the opposite of $r_i$, so $r_i' \coloneqq  \begin{cases}
            + \If r_i = -\\
            - \If r_i = +
        \end{cases}$.

        \item For oriented walks $W = (a_0, r_0, \ldots, a_{n+1})$ and $W' = (b_0, s_0, \ldots, b_{n+1})$ with $a_{n+1} = b_0$ we can define the composition of $W$ and $W'$ as \begin{align*}
        W \circ W' = (a_0, r_0, \ldots, a_{n+1} = b_0 , s_0, \ldots, b_{n+1}).
    \end{align*}
    \end{itemize}
\end{definition}

We can now prove:
\girthProp*
\proof 
The left-to-right direction follows immediately from Lemma \ref{girth-lemma}, since every cycle of positive net length is a closed oriented walk. 

Let us now prove the other direction. We assume $n \mid \gamma(A)$. 
By Lemma \ref{girth-lemma} it suffices to show $\gamma(A) \mid l(W)$ for every closed oriented walk $W= (a_0, r_0, a_1, r_1, \ldots, a_{n-1}, r_{n-1}, a_{n})$ in $A$. To do that we use strong induction over $n$, the length of the walk (not the net length). We have two cases:
\begin{itemize}
    \item Assume $W$ is an oriented cycle. If $l(W) = 0$ then trivially $\gamma(A) \mid l(W)$. If $l(W)>0$ then by definition $\gamma(A) \mid l(W)$. Lastly, if $l(W)<0$ then $l(-W)>0$ so $-W$ has positive net length. Then $\gamma(A) \mid l(-W)$ but $l(-W) = -l(W)$ so $\gamma(A) \mid l(W)$.
    
    \item Assume $W$ is not an oriented cycle. Then there exists a subwalk $C = (a_i, r_i, \ldots, a_{j+1})$ (with $i\leq j$) that is an oriented cycle. But then \begin{align*}
        W' = (a_0, r_0, \ldots, a_{i-1}, r_{i-1}, a_i = a_{j+1}, r_{j+1}, \ldots, a_{n-1}, r_n, a_{n})
    \end{align*} is a smaller closed oriented walk (it has length $n-(j-i+1)<n$). So by the induction hypothesis, we get that $\gamma(A) \mid l(W')$. We also have $l(W) = l(W') + l(C)$ and by the argument from the previous case, $\gamma(A) \mid l(C)$. So we have $\gamma(A) \mid l(W)$.
\end{itemize}
Using induction, we conclude that $\gamma(A) \mid l(W)$ for all closed oriented walks $W$ in $A$, which is what we wanted to show.
\qed

We are now going to prove the lemma that is the cornerstone for our limitative result on the expressive power of left query algorithms. Before we do that we need to establish some basic facts.
\par 

For a given signature $\tau = \{R_1, \ldots, R_n\}$, we let $\mathbf{1}_\tau$ denote the complete singleton structure of that signature, namely \begin{align*}
    \mathbf{1}_\tau \coloneqq (\{0\}, (\{0\}^{r_i})_{R_i \in \tau}).
\end{align*}
For digraphs, this is simply the structure consisting of a single loop. If the signature is clear from the context, we usually drop the subscript. 
\par 
The \emph{direct product} of two structures $\mathfrak A = (A, (R_i^\mathfrak A)_{i\in \{1,\ldots,n\}})$ and $\mathfrak B = (B, (R_i^\mathfrak B)_{i\in \{1,\ldots,n\}})$ is defined with
\[\mathfrak A \otimes \mathfrak B \coloneqq (A \times B, R_i^{\mathfrak A \otimes \mathfrak B})\]
where 
\[R_i^{\mathfrak A \otimes \mathfrak B} \coloneqq \{((a_1, b_1), \ldots, (a_{r_i}, b_{r_i})) : (a_1, \ldots, a_{r_i}) \in R_i^\mathfrak A \hspace{6pt}\text{and}\hspace{6pt} (b_1, \ldots, b_{r_i}) \in R_i^\mathfrak B \}.\]
The following observations are easy to prove:
\begin{proposition}\hspace{0 cm}
\label{product rules}
    \begin{enumerate}[(i)]
        \item For any structure $H$ and natural number $r$ we have $(r\cdot \mathbf{1})\otimes H \cong r \cdot H$.
        \item For any natural number $n$ we have $\mathsf{C}_n \otimes \mathsf{C}_n \cong n \cdot \mathsf{C}_n$.
        \item For any structures $H, A,B$ we have $\hom(H, A\otimes B) = \hom(H,A)\cdot \hom(H, B)$.
    \end{enumerate}
\end{proposition}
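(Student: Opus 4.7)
The plan is to treat the three claims in order, starting with (iii) since it is the general ``categorical'' property of the direct product, then deriving (i) and (ii) as concrete consequences of the definitions plus bookkeeping.

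For (iii), I would establish a bijection between $\mathrm{Hom}(H, A \otimes B)$ and $\mathrm{Hom}(H, A) \times \mathrm{Hom}(H, B)$. In one direction, given $\varphi : H \to A \otimes B$, let $\pi_A, \pi_B$ be the coordinate projections of $A \otimes B$ onto $A$ and $B$; the maps $\pi_A \circ \varphi$ and $\pi_B \circ \varphi$ are homomorphisms because for any fact $(h_1, \ldots, h_{r_i}) \in R_i^H$ the tuple $(\varphi(h_1), \ldots, \varphi(h_{r_i}))$ lies in $R_i^{A \otimes B}$, and by definition of the product this means its two coordinate projections lie in $R_i^A$ and $R_i^B$ respectively. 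Conversely, given $\alpha : H \to A$ and $\beta : H \to B$, the map $h \mapsto (\alpha(h), \beta(h))$ is a homomorphism to $A \otimes B$, again directly from the definition of $R_i^{A \otimes B}$ as the tuples whose coordinate projections lie in $R_i^A$ and $R_i^B$. The two constructions are mutually inverse, giving the required equality of cardinalities.

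For (i), I would describe $r \cdot \mathbf{1}_\tau$ explicitly: its domain is $\{0, \ldots, r-1\}$ and each $R_i^{r \cdot \mathbf{1}}$ is exactly the ``diagonal'' $\{(k, k, \ldots, k) : 0 \le k < r\}$. Consequently, a tuple $((a_1, b_1), \ldots, (a_{r_i}, b_{r_i}))$ belongs to $R_i^{(r \cdot \mathbf{1}) \otimes H}$ iff $a_1 = a_2 = \cdots = a_{r_i}$ and $(b_1, \ldots, b_{r_i}) \in R_i^H$. Hence the relations involve only tuples whose first coordinate is constant, so the structure splits as a disjoint union over $k \in \{0, \ldots, r-1\}$ of the slices $\{k\} \times H$, each of which is isomorphic to $H$ via the projection on the second coordinate. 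Packaging these $r$ slice-isomorphisms into a single map gives an isomorphism $r \cdot H \cong (r \cdot \mathbf{1}) \otimes H$.

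For (ii), I would unfold the product $\mathsf{C}_n \otimes \mathsf{C}_n$: its domain is $\{0, \ldots, n-1\}^2$ and its edge set is $\{((i,j),(i{+}1, j{+}1)) : 0 \le i, j < n\}$ where addition is taken modulo $n$. Define $f : \mathsf{C}_n \otimes \mathsf{C}_n \to n \cdot \mathsf{C}_n$ by sending $(i, j)$ to the vertex $i$ of the $((j - i) \bmod n)$-th copy of $\mathsf{C}_n$. This is clearly a bijection (the inverse recovers $j$ from the copy-index plus $i$), and it sends each edge $((i,j), (i{+}1, j{+}1))$ to an edge inside the $((j - i) \bmod n)$-th copy of $\mathsf{C}_n$ since the copy-index is preserved and the first coordinate increases by one modulo $n$. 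Conversely, every edge of $n \cdot \mathsf{C}_n$ arises this way, so $f$ is an isomorphism.

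The verifications are all essentially unwinding definitions; the only place where one must be careful is in (i), where one has to check that the condition $(a_1, \ldots, a_{r_i}) \in R_i^{r \cdot \mathbf{1}}$ really does force all the $a_j$ to be equal (because the facts of $r \cdot \mathbf{1}$ are precisely the diagonal loops at each of its $r$ points, with no ``cross'' tuples). None of the three parts presents a genuine obstacle.
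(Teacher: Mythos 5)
Your proof is correct and fills in exactly the routine verifications the paper leaves implicit (it states this proposition without proof, remarking only that the observations are easy, and its later appendix comment that ``the diagonal lines in the product form cycles'' is precisely your argument for (ii)). The bijection $\varphi \mapsto (\pi_A\circ\varphi,\pi_B\circ\varphi)$ for (iii), the slice decomposition for (i), and the copy-index $(j-i)\bmod n$ for (ii) are all the intended standard arguments.
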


We can now prove:

\homLemma*
\proof 
First we note that $\hom(A, n\cdot \mathbf{1}) = n^{c(A)}$. 
From Proposition \ref{product rules} we get $\mathsf{C}_n\otimes \mathsf{C}_n \cong n\cdot \mathsf{C}_n \cong (n\cdot \mathbf{1}) \otimes \mathsf{C}_n$. 
Therefore \begin{align*}
    \hom(A, \mathsf{C}_n) \cdot \hom(A, \mathsf{C}_n) 
    &= \hom(A, \mathsf{C}_n \otimes \mathsf{C}_n) \\
    &= \hom(A, n \cdot \mathsf{C}_n) \\
    &= \hom(A, n\cdot \mathbf{1}) \cdot \hom(A, \mathsf{C}_n),
\end{align*} so if $\hom(A, \mathsf{C}_n) \neq 0$ then $\hom(A, \mathsf{C}_n) = \hom(A, n\cdot \mathbf{1}) = n^{c(A)}$.
We then get \begin{align*}
    \hom(A, m\cdot \mathsf{C}_n) 
    &= \hom(A, (m\cdot \mathbf{1}) \otimes \mathsf{C}_n)\\
    &= \hom(A, m\cdot \mathbf{1}) \cdot \hom(A, \mathsf{C}_n)\\
    &= \begin{cases}
        0 &\If \hom(A, \mathsf{C}_n) = 0\\
        m^{c(A)}\cdot n^{c(A)} &\If \hom(A,\mathsf{C}_n) \neq 0
    \end{cases}\\
    &= \begin{cases}
            0 &\If n \nmid \gamma(A) \\
            (m\cdot n)^{c(A)} &\If n \mid \gamma(A)
        \end{cases}
\end{align*}
where the last step follows from Lemma \ref{girth-prop}
\qed

\section{Proof of Theorem \ref{n-ary-better-corollary}}
\label{proof-of-nary-thm}
Theorem \ref{better-corollary} shows that for every signature containing a binary relation, there exists a class of structures with that signature that is not decided by an adaptive left $k$-query algorithm over $\N$ for any $k$.
The key step in the proof of that result is Lemma \ref{hom-lemma}.
In this section, we extend this result to any signature that contains an $n$-ary relation with $n\geq2$.
We do this by emulating the symmetry of directed cycles to get an $n$-ary analogue to Lemma \ref{hom-lemma}.

\begin{definition}
    The $n$-ary cycle of length $d$ is defined with \begin{align*}
        \mathsf{C}_d^n \coloneqq (\{0,\ldots, d-1\}, R)
    \end{align*} where \begin{align*}
        R = \{(i \!\!\!\pmod d, \ldots, i+ (n-1) \!\!\!\pmod d) : i \in \{0,\ldots, d-1\}\}.
    \end{align*}
\end{definition}

For a structure $\A = (A, R)$ with one $n$-ary relation, we define $\A^* \coloneqq (A, R^*)$ where $R^*$ is the binary relation defined with: $R^*(a,b)$ holds if and only if there exists an $i$ and elements $a_1,\ldots,a_{i-1}, a_{i+2},\ldots, a_n$ such that \begin{align*}
    R(a_1,\ldots,a_{i-1}, a, b, a_{i+2}, \ldots, a_n).
\end{align*}
Note that $(\mathsf{C}_d^n)^*\cong \mathsf{C}_d$.
We can now prove:
\begin{lemma}
    \label{n-ary-transformation-lemma}
    Let $\A = (A, R)$ be a structure where $R$ is an $n$-ary relation. A function $f: \A \to \mathsf{C}_d^n$ is a homomorphism if and only if $f: \A^* \to \mathsf{C}_d$ is a homomorphism.
\end{lemma}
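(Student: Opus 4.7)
\medskip

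\noindent\textbf{Proof plan.} The plan is to treat the two directions separately, with the forward direction being an immediate unfolding of the definitions and the backward direction amounting to a short calculation with the cyclic arithmetic of $\mathsf{C}_d^n$.

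For the forward direction, suppose $f \colon \A \to \mathsf{C}_d^n$ is a homomorphism, and take any $i \in \{1, \ldots, n-1\}$ and any pair $(a, b) \in R_i^{\A^*}$. By the very definition of $R_i$, there exist elements $a_1, \ldots, a_{i-1}, a_{i+2}, \ldots, a_n$ with $(a_1, \ldots, a_{i-1}, a, b, a_{i+2}, \ldots, a_n) \in R^\A$. Applying $f$ and using that it preserves $R$, the tuple $(f(a_1), \ldots, f(a_{i-1}), f(a), f(b), f(a_{i+2}), \ldots, f(a_n))$ lies in $R^{\mathsf{C}_d^n}$, which witnesses $(f(a), f(b)) \in R_i^{(\mathsf{C}_d^n)^*}$.

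For the backward direction, suppose $f \colon \A^* \to (\mathsf{C}_d^n)^*$ is a homomorphism and let $(a_1, \ldots, a_n) \in R^\A$ be arbitrary. The key observation is that, as noted in the paragraph preceding the lemma, for every $i$ the binary relation $R_i^{(\mathsf{C}_d^n)^*}$ is exactly the directed cycle $\mathsf{C}_d$; that is, $R_i^{(\mathsf{C}_d^n)^*}(x, y)$ holds if and only if $y \equiv x + 1 \pmod d$. Now the tuple $(a_1, \ldots, a_n)$ itself witnesses $R_i^{\A^*}(a_i, a_{i+1})$ for every $i \in \{1, \ldots, n-1\}$, so by assumption we obtain $f(a_{i+1}) \equiv f(a_i) + 1 \pmod d$ for each such $i$. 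Chaining these equalities yields $f(a_i) \equiv f(a_1) + (i-1) \pmod d$ for all $i$, so $(f(a_1), \ldots, f(a_n))$ has the form $(j, j+1, \ldots, j + n-1) \pmod d$ for $j = f(a_1)$, and is therefore an element of $R^{\mathsf{C}_d^n}$ by the definition of $\mathsf{C}_d^n$. Hence $f$ preserves $R$.

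There is no real obstacle here: the content is entirely in the observation that the binary relations $R_i^{(\mathsf{C}_d^n)^*}$ all coincide with the ``successor mod $d$'' relation, which forces the image of any $R$-tuple in $\A$ to be laid out consecutively on the cycle. The only place to be careful is with indexing (the tuple positions $i$ and $i+1$ corresponding to the arguments of $R_i$) and the modular arithmetic, both of which are routine.
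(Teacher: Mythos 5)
Your proof is correct and follows essentially the same route as the paper's: the forward direction unwinds the definition of $R_i$ and applies preservation of $R$, while the backward direction uses that each $R_i^{(\mathsf{C}_d^n)^*}$ is the successor-mod-$d$ relation and chains the resulting congruences $f(a_{i+1}) \equiv f(a_i)+1 \pmod d$ to recover membership in $R^{\mathsf{C}_d^n}$.
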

\proof 
First, assume $f$ is a homomorphism from $\A$ to $\mathsf{C}_d^n$ and let $R^*(a,b)$ be given. Then there exists $i$ and elements $a_j$ for $j\in \{1,\ldots, n\}\setminus\{i,i+1\}$ such that $R(a_1,\ldots,a_{i-1}, a, b, a_{i+2}, \ldots, a_n)$. Then we get \begin{align*}
        R(f(a_1),\ldots,f(a_{i-1}), f(a), f(b), f(a_{i+2}), \ldots, f(a_n))
\end{align*}
and therefore $R^*(f(a), f(b))$, so $f$ preserves $R^*$.\par 
For the other direction, assume $f$ preserves $R^*$ and let $R(a_1, \ldots, a_n)$ be given. For each $i \in \{1,\ldots, n-1\}$ we then have $R^*(a_i,a_{i+1})$, so $R^*(f(a_i), f(a_{i+1}))$ which means that $f(a_i)+1 = f(a_{i+1})\pmod d$. Since this holds for every $i$ we have \begin{align*}
    f(a_{j+1}) = f(a_1)+ j \!\!\! \pmod d
\end{align*} for each $j \in \{0,\ldots, n-1\}$. So indeed $R(f(a_1), \ldots, f(a_n))$ and we have shown that $f$ preserves $R$.
\qed

The following now immediately follows from Lemmas \ref{n-ary-transformation-lemma} and \ref{girth-prop}.
\begin{lemma}
    \label{n-ary-girth-lemma}
    Let $\A = (A, R)$ with $R$ $n$-ary. Then $\A \to \mathsf{C}_d^n$ if and only if $d\mid \gamma(\A^*)$.
\end{lemma}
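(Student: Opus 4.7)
The plan is to chain together Lemma~\ref{n-ary-transformation-lemma} and Proposition~\ref{girth-prop}. First, Lemma~\ref{n-ary-transformation-lemma} reduces the $n$-ary question to a binary-relation question: $\A \to \mathsf{C}_d^n$ iff $\A^* \to (\mathsf{C}_d^n)^*$. A direct inspection of the relation of $\mathsf{C}_d^n$ shows that every binary projection $R_i$ coincides with the cyclic successor relation $\{(x,\, x+1 \bmod d) : 0 \le x < d\}$, i.e., an isomorphic copy of the digraph $\mathsf{C}_d$. So $(\mathsf{C}_d^n)^*$ is just $n-1$ identical copies of $\mathsf{C}_d$, and a map $f$ is a homomorphism $\A^* \to (\mathsf{C}_d^n)^*$ iff $f$ is simultaneously a homomorphism $(A, R_i) \to \mathsf{C}_d$ for every $i \in \{1, \ldots, n-1\}$.

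The $(\Rightarrow)$ direction is then immediate: such a simultaneous $f$ in particular witnesses $(A, R_i) \to \mathsf{C}_d$ for each $i$, so Proposition~\ref{girth-prop} forces $d \mid \gamma((A, R_i))$ for every $i$, and hence $d \mid \gcd_i \gamma((A, R_i)) = \gamma(\A)$.

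For the $(\Leftarrow)$ direction, suppose $d \mid \gamma(\A)$. Since every $R_i$ in $(\mathsf{C}_d^n)^*$ is the same copy of $\mathsf{C}_d$, the simultaneous-homomorphism condition collapses to asking for a single homomorphism $f : (A,\, R_1 \cup \cdots \cup R_{n-1}) \to \mathsf{C}_d$ from the combined digraph. Applying Proposition~\ref{girth-prop} to this combined digraph reduces the task to checking that $d$ divides its girth-parameter $\gamma$. The main obstacle is precisely this last check: closed oriented walks in the combined digraph may freely mix edges from several $R_i$'s, and a priori their net lengths are not controlled by the individual $\gamma((A, R_i))$'s. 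I would handle this by an induction on walk length in the spirit of the proof of Proposition~\ref{girth-prop}, peeling off oriented subcycles and expressing each mixed cycle's net length as an integer combination of net lengths of single-$R_i$ closed walks, so that divisibility by $d$ propagates from the hypothesis $d \mid \gamma((A, R_i))$ for every $i$ to the combined digraph.
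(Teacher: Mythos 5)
Your analysis up through the reduction is correct: $\A \to \mathsf{C}_d^n$ reduces (via Lemma~\ref{n-ary-transformation-lemma}) to a single map $f$ that simultaneously preserves every $R_i$ into the same cyclic-successor digraph $\mathsf{C}_d$, i.e.\ to a homomorphism $(A, R_1 \cup \cdots \cup R_{n-1}) \to \mathsf{C}_d$, and your $(\Rightarrow)$ direction follows cleanly. You also put your finger on exactly where the $(\Leftarrow)$ direction threatens to break: Proposition~\ref{girth-prop} applied to the combined digraph requires $d$ to divide the net length of \emph{mixed} cycles, while $\gamma(\A) = \gcd(\gamma((A,R_1)),\ldots,\gamma((A,R_{n-1})))$ only sees cycles living entirely inside a single $R_i$.

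The repair you propose cannot work, because mixed-cycle net lengths are not in general integer combinations of single-$R_i$ cycle net lengths. Take $n = 3$ and $\A = (\{a, b\}, \{(a, b, a)\})$. Then $R_1 = \{(a,b)\}$ and $R_2 = \{(b,a)\}$, so neither $(A, R_1)$ nor $(A, R_2)$ has any cycle of positive net length, giving $\gamma(\A) = \gcd(0,0) = 0$ and hence $d \mid \gamma(\A)$ for every $d$. Yet the combined digraph contains the directed cycle $a \to b \to a$ of net length $2$, and $\A \not\to \mathsf{C}_3^3$: a homomorphism would force $f(a) \equiv f(a) + 2 \pmod 3$. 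So the $(\Leftarrow)$ implication is in fact false under the paper's definition of $\gamma(\A)$, and no induction can close the gap. The paper's one-line ``immediately follows'' glosses over exactly the point you isolated; the lemma (and the immediacy claim) would be correct if $\gamma(\A)$ were instead defined as $\gamma$ of the combined digraph $(A, R_1 \cup \cdots \cup R_{n-1})$, in which case the chain $\A \to \mathsf{C}_d^n$ iff $(A, \bigcup_i R_i) \to \mathsf{C}_d$ iff $d \mid \gamma((A, \bigcup_i R_i))$ is immediate from Lemma~\ref{n-ary-transformation-lemma} and Proposition~\ref{girth-prop}.
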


We then obtain the analogue of Lemma \ref{hom-lemma} for structures with one $n$-ary relation.

\begin{lemma}
    \label{n-ary-hom-lemma}
    For every $\A= (A, R)$ with $R$ an $n$-ary relation and all positive integers $d,m$ we have \begin{align*}
        \hom(\A, m\cdot \mathsf{C}_d^n) = \begin{cases}
            0 &\If d \nmid \gamma(\A)\\
            (m\cdot d)^{c(\A)} &\If d\mid \gamma(\A)
        \end{cases}.
    \end{align*}
\end{lemma}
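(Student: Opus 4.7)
The plan is to mirror the proof of Lemma~\ref{hom-lemma} step-by-step, with $\mathsf{C}_d^n$ playing the role of $\mathsf{C}_d$ and with Lemma~\ref{n-ary-girth-lemma} replacing Proposition~\ref{girth-prop}. The key technical input is the $n$-ary analogue of the product identities in Proposition~\ref{product rules}. I would first verify, by essentially the same direct-product calculation, the general identities
\[
    (m\cdot\mathbf{1})\otimes H \cong m\cdot H, \qquad \hom(H, A\otimes B) = \hom(H,A)\cdot \hom(H,B),
\]
which hold for arbitrary relational signatures and need no modification. What does need checking is the $n$-ary symmetry identity $\mathsf{C}_d^n \otimes \mathsf{C}_d^n \cong d\cdot \mathsf{C}_d^n$.

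To see the symmetry identity, I would describe the isomorphism explicitly. The vertex set of $\mathsf{C}_d^n \otimes \mathsf{C}_d^n$ is $\{0,\ldots,d-1\}^2$, and a tuple lies in the relation iff it has the form $((i,j),(i{+}1,j{+}1),\ldots,(i{+}n{-}1,j{+}n{-}1))$ modulo $d$. Parametrising each vertex $(i,j)$ by its diagonal class $c \coloneqq j - i \pmod d$ and its first coordinate $i$, one sees that every edge keeps $c$ fixed, so the product decomposes into $d$ components indexed by $c\in\{0,\ldots,d-1\}$, and each component is a cycle of length $d$ in which the map $i\mapsto i$ is an isomorphism to $\mathsf{C}_d^n$. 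This is the main obstacle, since it is the only place where one has to work concretely with the $n$-ary structure rather than with general algebraic identities; once it is in place the rest is algebra.

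Combining the identities, I would then chain the computation exactly as in Lemma~\ref{hom-lemma}:
\[
   \hom(\A,\mathsf{C}_d^n)^2 \;=\; \hom(\A,\mathsf{C}_d^n\otimes \mathsf{C}_d^n) \;=\; \hom(\A,d\cdot \mathsf{C}_d^n) \;=\; \hom(\A,d\cdot\mathbf{1})\cdot\hom(\A,\mathsf{C}_d^n).
\]
Using the standard fact that $\hom(\A,d\cdot \mathbf{1}) = d^{c(\A)}$, this forces $\hom(\A,\mathsf{C}_d^n)\in\{0,\, d^{c(\A)}\}$, and by Lemma~\ref{n-ary-girth-lemma} the nonzero value occurs precisely when $d\mid \gamma(\A)$. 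A second application of the product identities gives
\[
   \hom(\A, m\cdot \mathsf{C}_d^n) \;=\; \hom(\A,(m\cdot \mathbf{1})\otimes \mathsf{C}_d^n) \;=\; m^{c(\A)}\cdot \hom(\A,\mathsf{C}_d^n),
\]
which equals $0$ when $d\nmid \gamma(\A)$ and $(m\cdot d)^{c(\A)}$ when $d\mid \gamma(\A)$, as required.
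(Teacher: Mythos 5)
Your proposal is correct and follows essentially the same route as the paper: the paper's proof also reduces to the $n$-ary product identity $\mathsf{C}_d^n \otimes \mathsf{C}_d^n \cong d \cdot \mathsf{C}_d^n$ ``by observing that the diagonal lines in the product form cycles'' and then replays the algebra of Lemma~\ref{hom-lemma} with Lemma~\ref{n-ary-girth-lemma} in place of Proposition~\ref{girth-prop}. Your explicit diagonal parametrisation $c = j - i \pmod d$ is just a spelled-out version of the paper's one-line justification.
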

\proof 
In the same way as in Proposition \ref{product rules} we have that $\mathsf{C}_d^n \otimes \mathsf{C}_d^n \cong d \cdot \mathsf{C}_d^n$ by observing that the diagonal lines in the product form cycles.
Then we have $\mathsf{C}_d^n \otimes \mathsf{C}_d^n \cong d \cdot \mathsf{C}_d^n \cong (d\cdot \mathbf{1})\otimes \mathsf{C}_d^n$. Using this and Lemma \ref{n-ary-girth-lemma} the proof follows in the same way as the proof of Lemma \ref{hom-lemma}.
\qed
\par 
It is now easy to see that we can use the same proofs as before to prove analogues of Theorem \ref{main-theorem}, Corollary \ref{intermediate-corollary} and Theorem \ref{better-corollary}. 
\par 
The attentive reader might have noticed that the hypothesis $n\geq 2$ is not assumed in Lemma~\ref{n-ary-hom-lemma}.
Indeed, the lemma also holds in the case $n = 1$. 
However, the analogues of Theorem \ref{main-theorem}, Corollary \ref{intermediate-corollary}, and Theorem \ref{better-corollary} do not extend to this case. The reason is that if $n=1$ we have $\mathsf{C}_d^n \cong d \cdot \mathsf{C}_1^n$, whereas for all $n\geq 2$ and all $m>1$ we have that $\mathsf{C}_{m\cdot d}^n \ncong m\cdot \mathsf{C}_d^n$. Therefore we can only establish the analogues of these results for $n$-ary relations with $n\geq2$. In particular, we obtain:
\nAryBetterCorollary*

Finally, we also prove the claim that every class of structures with a signature containing only unary relation symbols is decided by a left bounded query algorithm over $\N$.
\begin{theorem}
\label{thm:unary-is-decidable}
    Every class of structures with signature $\tau = \{P_1, \ldots, P_k\}$, where each $P_i$ is a unary relation symbol, is decided by a non-adaptive left $2^k$-query algorithm over $\N$.
\end{theorem}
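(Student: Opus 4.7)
\proof[Proof proposal]
The plan is to observe that a structure with only unary relations is determined up to isomorphism by the number of elements of each of the $2^k$ possible ``unary types'', and that each such count can be recovered from homomorphism counts by inclusion--exclusion.

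Concretely, for each subset $S \subseteq \{1,\ldots,k\}$, let $F_S$ be the singleton structure with domain $\{0\}$ in which $P_i^{F_S} = \{0\}$ if $i \in S$ and $P_i^{F_S} = \varnothing$ otherwise. For an arbitrary target structure $A$ of signature $\tau$ and for $a \in A$, call the set $\mathsf{type}(a) \coloneqq \{ i : a \in P_i^A\} \subseteq \{1,\ldots,k\}$ the \emph{type} of $a$, and let $n_T(A) \coloneqq |\{a \in A : \mathsf{type}(a) = T\}|$ for each $T \subseteq \{1,\ldots,k\}$. A map $\{0\} \to A$ is a homomorphism from $F_S$ to $A$ precisely when the image lies in $\bigcap_{i \in S} P_i^A$, hence
\begin{align*}
\hom(F_S, A) \;=\; \bigl|\{ a \in A : \mathsf{type}(a) \supseteq S \}\bigr| \;=\; \sum_{T \supseteq S} n_T(A).
\end{align*}

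By Möbius inversion on the Boolean lattice $\{0,1\}^k$, the values $(n_T(A))_{T \subseteq \{1,\ldots,k\}}$ are determined by the $2^k$ numbers $(\hom(F_S, A))_{S \subseteq \{1,\ldots,k\}}$: explicitly, $n_T(A) = \sum_{S \supseteq T} (-1)^{|S|-|T|}\hom(F_S, A)$. Since any two finite $\tau$-structures with the same profile $(n_T)_T$ are isomorphic (map elements of equal type to each other bijectively), the tuple $(\hom(F_S, A))_S$ determines $A$ up to isomorphism.

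Hence, for any class $\C$ of $\tau$-structures (closed under isomorphism), setting $\mathcal{F} \coloneqq (F_S)_{S\subseteq\{1,\ldots,k\}}$ and $X \coloneqq \{(\hom(F_S, A))_S : A \in \C\}$ yields a non-adaptive left $2^k$-query algorithm over $\N$ deciding $\C$. The only nontrivial step is the Möbius inversion plus the observation that unary-only structures are classified up to isomorphism by their type profile; both are routine, so no real obstacle is expected.
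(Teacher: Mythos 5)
Your proposal is correct and takes essentially the same approach as the paper: the same singleton query structures $F_S$, and the same observation that the resulting $2^k$ counts determine the type profile and hence the structure up to isomorphism. You merely make explicit (via Möbius inversion on the Boolean lattice) the step the paper summarizes as ``gives information about the number of elements in $A$ satisfying each Boolean combination of the predicates.''
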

\begin{proof}
    For each subset $S \subseteq \tau$ let $F_S \coloneqq (\{0\}, (P_i^{F_S})_{i = 1}^k)$ where 
    \[P_i^{F_S} = \begin{cases}
        \{0\} &\If P_i \in S \\
        \varnothing &\If P_i \notin S
    \end{cases}.
    \]
    In words, $F_S$ is the singleton structure satisfying the predicates in $S$ and no others.
    Querying $\hom(F_S, A)$ for each $S$ gives information about the number of elements in $A$ satisfying each Boolean combination of the predicates in $\tau$.
    This information determines $A$ up to isomorphism and therefore suffices to classify $A$ correctly.
\end{proof}

\section{Proofs of Theorems About Left Query Algorithms over $\B$}
\label{appendix:proofs-from-unb-over-B}

Here, we first prove the correctness of the topological characterization of the expressive power of adaptive left unbounded query algorithms over $\B$.
\topologicalCharacterization*

\proof\hspace{0cm}
\begin{itemize}
    \item[$\Rightarrow$:] Let $G$ be an adaptive left unbounded query algorithm over $\B$ that decides $\C$. We want to show that $\C$ is clopen. Let $A \in \C$. Now $\sigma(A, G)$ is finite so there exist $A_0, \ldots , A_n$ such that \begin{align*}
        \sigma(A, G) = (\hom_\B(A_0, A) , \ldots, \hom_\B(A_n, A)).
    \end{align*}
    Let \begin{align*}
        U_A \coloneqq \bigcap_{A_i \to A} \up{A_i} \cap \bigcap_{A_i \nrightarrow A} \mathsf{FIN}(\tau) \setminus \up {A_i}.
    \end{align*}
    Then indeed $U_A$ is a basis element such that $A \in U_A$. Now for $A' \in U_A$ we have \begin{align*}
        (\hom_\B(A_0, A') , \ldots, \hom_\B(A_{n'}, A')) = (\hom_\B(A_0, A) , \ldots, \hom_\B(A_{n'}, A))
    \end{align*} for every $n'\leq n$. The computation path of $G$ on $A'$ is thus the same as the computation path on $A$, i.e. $\sigma(A', G) = \sigma(A, G)$.
    It therefore follows that the algorithm accepts $A'$ since it accepts $A$, and thus $U_A \subseteq \C$. We have now shown that $\C = \bigcup_{A \in \C} U_A$, so it is open. If $\C$ is decided by an adaptive left unbounded query algorithm over $\B$ then so is $\C^c$ (by switching \textsf{YES} and \textsf{NO} in deciding step of the algorithm). Our argument therefore also shows that $\C^c$ is open, so $\C$ is clopen.

    \item[$\Leftarrow$:] Let $\C \subseteq \mathsf{FIN}(\tau)$ be clopen. Let $A_0, A_1, A_2, \ldots$ be an enumeration of all structures with signature $\tau$ up to isomorphism. For any $\sigma = (\sigma_0, \ldots, \sigma_n) \in \B^{<\omega}$ we have a corresponding basis element\begin{align*}
        U_\sigma = \bigcap_{{\substack{i \leq n \\ \sigma_i = 1}}} \up{A_i} \cap \bigcap_{{\substack{i \leq n \\ \sigma_i = 0}}} \mathsf{FIN}(\tau)\setminus \up{A_i}.
    \end{align*}
    Moreover, if $A$ is the target structure defining $\sigma$, so \begin{align*}
        \sigma = (\hom_\B(A_0, A) , \ldots, \hom_\B(A_n, A))
    \end{align*}
    then we have that $A \in U_\sigma$.
    We now define the following algorithm: \begin{align*}
        G( \sigma) = \begin{cases}
            \mathsf{YES} &\If U_\sigma \subseteq \C \\
            \mathsf{NO} &\If U_\sigma \subseteq \C^c \\
            A_{\abs{\sigma}} &\hspace{6 pt}\text{otherwise}
        \end{cases}.
    \end{align*}
    If $\sigma = (\hom_\B(A_0, A) , \ldots, \hom_\B(A_n, A))$ and $U_\sigma \subseteq \C$ then clearly $A \in \C$. Similarly, if $U_\sigma \subseteq \C^c$ then $A \notin \C$ so the algorithm always classifies correctly when it terminates. To see that it always terminates we note that for any $A$ we have either $A \in \C$ or $A \in \C^c$. Assume, without loss of generality, that $A \in \C$. Then, since $\C$ is open, there is a basis element \begin{align*}
        U = \bigcap_{i = 0}^k \up{B_i} \cap \bigcap_{i= 0}^m \mathsf{FIN}(\tau)\setminus\up{D_i}
    \end{align*} such that $A \in U \subseteq \C$. Now, there exists an $n$ such that $B_i, D_j \in \{A_0, \ldots, A_n\}$ for all $i = 0,\ldots, k$ and $j = 0,\ldots, m$ (here we do not distinguish between isomorphic structures). 
    So, then indeed $A \in U_\sigma \subseteq U \subseteq \C$ for $\sigma = (\hom_\B(A_0, A) , \ldots, \hom_\B(A_n, A))$. This shows that the algorithm terminates. The argument is exactly the same for $A \in \C^c$. We have thus shown that $\C$ is decided by an adaptive left unbounded query algorithm over $\B$.
\end{itemize}
\qed
\par

We now show how to use the topological characterization to prove the following.
\datalogUnboundedLqaThm*
\proof 
Since $\C$ is homomorphism-closed it follows directly that $\C = \bigcup_{A \in \C} \up{A}$. Since $\up{A}$ is open for each $A$, it follows that $\C$ is open.
We now turn to proving (ii).
The right-to-left direction is immediate since $\bigcup_{j=0}^n \up{A_{i,j}}$ are closed sets.
For the other direction, we assume $\C$ is closed and thus clopen. 
Since $\C$ is closed and the basis is clopen we have that $\C$ is an intersection of basis elements:
\begin{align}
\label{cequation}
    \C = \bigcap_{i \in I} \qty(\bigcup_{j = 0}^{n_i} \up{B_{i,j}} \cup \bigcup_{j = 0}^{k_i} \mathsf{FIN}(\tau)\setminus\up{D_{i,j}}).
\end{align}
In fact, we can assume that $B_{i,j}$ is connected for each $(i,j)$:\par 
Each $B_{i,j}$ can be written as $B_{i,j} = A_1\oplus \ldots \oplus A_n$ where $A_k$ is connected for each $k$. Then $\up{B_{i,j}} = \bigcap_{k=1}^n \up{A_k}$ since $\oplus$ is a least upper bound operation in the lattice. Using the distributive law it is clear that each term $\bigcup_{j = 0}^{n_i} \up{B_{i,j}} \cup \bigcup_{j = 0}^{k_i} \mathsf{FIN}(\tau)\setminus\up{D_{i,j}}$ can be written in conjunctive normal form as: \begin{align*}
    \bigcap_{k = 0}^n \qty(\bigcup_{j = 0}^{n_i'} \up{A_{i,k,j}} \cup \bigcup_{j = 0}^{k_i'} \mathsf{FIN}(\tau)\setminus\up{D'_{i,k,j}})
\end{align*}
where each $A_{i,k,j}$ is connected. Thus we can write 
\begin{align*}
    \C =  \bigcap_{i \in I} \bigcap_{k = 0}^n \qty(\bigcup_{j = 0}^{n_i'} \up{A_{i,k,j}} \cup \bigcup_{j = 0}^{k_i'} \mathsf{FIN}(\tau)\setminus\up{D'_{i,k,j}}).
\end{align*} The outer intersections can be taken together to form an equation of the form (\ref{cequation}) where each $B_{i,j}$ is connected.
\par 
We can now assume equation (\ref{cequation}) holds and all $B_{i,j}$'s are connected. If \begin{align*}
     \C = \bigcap_{i \in I} \bigcup_{j = 0}^{n_i} \up{B_{i,j}}
\end{align*} then we are done. Otherwise there exists a structure $H\in \C$ such that $H \notin \bigcup_{j = 0}^{n_i} \up{B_{i,j}}$ for some $i$. Since $\C$ is homomorphism-closed we also have that $H' \coloneqq H\oplus \bigoplus_{j=0}^{k_i} D_{i,j} \in \C$, but $H' \notin \bigcup_{j = 0}^{k_i} \mathsf{FIN}(\tau)\setminus\up{D_{i,j}}$. This means that we must have $H' \in \bigcup_{j = 0}^{n_i} \up{B_{i,j}}$, so for some $j$ we have $B_{i,j}\rightarrow H'$. Then, since $B_{i,j}$ is connected and $B_{i,j} \nrightarrow H$, we must have $B_{i,j} \to D_{i,j'}$ for some $j'$. But then $\up{D_{i,j'}} \subseteq \up{B_{i,j}}$ so \begin{align*}
    \up{B_{i,j}} \cup (\mathsf{FIN}(\tau)\setminus\up{D_{i,j'}}) = \mathsf{FIN}(\tau)
\end{align*} and thus \begin{align*}
    \bigcup_{j = 0}^{n_i} \up{B_{i,j}} \cup \bigcup_{j = 0}^{k_i} \mathsf{FIN}(\tau)\setminus\up{D_{i,j}} =  \mathsf{FIN}(\tau). 
\end{align*}
We can therefore define 
\[I' \coloneqq \{ i \in I : \bigcup_{j = 0}^{n_i} \up{B_{i,j}} \cup \bigcup_{j = 0}^{k_i} \mathsf{FIN}(\tau)\setminus\up{D_{i,j}} \neq  \mathsf{FIN}(\tau)\}\]
and obtain \begin{align*}
    \C =  \bigcap_{i \in I'} \bigcup_{j = 0}^{n_i} \up{B_{i,j}}
\end{align*}
as desired.
\qed
\par 
We now use the topological characterization to prove Theorem \ref{thm:main-CSP}.
\mainCSP*
\begin{proof}
    If $\C$ is decided by a non-adaptive left query algorithm over $\B$ then it is clearly also decided by an adaptive one.
    For the other direction, first assume $\C = [A]_{\leftrightarrow}$ is decided by an adaptive left unbounded query algorithm over $\B$. Then $[A]_{\leftrightarrow}$ is clopen, so it is a union of basis elements. Since it is a singleton modulo homomorphic equivalence, it must be a union of a single basis element. It is thus an intersection of $k$ subbasis elements, which correspond to homomorphism queries. Thus it is clear that $[A]_{\leftrightarrow}$ is decided by a non-adaptive left $k$-query algorithm.
    \par
    Now assume $\C = \CSP(A)$ is decided by an adaptive left unbounded query algorithm over $\B$. Note that such an algorithm can be turned into an adaptive left unbounded query algorithm over $\B$ for $[A]_{\leftrightarrow}$ by also querying for the existence of a homomorphism from $A$. If such a homomorphism exists and the target structure is in $\CSP(A)$, then the algorithm accepts. From the above it therefore follows that $[A]_{\leftrightarrow}$ is also decided by a non-adaptive left query algorithm over $\B$. 
    Proposition 4.4 in ten Cate et al. \cite{ten2024homomorphism} states that $[A]_{\leftrightarrow}$ is decided by a non-adaptive left query algorithm over $\B$ if and only if $\CSP(A)$ is.
    Thus we have that $\CSP(A)$ is also decided by a non-adaptive left query algorithm over $\B$. 
\end{proof}

For the proof of the following theorem, we use an example of a $\CSP$ that is not first-order definable but has a simple Datalog program defining its complement.
\monadicDatalog*
\begin{proof}
    Let $\tau= \{R, P, Q\}$ be a signature with $R$ a binary relation while $P$ and $Q$ are unary. Define the structure $A\coloneqq \{ \{0,1\}, R^{A}, P^A, Q^A\}$ with $R^{A} \coloneqq \{(0,0), (1,1)\}$, $P_A = \{0\}$ and $Q_A=\{1\}$. A structure $B$ is an element of $\CSP(A)$ if and only if there is no undirected path from an element in $P^B$ to an element in $Q^B$. It is a standard application of the Ehrenfeucht-Fra\"{i}sse method to show that this is not first-order definable.
    It then follows from Theorem \ref{cor:csp-FO} that the complement of $\CSP(A)$ is not decided by an adaptive left unbounded query algorithm over $\B$.
    However, this class is defined by the following monadic linear Datalog program:
    \[
        \begin{array}{lll}
            X(x) &\leftarrow & P(x) \\
            X(y) & \leftarrow & X(x), R(x,y)\\
            X(y) & \leftarrow & X(x), R(y,x)\\
            Ans() & \leftarrow & X(y), Q(y)
        \end{array}
    \]
\end{proof}

\section{Proofs of Theorems About Right Query Algorithms}
\label{appendix:rqa-thm}
The goal of this section is to prove the theorems about right query algorithms. We begin with the following:

\rightQueryTheorem*
This theorem follows almost immediately from the following result, which is a variation of a result by Wu and Kolaitis (Theorem 5.38 in \cite{wu2023study}). The proof is exactly the same as their proof so we leave it out.

\begin{lemma}
\label{Wu}
    For every $n>0$ and every signature $\tau$, there exists a structure $F(n,\tau)$ of signature $\tau$ such that for every two structures $H,H'$ with signature $\tau$ and of size $n$ we have \begin{align*}
        H\cong H' \text{ if and only if } \hom(H, F(n,\tau)) = \hom(H', F(n,\tau)).
    \end{align*}
\end{lemma}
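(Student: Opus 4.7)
The plan is to construct a single structure $F(n,\tau)$ whose right-homomorphism counts separate all non-isomorphic structures of size $n$. I would first reduce to distinguishing connected structures and then build $F$ so that the $\hom$-counts from the connected $\tau$-structures of size at most $n$ into $F$ form a ``multiplicatively independent'' system, after which unique factorization in $\N$ finishes the job.

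For the reduction, let $\mathcal{C}=\{C_1,\ldots,C_r\}$ be the finite list of isomorphism types of connected $\tau$-structures of size at most $n$, and for $H$ of size $n$ let $m_i(H)$ be the number of connected components of $H$ isomorphic to $C_i$. The tuple $(m_1(H),\ldots,m_r(H))$ determines $H$ up to isomorphism. Since a homomorphism from $H_1\oplus H_2$ is just a pair of homomorphisms from $H_1$ and $H_2$,
\[\hom(H,F)=\prod_{i=1}^{r}\hom(C_i,F)^{m_i(H)}\qquad\text{for every }F.\]
So it is enough to find a single $F$ such that the values $v_i:=\hom(C_i,F)$ make the map $(m_i)\mapsto\prod_i v_i^{m_i}$ injective on the finite set of multiplicity vectors arising from size-$n$ structures.

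The cleanest sufficient condition is that $v_1,\ldots,v_r$ are $r$ distinct primes, because then unique factorization in $\N$ gives injectivity on all of $\N^r$. To exhibit such an $F$, I would combine two facts. First, the $r\times r$ matrix $M=(\hom(C_i,C_j))_{i,j}$ is invertible over $\mathbb{Q}$: after ordering $\mathcal{C}$ by increasing size, the analogous matrix of injective-homomorphism counts is upper-triangular with nonzero diagonal, and $\hom$ is related to $\mathrm{inj}$ by a unipotent change of basis indexed by the partition lattices of the $C_i$. Second, for connected $C_i$ the map $F\mapsto\hom(C_i,F)$ is additive under disjoint union of the target, so the set of realizable vectors $\{(\hom(C_i,F))_i:F\in\mathsf{FIN}(\tau)\}$ contains the $\N$-span of the columns of $M$. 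Invertibility of $M$ makes this span a full-rank sub-semigroup of $\mathbb{Z}^r$, which, after shifting by a large multiple of a positive ``base'' vector (e.g.\ the column-sum of $M$), covers all sufficiently positive vectors in a fixed full-rank sub-lattice of $\mathbb{Z}^r$. A Dirichlet-type argument for simultaneous primes in arithmetic progressions then picks out a vector in that sub-lattice with $r$ pairwise distinct prime coordinates, and the corresponding $F=\bigoplus_j a_j C_j$ is the desired structure.

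The main obstacle is precisely this realization step: only \emph{non-negative} integer combinations of the columns of $M$ are available through $\oplus$, so landing on an exact ``distinct primes'' pattern requires combining Lov\'asz's invertibility statement with a simultaneous-primes density argument. A cleaner alternative that sidesteps number theory would aim instead at $v_i=B^{e_i}$ for a fixed base $B$ and fast-growing integer exponents, say $e_i=(n+1)^{i-1}$; then $(m_i)\mapsto\prod_i v_i^{m_i}=B^{\sum_i e_i m_i}$ is injective on $\{(m_i):m_i\leq n\}$ by the $(n+1)$-ary expansion of the exponent, and such prescribed powers can be assembled through iterated direct products using $\hom(C,A\otimes B')=\hom(C,A)\cdot\hom(C,B')$, converting the additive realization problem into a purely multiplicative one.
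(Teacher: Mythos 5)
The first thing to note is that the paper itself does not prove this lemma: it states that the proof is identical to that of Theorem~5.38 in Wu's thesis \cite{wu2023study} and omits it, so your proposal can only be judged on its own terms. Your opening reduction is correct and is the standard skeleton for results of this kind: decompose $H$ into connected components $C_i$ with multiplicities $m_i$, use $\hom(H_1\oplus H_2,F)=\hom(H_1,F)\cdot\hom(H_2,F)$, and reduce everything to exhibiting a single $F$ for which $(m_i)\mapsto\prod_i\hom(C_i,F)^{m_i}$ is injective on the finitely many multiplicity vectors with $\sum_i m_i|C_i|=n$. The invertibility of $M=(\hom(C_i,C_j))_{i,j}$ via the Lov\'asz $\hom$/$\mathrm{inj}$ triangularization is also fine (for same-size connected structures one breaks ties by number of facts, so that the bijective-homomorphism relation is antisymmetric).

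The genuine gap is exactly where you flag it: neither of your two ways of realizing a suitable value vector $(v_i)_i=(\hom(C_i,F))_i$ goes through as written. For the distinct-primes route, the vectors realizable by disjoint unions form the $\N$-cone generated by the columns of $M$ (more generally, by all vectors $(\hom(C_i,D))_i$), \emph{not} ``all sufficiently positive vectors in a full-rank sublattice'': with columns $(1,0)$ and $(2,1)$ the vector $(1,100)$ is never an $\N$-combination. And even granting a full-rank coset inside the cone, producing a point all of whose $r$ coordinates are distinct primes is a simultaneous-prime-values statement about $r$ linear forms in $r$ variables; Dirichlet's theorem does not cover this, and in this generality it is Dickson-conjecture territory, so it cannot simply be ``picked out.'' For the alternative route, you never show that $(B^{(n+1)^{i-1}})_i$ is realizable, and with your indexing it is provably not: if $C_1$ is the one-element structure with no facts then $\hom(C_1,F)=|F|$, while $\hom(C_i,F)\le|F|^{|C_i|}\le|F|^{n}$ for every $i$, so the prescribed values cannot span $r$ wildly different exponents over the base $|F|$; and iterated $\otimes$ only yields coordinatewise products of already-realizable vectors, which does not obviously manufacture prescribed powers of a common base. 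A weaker target would suffice and avoids number theory entirely: it is enough that all $v_i\ge 2$ and that, under some ordering, $v_{i+1}>\bigl(\prod_{j\le i}v_j\bigr)^{n}$, since then comparing two equal products at the largest index where the multiplicity vectors differ gives a size contradiction. But even this still requires a realization argument, and that realization is precisely the content of the cited result of Wu and Kolaitis, which your proposal does not supply.
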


\par 
\begin{proof}[Proof of Theorem \ref{right-query-theorem}]
Let $\tau$ be a signature. Let $\mathfrak{2}_\tau$ be the complete structure with domain of size 2 and signature $\tau$. Concretely, $\mathfrak{2}_\tau \coloneqq (\{0,1\}, (R_\mathfrak{2})_{R \in \tau})$ where \begin{align*}
    R_\mathfrak{2} = \{0,1\}^{m_R}
\end{align*} where $m_R$ is the arity of $R$. \par 
Then, for any structure $H$ with signature $\tau$ we have \begin{align*}
    \hom(H, \mathfrak{2}_\tau) = 2^\abs{H}.
\end{align*}
Our algorithm thus runs as follows:
Given input $H$, query $\hom(H, \mathfrak{2}_\tau)$ to obtain $\abs{H}$. Then query $\hom(H, F(\abs{H}, \tau))$. 
\par 
By Lemma \ref{Wu} this determines the structure $H$ up to isomorphism. 
Thus every input $A$ has a unique computation path $\sigma(A)$ up to this point. We can thus set 
\[G(\sigma(A)) \coloneqq \begin{cases}
    \textsf{YES} &\If A \in \C \\
    \textsf{NO} & \If A \notin \C
\end{cases}\]
where $\C$ is the class we are trying to decide.
\end{proof}

\medskip\par 
We now turn towards proving Theorem \ref{unbounded-helps-rqa}.
\unboundedHelpsRqa*

Before we prove the theorem we state a few lemmas.

\begin{lemma}[{Proposition 1.13 in \cite{hell2004graphs}}]
    \label{path-lemma} 
    A digraph $H$ has no oriented cycle with non-zero net length if and only if $H \to P_{\abs{H}-1}$.
\end{lemma}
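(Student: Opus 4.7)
The lemma is a biconditional, and the easy direction is to deduce from $H \to \mathsf{P}_{\abs{H}-1}$ that $H$ has no oriented cycle of non-zero net length. For this, I would use the fact that the path $\mathsf{P}_{\abs{H}-1}$ carries the canonical ``level function'' $i\mapsto i$, which raises the level by exactly one along every edge. Composing with any homomorphism $\varphi : H \to \mathsf{P}_{\abs{H}-1}$ assigns a level to each vertex of $H$; since oriented walks and their net lengths are preserved by homomorphisms, any closed oriented walk in $H$ must return to its starting level, so its net length is zero. In particular, every oriented cycle of $H$ has net length zero.

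For the harder converse, the plan is to build a level function directly. Assuming $H$ has no oriented cycle of non-zero net length, I would first upgrade this to: every closed oriented walk in $H$ has net length zero. This follows by the same length-induction that appears in the appendix's proof of Proposition \ref{girth-prop}, since any closed walk that is not itself an oriented cycle can be split at a repeated vertex into a shorter closed walk and an oriented cycle, with net lengths adding. On each connected component $K$ of $H$, I fix a base vertex $v_0\in K$ and define $\varphi_K(u)$ to be the net length of any oriented walk from $v_0$ to $u$. Well-definedness follows because if $W_1,W_2$ are two such walks, then $W_1$ composed with the inverse of $W_2$ is a closed oriented walk based at $v_0$ and hence has net length zero, forcing $l(W_1)=l(W_2)$. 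By construction, $\varphi_K(v)=\varphi_K(u)+1$ for every edge $(u,v)\in K$.

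To land inside $\mathsf{P}_{\abs{H}-1}$, I then bound the range of each $\varphi_K$. Any two vertices of $K$ are joined by a vertex-simple oriented walk (for instance, one traversing an undirected spanning tree of $K$), of length at most $\abs{K}-1$; since the net length of a walk is bounded in absolute value by its length, this forces $\max\varphi_K-\min\varphi_K\le \abs{K}-1$. After subtracting $\min\varphi_K$, each $\varphi_K$ maps into $\{0,1,\dots,\abs{K}-1\}$. I then assemble the full map by placing the distinct components in disjoint blocks of $\{0,1,\dots,\abs{H}-1\}$: component $K_i$ is offset by $\abs{K_1}+\cdots+\abs{K_{i-1}}$, producing a map whose total range lies in $\{0,\dots,\abs{H}-1\}$. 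Because distinct components share no edges, the additive shift does not spoil the level-function property on each component, and the combined map is a homomorphism $H\to \mathsf{P}_{\abs{H}-1}$.

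The main subtlety I anticipate is the range bound $\max\varphi_K - \min\varphi_K\le \abs{K}-1$, which relies on the existence of a \emph{simple} oriented walk between any two vertices of a component; this in turn uses that connectedness in the incidence multigraph coincides with connectedness of the underlying undirected graph for digraphs, so one can follow an undirected spanning tree. Once that bound is in hand, the upgrade from ``no cycle of non-zero net length'' to ``no closed walk of non-zero net length'' and the disjoint-block assembly across components are both routine.
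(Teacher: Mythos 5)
The paper states this lemma as a citation (Proposition 1.13 in Hell and Ne\v{s}et\v{r}il \cite{hell2004graphs}) and gives no proof of its own, so there is no in-paper argument to compare against; your self-contained argument is correct. The forward direction is exactly the preservation of oriented-walk net lengths under homomorphism together with the fact that the identity level function on $\mathsf{P}_{\abs{H}-1}$ strictly tracks net length, so every closed walk in $H$ must have net length zero. For the converse, your four ingredients all check out: the upgrade from ``no oriented cycle of non-zero net length'' to ``no closed oriented walk of non-zero net length''; well-definedness of the per-component level function $\varphi_K$ via this upgrade; the spread bound $\max\varphi_K - \min\varphi_K \le \abs{K}-1$ obtained from a vertex-simple undirected walk in the component (using that connectedness of the incidence multigraph of a digraph is the same as connectedness of the underlying undirected graph); and the disjoint-offset assembly of the shifted component maps into $\{0,\dots,\abs{H}-1\}$. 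One minor streamlining worth noting: your closed-walk upgrade is equivalent to observing that the hypothesis gives $\gamma(H)=0$, whence $H \to \mathsf{C}_n$ for all $n$ by Proposition~\ref{girth-prop}, and Lemma~\ref{girth-lemma} then says $n$ divides the net length of every closed oriented walk for every $n$, which forces all such net lengths to vanish without re-running the induction from the appendix.
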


\begin{lemma}
\label{rqa-unb-lemma1}
Let $A$ be a finite directed graph. We have that $A$ has no oriented cycle with non-zero net length if and only if there exists $n>0$ such that $\hom(A, \mathsf{P}_n) >0$.
\end{lemma}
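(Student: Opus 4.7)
The plan is to prove both directions of the biconditional, invoking Lemma \ref{path-lemma} for the forward direction and the preservation of net lengths under homomorphisms for the backward direction.

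For the forward direction ($\Rightarrow$), assume $A$ has no oriented cycle with non-zero net length. By Lemma \ref{path-lemma}, $A \to \mathsf{P}_{|A|-1}$. When $|A| \geq 2$, this immediately gives $\hom(A, \mathsf{P}_n) > 0$ for $n = |A|-1 > 0$. I would handle the corner case $|A| = 1$ separately: a single-vertex digraph with no oriented cycle of non-zero net length must have no self-loop (a self-loop would itself be a closed oriented walk of net length $1$ on a single vertex, hence an oriented cycle of net length $1$), so $A$ is isomorphic to $\mathsf{P}_0$, which admits a homomorphism into $\mathsf{P}_1$ by mapping its single vertex to any vertex of $\mathsf{P}_1$. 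Thus $n = 1$ witnesses the claim in this edge case.

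For the backward direction ($\Leftarrow$), suppose that $\varphi: A \to \mathsf{P}_n$ is a homomorphism for some $n > 0$, and assume for contradiction that $A$ contains an oriented cycle $W$ of net length $\ell \neq 0$. As noted just before Lemma \ref{girth-lemma}, $\varphi$ sends $W$ to a closed oriented walk in $\mathsf{P}_n$ with the same net length $\ell$. The key observation is that every closed oriented walk in $\mathsf{P}_n$ has net length zero: identifying the vertices of $\mathsf{P}_n$ with $\{0, 1, \ldots, n\}$ and recalling that the only edges are $(i, i+1)$, each $+$ step increases the current vertex by $1$ and each $-$ step decreases it by $1$, so returning to the starting vertex forces the numbers of $+$ and $-$ steps to coincide, giving net length $0$. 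This contradicts $\ell \neq 0$, so no such oriented cycle exists in $A$.

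There is no substantive obstacle here: the lemma is essentially a repackaging of Lemma \ref{path-lemma} together with the elementary observation that closed walks in a directed path have net length zero. The only care needed is the minor handling of the $|A|=1$ boundary to respect the strict requirement $n > 0$.
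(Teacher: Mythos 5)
Your proof is correct and takes essentially the same route as the paper: Lemma~\ref{path-lemma} for the forward direction and preservation of oriented-walk net lengths under homomorphisms for the backward direction. Your direct observation that every closed oriented walk in $\mathsf{P}_n$ has net length zero, and your explicit handling of the $|A|=1$ edge case (which the paper glosses over, since Lemma~\ref{path-lemma} only yields $A\to\mathsf{P}_0$ there while the statement requires $n>0$), are minor tightenings of the paper's terser argument.
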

\proof \hspace{0cm}
\begin{itemize}
    \item[$\Rightarrow$:] Since the net length of oriented walks is preserved under homomorphisms and the net length of the longest oriented walk in $\mathsf{P}_n$ is $n$, we see that if $A \to \mathsf{P}_n$ then $A$ has no oriented walk with net length $>n$. Then indeed $A$ has no oriented cycle with positive net length (and thus no oriented cycle with non-zero net length). 
    \item[$\Leftarrow$:] If $A$ has no oriented cycles with positive net length then it follows from Lemma \ref{path-lemma} that $A \to P_{\abs{A}-1}$.
\end{itemize}
\qed

\begin{lemma} 
\label{rqa-unb-lemma2}
Let $A$ be a finite directed graph. We have that $A$ has an oriented cycle with non-zero net length if and only if there exists $n>0$ such that $\hom(A, \mathsf{C}_n)=0$.
\end{lemma}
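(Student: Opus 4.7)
The plan is to derive both directions cleanly from Proposition \ref{girth-prop}, which characterizes $A \to \mathsf{C}_n$ in terms of divisibility by $\gamma(A)$.

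For the forward direction, suppose $A$ has an oriented cycle $W$ with non-zero net length $\ell$. If $\ell < 0$, then the inverse cycle $-W$ has net length $-\ell > 0$, so in either case $A$ contains at least one cycle of positive net length. Hence $\gamma(A)$ is the $\gcd$ of a non-empty set of positive integers and is itself a positive integer. Choosing any $n > \gamma(A)$ (for example $n = \gamma(A) + 1$) makes $n \nmid \gamma(A)$, since no divisor of a positive integer exceeds it. Proposition \ref{girth-prop} then gives $A \nrightarrow \mathsf{C}_n$, i.e.\ $\hom(A, \mathsf{C}_n) = 0$.

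For the backward direction I would argue by contrapositive. Suppose $A$ has no oriented cycle with non-zero net length. Then the set of net lengths of cycles of positive net length is empty, so by the paper's convention $\gamma(A) = \gcd(\varnothing) = 0$. Since every positive integer divides $0$, Proposition \ref{girth-prop} gives $A \to \mathsf{C}_n$ for every $n > 0$, so $\hom(A, \mathsf{C}_n) > 0$ for all $n > 0$, contradicting the hypothesis.

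There is no real obstacle here: once Proposition \ref{girth-prop} is in hand, the proof is essentially bookkeeping about the convention $\gcd(\varnothing) = 0$ and the observation that reversing a cycle negates its net length. The only minor subtlety worth spelling out is the case distinction $\ell > 0$ versus $\ell < 0$ in the forward direction, to justify that having \emph{any} cycle of non-zero net length produces one of \emph{positive} net length, so that $\gamma(A)$ is indeed a positive integer.
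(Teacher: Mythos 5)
Your proof is correct and follows essentially the same route as the paper: both directions reduce to Proposition \ref{girth-prop}, using the witness $n = \gamma(A)+1$ in the forward direction and the convention $\gcd(\varnothing) = 0$ in the other. The only cosmetic difference is that you prove the backward direction by contrapositive where the paper argues directly (from $n \nmid \gamma(A)$ to $\gamma(A) \neq 0$), and you spell out the reversal of a negative-net-length cycle, a small step the paper leaves implicit.
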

\proof\hspace{0cm}
\begin{itemize}
    \item[$\Rightarrow$:] If $\hom(A, \mathsf{C}_n) = 0$, then by Lemma \ref{girth-prop} we have $n \nmid \gamma(A)$. In particular $\gamma(A) \neq 0$, so $A$ has an oriented cycle with non-zero net length. \par 
    \item[$\Leftarrow$:] If $A$ has a oriented cycle of non-zero net length then $\gamma(A) \neq 0$. But then $\gamma(A) + 1 \nmid \gamma(A)$ so by Lemma \ref{girth-prop} we have $\hom(A, \mathsf{C}_{\gamma(A)+1}) = 0$.
\end{itemize}
\qed

Surprisingly, essentially the same unbounded algorithm as we used to prove Theorem \ref{leftUnboundedMoreExpressive} works here.

\begin{proof}[Proof of Theorem \ref{unbounded-helps-rqa}]
We first show that the class $\C$ of all digraphs that have an oriented cycle with non-zero net length can be decided by an adaptive right unbounded query algorithm over $\B$.
The algorithm runs as follows:
Let $A$ be the given input.
For $n = 1,2, \ldots$ query $\hom_\B(A, \mathsf{P}_n)$ and $\hom_\B(A, \mathsf{C}_n)$. If at any point $\hom_\B(A, \mathsf{P}_n) = 1$ then halt and output \textsf{NO}. If at any point $\hom_\B(A, \mathsf{C}_n) = 0$ then halt and output \textsf{YES}. \par 
It follows from Lemmas \ref{rqa-unb-lemma1} and \ref{rqa-unb-lemma2} that on every input, this algorithm halts and produces the correct outcome.

\smallskip \par 
We now show that the class $\C$ cannot be decided by a non-adaptive right $k$-query algorithm over $\B$ and thus not by an adaptive right $k$-query algorithm over $\B$. \par 
Let $(\mathcal F, X)$ be a non-adaptive right $k$-query algorithm. Let 
\[m \coloneqq \max (\{ \abs{F} : F \in \mathcal F\}),\]
so $m\geq \abs{F}$ for every $F \in \mathcal F$. Then, since the lengths of directed walks are preserved under homomorphism, it holds that for any $F \in \mathcal F$ we have $P_{m} \to F$ if and only if $F$ has a directed cycle. 
Also, if $F$ has a directed cycle, then that cycle has length $\leq m$ so $\mathsf{C}_{m!} \to F$. It is also clear that if $\mathsf{C}_{m!} \to F$ then $F$ must have a directed cycle. Thus $\mathsf{C}_{m!} \to F$ if and only if $F$ has a directed cycle. We have thus shown that for $F \in \mathcal{F}$ we have\begin{align*}
    P_{m} \to F \quad \text{if and only if}\quad \mathsf{C}_{m!} \to F.
\end{align*} So the class $\C'$ that the algorithm decides either contains both $P_{m}$ and $\mathsf{C}_{m!}$ or neither of them. In both cases $\C' \neq \C$ so the algorithm does not decide the class $\C$.
\end{proof}

The class from the above result is Datalog definable:
\begin{theorem}
    \label{thm:non-zero-net-length-is-Datalog}
    The class of all digraphs that have an oriented cycle with non-zero net length is Datalog definable.
\end{theorem}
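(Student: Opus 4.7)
The plan is to exhibit a Datalog program whose goal predicate fires precisely on those digraphs containing an oriented cycle of non-zero net length. The key observation is that this is equivalent to containing a closed oriented walk of positive net length (WLOG by reversing; the equivalence of ``closed walk of non-zero net length'' with ``cycle of non-zero net length'' is the content of the decomposition argument used in the proof of Proposition~\ref{girth-prop}), and that any such closed walk can be decomposed as a sequence of forward edges interspersed with balanced (net-length-$0$) oriented walks.

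The program uses two auxiliary binary predicates: $Z(x,y)$, intended to hold whenever there is an oriented walk from $x$ to $y$ with net length $0$; and $S(x,y)$, intended to hold whenever there is an oriented walk from $x$ to $y$ of the form ``forward edge, (balanced walk, forward edge)$^*$.'' I would take the rules for $Z$ to consist of a base case $Z(x,x) \leftarrow R(x,z)$ (and symmetrically for incoming edges); four rules that either append a length-$2$ balanced detour at the end of an existing $Z$-walk (as in $Z(x,y)\leftarrow Z(x,u), R(u,w), R(y,w)$), or prepend a forward/backward edge together with a shorter $Z$-walk followed by a matching backward/forward edge (as in $Z(x,y)\leftarrow R(x,u), Z(u,v), R(y,v)$); and the transitivity rule $Z(x,y) \leftarrow Z(x,u), Z(u,y)$. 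For $S$, I would take the base $S(x,y) \leftarrow R(x,y)$ and the extension $S(x,y) \leftarrow S(x,u), Z(u,v), R(v,y)$. Finally, the goal rule is $\mathsf{Ans}() \leftarrow S(x,u), Z(u,x)$.

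Soundness is immediate: every rule constructs an actual walk with the claimed net-length property, so if $\mathsf{Ans}()$ is derived then a closed walk of net length $\ge 1$ exists. For completeness, I would first show by induction on the length of a balanced walk $W$ from $x$ to $y$ that $Z(x,y)$ is derivable, splitting on whether the height function of $W$ returns to $0$ at an intermediate index (in which case transitivity applies to two shorter balanced sub-walks) or stays strictly positive or strictly negative throughout the interior (in which case one of the ``prefix'' rules applies to the shorter balanced sub-walk obtained by removing the first and last steps). Then, given a cycle $C$ of positive net length $n$, I would let $s_0 < s_1 < \cdots < s_{n-1}$ be the indices at which the height function of $C$ attains its last occurrence of the values $0, 1, \ldots, n-1$; the step at each $s_i$ is necessarily forward, and the sub-walks of $C$ between consecutive $s_i$'s (as well as before $s_0$ and after $s_{n-1}$) have net length $0$, so the derivations of $S(a_{s_0}, a_{s_{n-1}+1})$ and $Z(a_{s_{n-1}+1}, a_{s_0})$ go through and trigger $\mathsf{Ans}()$. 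The main obstacle is establishing completeness of $Z$ on balanced walks whose height stays strictly positive (or negative) throughout the interior, since for these the ``append'' rules alone do not suffice and the ``prefix'' rules with a shorter inner $Z$-walk are essential.
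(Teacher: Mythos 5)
Your proposal is correct and takes essentially the same approach as the paper's proof: both build a Datalog program around a binary predicate for net-length-zero oriented walks (your $Z$, the paper's $X$), closed under transitivity and under ``wrapping'' with a matched pair of forward/backward edges, together with a second predicate certifying positive net length (your $S$, the paper's $Y$), and a goal rule that fires on a closed walk of positive net length. The only substantive difference is in how the second predicate is defined and justified: the paper's $Y$ is closed under concatenation of two positive walks and under attaching balanced walks on either side (so it captures \emph{all} positive-net-length walks directly), whereas your $S$ grows only by ``balanced walk then forward edge,'' which you reconcile with the goal via the last-ascent decomposition $s_0<\cdots<s_{n-1}$ of the height function---a clean alternative to the paper's case analysis on the first and last steps of a balanced walk for establishing completeness.
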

\begin{proof}
    A Datalog program defining this class is the following:
    \[\begin{array}{ccc}
        X(a,b) & \leftarrow & a=b \\
        X(a,b) & \leftarrow & X(a', b'), R(a', a), R(b',b) \\
        X(a,b) & \leftarrow & X(a', b'), R(a, a'), R(b, b')\\
        X(a,b) & \leftarrow & X(a,c), X(c,b)\\
        \\
        Y(a,b) & \leftarrow & R(a, b)\\
        Y(a,b) & \leftarrow & Y(a, c), X(c, b) \\
        Y(a,b) & \leftarrow & X(a, c), Y(c, b)\\
        Y(a,b) & \leftarrow & Y(a, c), Y(c, b)\\
        \\
        Ans() & \leftarrow & Y(a, a)    \\
    \end{array}\]
    To see that this program does indeed define the class of digraphs that have an oriented cycle with non-zero net length, we begin by showing that the IDB $X$ defines the relation \begin{align}
    \label{eq:X-fact}
        X(a,b) \iff \text{there is an oriented walk from $a$ to $b$ with net length 0}.
    \end{align}
    The left-to-right direction is clear. To see the other direction, we use induction on the length of walks of net length 0.
    Let $P$ be any walk of length $2n$ but net length 0. 
    We have a few cases: \begin{itemize}
        \item If $P$ is of the form $(a_0, -, a_1, r_1, \ldots, a_{2n-1}, +, a_{2n})$ then $(a_1, r_1, \ldots, a_{2n-1})$ is an oriented walk of length $2(n-1)$ but net length 0, so by the induction hypothesis $X(a_1, a_{2n-1})$. It then follows from the second rule in the program that $X(a_0, a_{2n})$. 
        \item The case where $P$ is of the form $(a_0, +, a_1, r_1, \ldots, a_{2n-1}, -, a_{2n})$ is similar.
        \item Now assume $P = (a_0, +, a_1, r_1, \ldots, a_{2n-1}, +, a_{2n})$. Since it has net length 0 there must exist subwalks $Q$, $R$ and $Z$ of net length 0 such that \begin{align*}
            P = (a_0, +, a_1) \circ Q \circ (a_{i}, -, a_{i+1}) \circ R \circ (a_j, - , a_{j+1}) \circ Z \circ (a_{2n-1}, + , a_{2n}).
        \end{align*}
        (Namely, we pick $i$ as the smallest number such that $(a_0, +, \ldots, a_{i+1})$ is an oriented walk of net length 0 and we pick $j$ as the smallest number such that $(a_0, +, \ldots, a_{j+1})$ is an oriented walk of net length $-1$). By the induction hypothesis, $X(a_1, a_i)$, $X(a_{i+1}, a_j)$ and $X(a_{j+1} , a_{2n-1})$ hold. Using the rules, it is clear that we also have $X(a_0, a_{2n})$.
        \item The case where $P$ is of the form $(a_0, -, a_1, r_1, \ldots, a_{2n-1}, -, a_{2n})$ is similar.
    \end{itemize}
    We have now proved the bi-implication (\ref{eq:X-fact}). It is then easy to see that $Y(a,b)$ holds if and only if there is an oriented walk of positive net length from $a$ to $b$. Now we see that the program accepts if and only if there is a closed oriented walk of positive net length, which holds if and only if there is an oriented cycle of non-zero net length (this follows from our proof of Proposition \ref{girth-prop}).
\end{proof}

We now prove Theorem \ref{thm:rqa-unb-does-not-help}. We do it by proving a sequence of equivalences.
\begin{theorem}
    The following are equivalent:
    \begin{enumerate}[(i)]
        \item $[A]_{\leftrightarrow}$ is decided by an adaptive right unbounded query algorithm over $\B$.
        \item $\up{A}$ is decided by an adaptive right unbounded query algorithm over $\B$.
        \item $[A]_{\leftrightarrow}$ is decided by a non-adaptive right query $k$-algorithm over $\B$ for some $k$.
        \item $\up{A}$ is decided by a non-adaptive right $k$-query algorithm over $\B$ for some $k$.
        \item $A$ is homomorphically equivalent to an acyclic structure.
    \end{enumerate}
\end{theorem}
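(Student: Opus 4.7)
The plan is to chain together Theorem \ref{thm:rqa-unb-does-not-help} with the characterization from ten Cate et al.\ \cite{ten2024homomorphism} that is cited in the paragraph preceding this theorem. Applying Theorem \ref{thm:rqa-unb-does-not-help} with $\C = [A]_\leftrightarrow$ yields (i) $\Leftrightarrow$ (iii), and applying it with $\C = \up{A}$ yields (ii) $\Leftrightarrow$ (iv). The cited ten Cate et al.\ result then provides both (iii) $\Leftrightarrow$ (v) and (iv) $\Leftrightarrow$ (v): it states that the conditions of Theorem \ref{thm:rqa-unb-does-not-help}, applied to either $[A]_\leftrightarrow$ or $\up A$, hold exactly when $A$ is homomorphically equivalent to a Berge-acyclic structure. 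Taken together, these three ingredients already give the five-fold equivalence.

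If desired, one can also verify directly the bridging implication (iv) $\Rightarrow$ (iii) without passing through (v): given a non-adaptive right $k$-query algorithm $((F_1,\ldots,F_k), X)$ that decides $\up A$, append the single extra query $\hom_\B(B, A)$ and require acceptance if and only if the original tuple lies in $X$ and the extra query returns $1$; this decides $[A]_\leftrightarrow = \up A \cap \down A$. The reverse bridge (iii) $\Rightarrow$ (iv) would then go through (v).

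I do not expect any serious obstacle, since the substantive content is packaged in Theorem \ref{thm:rqa-unb-does-not-help} and in the ten Cate et al.\ characterization. The main point of care is to ensure that the ten Cate et al.\ result indeed covers the class $[A]_\leftrightarrow$ (as the preceding discussion asserts) and not only $\up A$. If only the version for $\up A$ were directly available, my fallback plan for (iii) $\Rightarrow$ (v) would be: since $[A]_\leftrightarrow$ is closed under homomorphic equivalence, a non-adaptive right $k$-query algorithm for it gives a finite representation $[A]_\leftrightarrow = \down{C_1} \cap \cdots \cap \down{C_k} \cap \overline{\down{D_1}} \cap \cdots \cap \overline{\down{D_m}}$; using that $\down{C_1} \cap \cdots \cap \down{C_k}$ equals $\down{P}$ for $P$ the direct product of the $C_i$, one deduces $A \leftrightarrow P$; and then the structural properties of Berge-acyclic CQs from ten Cate et al.\ should allow one to replace $P$ by an acyclic hom-equivalent representative, yielding (v).
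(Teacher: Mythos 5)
Your plan is circular. The theorem you are asked to prove is precisely the lemma by which the paper establishes Theorem \ref{thm:rqa-unb-does-not-help}: that theorem is stated without proof in Section 5.4, and the appendix then reads ``We now prove Theorem \ref{thm:rqa-unb-does-not-help}. We do it by proving a sequence of equivalences,'' followed by the five-fold equivalence. So the equivalences (i)$\Leftrightarrow$(iii) and (ii)$\Leftrightarrow$(iv) are exactly the content still to be proved here; you cannot invoke Theorem \ref{thm:rqa-unb-does-not-help} to supply them.

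Once you set the circularity aside, the real work that remains is the implication from ``adaptive unbounded'' back to ``non-adaptive bounded,'' and your proposal gives no direct argument for it. Your two auxiliary remarks (the bridge (iv)$\Rightarrow$(iii) by adding the query $\hom_\B(\cdot,A)$, and the fallback for (iii)$\Rightarrow$(v) via a product structure) both run from the bounded side, which is not where the cycle needs to be closed: having (iii)$\Leftrightarrow$(iv)$\Leftrightarrow$(v) from ten Cate et al.\ and the trivial (iii)$\Rightarrow$(i), (iv)$\Rightarrow$(ii), you still need to return from (i) or (ii) to (iii)--(v). The paper does this by first reducing (ii) to (i) (append a query $\hom_\B(\cdot,A)$ so as to decide $[A]_\leftrightarrow$ from a decider for $\up A$), and then proving (i)$\Rightarrow$(iii) using the topological characterization (Theorem \ref{topological right characterization}): $[A]_\leftrightarrow$ is clopen, hence a union of basis elements; since it is a single $\leftrightarrow$-class, it must be a single basis element; a single basis element is a finite intersection of subbasis elements, i.e.\ a finite conjunction of right hom queries and their negations, which is exactly a non-adaptive right $k$-query algorithm. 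That step---compactness of a $\leftrightarrow$-class forcing a single basis element---is the substantive idea your proposal is missing.
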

\begin{proof}
    The equivalence of (iii), (iv), and (v) is Theorem 6.5 in ten Cate et al. \cite{ten2024homomorphism}. 
    It is easy to see that (iii) implies (i) and that (iv) implies (ii).
    Now, if $[A]_{\leftrightarrow}$ is clopen then it is a union of basis elements. Since it is a singleton modulo homomorphic equivalence, it must be a union of a single basis element. It is thus an intersection of $k$ subbasis elements, which correspond to right homomorphism queries. Thus it is clear that $[A]_{\leftrightarrow}$ is decided by a non-adaptive right $k$-query algorithm. This shows that (i) implies (iii).
    Now assume $\up{A}$ is decided by an adaptive right unbounded query algorithm over $\B$. Note that such an algorithm can be turned into an adaptive right unbounded query algorithm over $\B$ for $[A]_{\leftrightarrow}$ by also querying for the existence of a homomorphism to $A$. If such a homomorphism exists and the target structure is in $\up{A}$, then the algorithm accepts. This shows that (ii) implies (i).
\par 
    We have thus shown that all of the conditions are equivalent.
\end{proof}

\end{document}